\documentclass[11pt,a4paper]{article}
\pdfoutput=1

\usepackage{amssymb}
\usepackage{amsmath}
\usepackage{amsfonts}
\usepackage{bbm}
\usepackage{amsthm}
\usepackage{mathrsfs}
\usepackage{hyperref}
\usepackage{color}
\usepackage[margin=2.41cm]{geometry}
\usepackage[all,cmtip]{xy}
\usepackage[utf8]{inputenc}
\usepackage{graphicx}
\usepackage{varwidth}
\usepackage{comment}
\usepackage[shortlabels]{enumitem}
\usepackage{todonotes}

\usepackage{mathtools}

\usepackage{upgreek}
\usepackage{rotating}

\usepackage{tikz}
\usetikzlibrary{shapes.geometric}

\usepackage{tikz-cd}
\usetikzlibrary{matrix,arrows,calc,decorations.pathmorphing,fit,shapes.geometric,decorations.pathreplacing,positioning}

% ---
% for diagrams written via quiver:
\usepackage{amssymb}
% A TikZ style for curved arrows of a fixed height, due to AndréC.
\tikzset{curve/.style={settings={#1},to path={(\tikztostart)
    .. controls ($(\tikztostart)!\pv{pos}!(\tikztotarget)!\pv{height}!270:(\tikztotarget)$)
    and ($(\tikztostart)!1-\pv{pos}!(\tikztotarget)!\pv{height}!270:(\tikztotarget)$)
    .. (\tikztotarget)\tikztonodes}},
    settings/.code={\tikzset{quiver/.cd,#1}
        \def\pv##1{\pgfkeysvalueof{/tikz/quiver/##1}}},
    quiver/.cd,pos/.initial=0.35,height/.initial=0}
% ---

%%%%%%

\definecolor{darkred}{rgb}{0.8,0.1,0.1}

\hypersetup{
	colorlinks=true,         % false: boxed links; true: colored links,false is default
	urlcolor=darkred, %blue??
	linkcolor=darkred,
	citecolor=blue,
}

\theoremstyle{plain}
\newtheorem{theo}{Theorem}[section]
\newtheorem{lem}[theo]{Lemma}
\newtheorem{propo}[theo]{Proposition}
\newtheorem{cor}[theo]{Corollary}

\theoremstyle{definition}
\newtheorem{defi}[theo]{Definition}

\newtheorem{open}[theo]{Open Problem}

\newenvironment{ex}
{\pushQED{\qed}\exx}
{\popQED\endexx}

\newenvironment{rem}
{\pushQED{\qed}\remm}
{\popQED\endremm}

\numberwithin{equation}{section}

\def\nn{\nonumber}

\def\bbK{\mathbb{K}}
\def\bbR{\mathbb{R}}
\def\bbC{\mathbb{C}}

\def\bbZ{\mathbb{Z}}

\def\id{\mathrm{id}}

\def\1{I}

\def\op{\mathrm{op}}

\def\Mor{\operatorname{Mor}}

\def\Loc{\mathbf{Loc}}

\def\Set{\mathbf{Set}}
\def\Alg{\mathbf{Alg}}

\def\Vec{\mathbf{Vec}}
\def\Ch{\mathbf{Ch}}

\def\CC{\mathbf{C}}

\def\EE{\mathbf{E}}

\def\TT{\mathbf{T}}

\def\CAT{\mathbf{CAT}}

\def\AQFT{\mathbf{AQFT}}

\def\tPFA{\mathbf{tPFA}}

\def\Fun{\mathbf{Fun}}

\def\CombR{\mathbf{Comb}^R_{\mathrm{tr}}}
\def\SectR{\mathbf{Sect}^R}

\def\AAA{\mathfrak{A}}

\def\BBB{\mathfrak{B}}
\def\GGG{\mathfrak{G}}

\def\FFF{\mathfrak{F}}
\def\GGG{\mathfrak{G}}

\def\O{\mathcal{O}}
\def\P{\mathcal{P}}
\def\Q{\mathcal{Q}}

\def\HK{\mathsf{HK}}
\def\CG{\mathsf{CG}}

\def\tP{\mathsf{t}\mathcal{P}}

\def\colim{\mathrm{colim}}

\def\bilim{\mathrm{bilim}}

\def\rc{\mathrm{rc}}
\def\pt{\mathrm{pt}}
\def\dc{\mathrm{dc}}
\def\as{\mathrm{as}}

\newcommand\und[1]{\underline{#1}}
\newcommand\ovr[1]{\overline{#1}}

\def\sk{\vspace{2mm}}

\makeatletter
\let\@fnsymbol\@alph
\makeatother

%Checks if some references are not used
%\usepackage{refcheck}
%

%%%%%%%%%%%%%%%%%%%%%%%%%%%%%%%%%%%%
%%%%%%%%%%%%%%%%%%%%%%%%%%%%%%%%%%%%

\title{%
On the equivalence of AQFTs and prefactorization algebras
}

\author{%
Marco Benini$^{1,2,a}$, Victor Carmona$^{3,b}$, Alastair Grant-Stuart$^{4,c}$\ and\ 
Alexander Schenkel$^{4,d}$\vspace{4mm}\\
{\small ${}^1$ Dipartimento di Matematica, Dipartimento di Eccellenza 2023-27, Universit\`a di Genova,}\\
{\small Via Dodecaneso 35, 16146 Genova, Italy.}\vspace{2mm}\\
{\small ${}^2$ INFN, Sezione di Genova,}\\
{\small Via Dodecaneso 33, 16146 Genova, Italy.}\vspace{2mm}\\
{\small ${}^3$ Max Planck Institut f\"ur Mathematik in den Naturwissenschaften,}\\
{\small Inselstra\ss e 22, 04103 Leipzig, Germany.}\vspace{2mm}\\
{\small ${}^4$ School of Mathematical Sciences, University of Nottingham,}\\
{\small University Park, Nottingham NG7 2RD, United Kingdom.}\vspace{4mm}\\
{\small \begin{tabular}{ll}
Email: & ${}^a$~\href{mailto:marco.benini@unige.it}{\texttt{marco.benini@unige.it}}\\
& ${}^b$~\href{mailto:victor.carmona@mis.mpg.de}{\texttt{victor.carmona@mis.mpg.de}}\\
&${}^c$~\href{mailto:alastair.grant-stuart@nottingham.ac.uk}{\texttt{alastair.grant-stuart@nottingham.ac.uk}}\\
& ${}^d$~\href{mailto:alexander.schenkel@nottingham.ac.uk}{\texttt{alexander.schenkel@nottingham.ac.uk}}
\vspace{2mm}
\end{tabular}
}
}

\date{January 2026}

%%%%%%%%%%%%%%%%%%%%%%%%%%%%%%%%%%%%
%%%%%%%%%%%%%%%%%%%%%%%%%%%%%%%%%%%%

\begin{document}

\maketitle

\begin{abstract}
\noindent This paper revisits the equivalence problem between algebraic quantum field theories and prefactorization algebras defined over globally hyperbolic Lorentzian manifolds. We develop a radically new approach whose main innovative features are 1.)~a structural implementation of the additivity property used in earlier approaches and 2.)~a reduction of the global equivalence problem to a family of simpler spacetime-wise problems. When applied to the case where the target category is a symmetric monoidal $1$-category, this yields a generalization of the equivalence theorem from [Commun.\ Math.\ Phys.\ \textbf{377}, 971 (2019)]. In the case where the target is the symmetric monoidal $\infty$-category of cochain complexes, we obtain a reduction of the global $\infty$-categorical equivalence problem to simpler, but still challenging, spacetime-wise problems. The latter would be solved by showing that certain functors between $1$-categories exhibit $\infty$-localizations, however the available detection criteria are inconclusive in our case.
\end{abstract}
\vspace{-1mm}

\paragraph*{Keywords:} algebraic quantum field theory, factorization algebras, homotopical algebra, operads, localizations, Lorentzian geometry
\vspace{-2mm}

\paragraph*{MSC 2020:} 81Txx, 18Nxx, 53C50
\vspace{-2mm}

\tableofcontents

%%%%%%%%%%%%%%%%%%%%%%%%%%%%%%%%%%%%%%%%%%%%%%%%
%%%%%%%%%%%%%%%%%%%%%%%%%%%%%%%%%%%%%%%%%%%%%%%%

\section{Introduction and summary}
Quantum field theory (QFT) is a highly successful and versatile 
concept which, in addition to its intrinsic relevance for physics,
has inspired countless influential developments in mathematics.
In order to facilitate these interdisciplinary exchanges,
it is essential to capture the key features of QFT in terms of mathematical axioms,
which then can be analyzed and developed further by purely mathematical techniques.
The desire to lay the mathematical foundations of QFT has triggered
significant developments over the past decades,
leading to a whole zoo of different axiomatic frameworks for QFT.
Among the most prominent recent frameworks are \textit{algebraic QFT (AQFT)}, 
going back to works of Haag and Kastler \cite{HaagKastler} and Brunetti, 
Fredenhagen and Verch \cite{BFV}, \textit{functorial QFT},
initiated by Witten \cite{Witten}, Atiyah \cite{Atiyah} and Segal \cite{Segal},
and the \textit{factorization algebras} of Costello and Gwilliam \cite{CG1,CG2}.
\sk

The availability of different mathematical axiomatizations 
for the same concept of a QFT prompts a very fundamental and important question:
Are the different axiomatizations compatible with each other? 
Or even better: Are they equivalent? Finding precise answers to
these questions is important for various reasons. 
On the one hand, major incompatibilities between different 
frameworks would indicate that they have failed their main task 
to capture the essential content of a QFT and hence they should be revised.
On the other hand, finding equivalences would show unity between different frameworks
and it would open up new pathways for a fruitful exchange of ideas and techniques across different 
research communities. For example, using equivalence theorems, one can
analyze the same QFT problem from different mathematical perspectives,
which extends the range of available techniques and increases 
the chances of successfully finding a solution.
\sk

We will now provide a brief overview of the existing comparison results
for the case of QFTs which are defined over \textit{Lorentzian spacetime manifolds}, 
which is also the context of our present paper. A first link between algebraic QFTs
and factorization algebras was pointed out in \cite{GwilliamRejzner1}, where it
is shown that a certain class of examples, called the free (i.e.\ non-interacting) QFTs, 
admit compatible descriptions in both frameworks. In particular, it was observed that the
factorization products of the factorization algebra encode the 
time-ordered products of the AQFT, and that the time-slice axiom (encoding
a concept of time-evolution) is the key ingredient to recover the spacetime-wise
unital associative algebra structures of the AQFT from the factorization algebra.
These example-based comparison results were then generalized later in \cite{BMScomparison}
to the case of free (higher) gauge theories and in \cite{GwilliamRejzner2} to the case of 
perturbatively interacting examples.
\sk

While such example-based comparisons provide important evidence that 
the different approaches are compatible with each other, they are not
sufficient to compare the underlying axiomatic frameworks and in particular
to establish equivalence theorems among them. A first categorical 
equivalence theorem between algebraic QFTs and factorization algebras 
was proven in \cite{BPScomparison}. This equivalence theorem assumes 
the very restrictive hypothesis that the target category $\TT$ is an 
ordinary symmetric monoidal $1$-category, which
excludes on both sides the key examples of gauge theories as these
take values in the symmetric monoidal $\infty$-category $\Ch$ of cochain complexes.
Despite this oversimplified context, proving a general equivalence 
theorem has led to the following important observations and 
lessons about the finer details of both axiomatic frameworks:
\begin{enumerate}
\item The notion of tuples of mutually disjoint open subsets in the factorization 
products from \cite{CG1,CG2} has to be refined in the Lorentzian geometric
context to the stronger concept of time-orderable tuples (see Definition \ref{def:Lorentz})
in order to obtain a variant of prefactorization algebras which is comparable to AQFTs.
These Lorentzian geometric variants have been called \textit{time-orderable prefactorization algebras (tPFAs)}
in \cite{BPScomparison} and we will follow this terminology in our present work.

\item The idea in \cite{GwilliamRejzner1} to use the 
time-slice axiom to construct the spacetime-wise
unital associative algebra structures of the AQFT from the 
prefactorization algebra can be realized in a model-independent context,
but it is not sufficient to construct an AQFT from a tPFA. 
The reason for this is that there exist spacetime morphisms $f:M\to N$
for which these algebra structures are not natural, hence these spacetime-wise
algebras do not assemble into a functor. This problem has 
a geometric origin, which can be traced back to the obstruction 
to the extension of Cauchy surfaces along generic spacetime 
morphisms $f:M\to N$. It has been solved in \cite{BPScomparison} 
by introducing an \textit{additivity property} 
for both AQFTs and tPFAs which demands that the global observables 
on a spacetime $M$ can be recovered from a colimit over the directed set 
$\{U\subseteq M\}$ of all relatively compact open regions in $M$.
\end{enumerate}
The main result of \cite{BPScomparison} is the construction
of two functors $\AQFT^{\mathrm{add},W} \rightleftarrows \tPFA^{\mathrm{add},W}$
which exhibit an equivalence between the category 
$\AQFT^{\mathrm{add},W}$ of additive AQFTs satisfying the time-slice axiom
and the category $\tPFA^{\mathrm{add},W}$ of additive tPFAs satisfying the time-slice axiom.
\sk

For completeness, we would also like to point the reader to
\cite{Bunk,Bunk2} where comparison theorems between AQFTs and Lorentzian
functorial QFTs have been proven in the case where the target $\TT$ is 
a symmetric monoidal $1$-category. These constructions are not directly
relevant for our present paper, but the new techniques we develop 
might be useful to generalize these results to QFTs taking values
in the symmetric monoidal $\infty$-category $\Ch$ of cochain complexes.
\sk

The aim of this paper is to present significant progress towards
an $\infty$-categorical\footnote{All $\infty$-categories that appear 
in our paper will be presented concretely by (semi-)model categories.} equivalence theorem between AQFTs and tPFAs taking
values in the symmetric monoidal $\infty$-category $\Ch$ 
of cochain complexes.\footnote{Our constructions and results
are extendable to the case where the target is any presentably
symmetric monoidal $\infty$-category, which by the results of \cite{NikolausSagave} 
can be presented by a combinatorial and tractable symmetric monoidal model category.
We however restrict our attention to the symmetric monoidal $\infty$-category $\Ch$ 
of cochain complexes because this is the most relevant example in the context of QFT.}
We would like to emphasize that this problem is considerably more abstract and 
challenging than its $1$-categorical version in \cite{BPScomparison}. In particular,
the rather explicit proof strategy from the latter work is not directly
transferable to an $\infty$-categorical context for the following reasons:
Firstly, $\Ch$-valued AQFTs and tPFAs satisfy an $\infty$-categorical relaxation
of the time-slice axiom, called the \textit{homotopy time-slice axiom}, which involves 
quasi-isomorphisms of cochain complexes instead of isomorphisms.
Since quasi-isomorphisms can in general be inverted only weakly (i.e.\ up to homotopy),
the construction of the spacetime-wise algebra structures
in \cite{BPScomparison} would have to be supplemented by a tower of homotopy 
coherence data, which is expected to assemble into spacetime-wise 
$A_\infty$-algebra structures that depend homotopy-coherent-functorially on the spacetime.
Such constructions are difficult to control, which makes it practically impossible to
construct an $\infty$-categorical analogue of the functor 
$\tPFA^{\mathrm{add},W}\to \AQFT^{\mathrm{add},W}$ from \cite{BPScomparison}.
Secondly, the additivity property, which is manifestly used in 
the constructions of \cite{BPScomparison}, 
is a kind of descent (i.e.\ local-to-global) condition which is difficult
to work with in an $\infty$-categorical context. 
\sk

As a consequence of these obstacles, a key preliminary step 
towards an $\infty$-categorical comparison between 
AQFTs and tPFAs is to develop a new proof strategy in the $1$-categorical case
which is better transferable to the $\infty$-categorical context than the techniques 
of \cite{BPScomparison}. In our work we present such a new proof strategy
which has the following innovative features: 
\begin{enumerate}
\item We provide a \textit{structural}
implementation of the relevant features of the additivity \textit{property} for AQFTs and tPFAs.
This is achieved by restricting the usual spacetime category $\Loc$ (see Definition \ref{def:Lorentz})
to the wide subcategory $\Loc^\rc\subseteq \Loc$ from Definition \ref{def:Locrc}
which has the same objects but fewer morphisms, namely only those $\Loc$-morphisms $f:M\to N$
whose image $f(M)\subseteq N$ is either Cauchy or relatively compact. The main effect
of this restriction is that Cauchy surfaces can be extended along all $\Loc^\rc$-morphisms.
We show in Remarks \ref{rem:additivity} and \ref{rem:additivitytPFA}
that the additive AQFTs/tPFAs over $\Loc$ from \cite{BPScomparison} form a full subcategory of
the AQFTs/tPFAs over $\Loc^\rc$ (which we characterize fully by a similar 
additivity property), hence our new structural approach
generalizes the old one based on the additivity property. 

\item We employ, and also generalize to the $\infty$-categorical context, 
the recent Haag-Kastler $2$-functor techniques from \cite{BGSHaagKastler}
in order to reduce the global equivalence problem for AQFTs and tPFAs over $\Loc^\rc$
to a family of simpler local equivalence problems on individual spacetimes. While
this provides only a minor simplification in the $1$-categorical context,
there is much more to gain from the reduction to individual spacetimes in the 
$\infty$-categorical context because the spacetime-wise 
homotopy-coherence questions associated with the homotopy time-slice axiom 
are considerably simpler than the ones on $\Loc^\rc$. See Section \ref{sec:spacetimewise}
for details on this crucial point.
\end{enumerate}

The advantage of our new proof strategy is that the reduction of
the global equivalence problem for AQFTs and tPFAs over $\Loc^\rc$
to a family of simpler spacetime-wise equivalence problems
carries over to the $\infty$-categorical context,
where both AQFTs and tPFAs satisfy the homotopy time-slice axiom.
This is the content of Theorems \ref{theo:dcQuillenequivalenceW}
and \ref{theo:homotopicalreduction}.
Therefore, one is left with the still challenging task of solving
the spacetime-wise $\infty$-categorical equivalence problems.
This task simplifies further due to the fact that,
in the spacetime-wise setting, the homotopy time-slice axiom
for AQFTs turns out \textit{not} to be richer than the strict one,
see Theorem \ref{theo:O_MCLF}. Taking these simplifications into account, 
there remains the Open Problem \ref{open:openproblem} 
of spacetime-wise comparing the $\infty$-category of AQFTs 
satisfying the strict time-slice axiom and the $\infty$-category 
of tPFAs satisfying the homotopy time-slice axiom. 
We hope to answer the remaining Open Problem 
\ref{open:openproblem} in future work.
\sk

We will now explain our results in more detail by outlining the content of this paper.
In Section \ref{sec:prelim}, we collect some relevant preliminaries
about AQFTs and tPFAs. To obtain a structural
implementation of the key features of the additivity property from \cite{BPScomparison},
we consider AQFTs and tPFAs which are defined over the wide subcategory
$\Loc^\rc\subseteq \Loc$ from Definition \ref{def:Locrc} whose spacetime 
morphisms $f:M\to N$ are either Cauchy or relatively compact.
(See Remarks \ref{rem:additivity} and \ref{rem:additivitytPFA} for a precise comparison
to the additivity property.) We review the concept of a Haag-Kastler $2$-functor $\HK$
and its category of points $\HK(\pt)$ from \cite{BGSHaagKastler}. 
The latter is relevant for us mainly because 
there exists a functor $\dc : \AQFT^\rc \to \HK(\pt)$ which decomposes
an AQFT over $\Loc^\rc$ into a compatible family of AQFTs over individual spacetimes,
as well as a functor $\as : \HK(\pt)\to \AQFT^\rc$ which assembles a compatible
family of AQFTs over individual spacetimes into an AQFT over $\Loc^\rc$.
These two functors are quasi-inverse to each other (see Theorem \ref{theo:dcasAQFT}),
hence they allow us to break complicated problems over $\Loc^\rc$ into 
families of simpler problems over individual spacetimes.
We introduce a similar concept for tPFAs, which we call
a Costello-Gwilliam $2$-functor $\CG$. Also in this case there exist
a decomposition functor $\dc:\tPFA^\rc \to \CG(\pt)$ and a quasi-inverse
assembly functor $\as : \CG(\pt)\to \tPFA^\rc$ which 
relate tPFAs over $\Loc^\rc$ and compatible families of tPFAs over individual spacetimes,
see Theorem \ref{theo:dcastPFA}.
\sk

In Section \ref{sec:1categorical}, we present a new $1$-categorical equivalence theorem
between AQFTs and tPFAs which is not only more general than the previous one in \cite{BPScomparison},
but also better transferable to the $\infty$-categorical context. Our key ingredient
for comparing AQFTs with tPFAs is given by the tPFA/AQFT-comparison multifunctor
$\Phi : \tP_{\Loc^\rc}\to \O_{\ovr{\Loc^\rc}}$ from Definition \ref{def:comparisonoperadmorphism}
which compares these two concepts at the fundamental level of their underlying operads.
This induces a comparison functor $\Phi^\ast : \AQFT^{\rc,W}\to\tPFA^{\rc,W}$ between the categories
of AQFTs and tPFAs over $\Loc^\rc$ satisfying the time-slice axiom. We prove that this functor
is an equivalence of categories, hence it provides an equivalence theorem between
AQFTs and tPFAs, by implementing the following new proof strategy: In Lemma \ref{lem:comparisondiagram}, 
we use the decomposition functors to the categories of points of the Haag-Kastler and Costello-Gwilliam 
$2$-functors to reduce the global equivalence problem on $\Loc^\rc$ 
to a family of simpler spacetime-wise equivalence problems. The latter are solved
in Theorem \ref{theo:equivalenceM} by proving a spacetime-wise equivalence theorem between AQFTs and tPFAs, 
which is a simple adaption of the results in \cite{BPScomparison}. 
In Theorem \ref{theo:1categoricalequivalence}, we combine these two steps in order to
prove a new $1$-categorical equivalence theorem between 
AQFTs and tPFAs over $\Loc^\rc$ satisfying the time-slice axiom.
\sk

In Section \ref{sec:modelcategorical}, we generalize this new proof strategy 
to the $\infty$-categorical context of AQFTs and tPFAs taking values in the 
symmetric monoidal $\infty$-category $\Ch$ of cochain complexes. 
Whenever possible, we work with explicit presentations of the $\infty$-categories
of $\Ch$-valued AQFTs and tPFAs (satisfying the homotopy time-slice axiom)
in terms of model categories or, more generally, semi-model categories obtained via left Bousfield localizations.
We review these (semi-)model categorical presentations in Subsection \ref{subsec:modelstructures}.
In Subsection \ref{subsec:homotopicalpoints}, we extend Barwick's results \cite{Barwick}
about homotopy limits of right Quillen diagrams of (semi-)model categories, which allow us to define
homotopical refinements of the categories of points of the Haag-Kastler and Costello-Gwilliam $2$-functors.
In Subsection \ref{subsec:homotopicaldcas}, we develop homotopical generalizations of
the decomposition and assembly functors and prove that they induce right Quillen equivalences
between the semi-model categories of AQFTs/tPFAs over $\Loc^\rc$  (satisfying the homotopy time-slice axiom)
and the semi-model categories of homotopical points, see in particular Corollary 
\ref{cor:dcQuillenequivalence} and Theorem \ref{theo:dcQuillenequivalenceW}. 
The main result of this section is Theorem \ref{theo:homotopicalreduction},
which is an $\infty$-categorical generalization of Lemma \ref{lem:comparisondiagram}.
It implies that the $\infty$-categorical equivalence problem for $\Ch$-valued
AQFTs and tPFAs over $\Loc^\rc$ satisfying the homotopy time-slice axiom
can be reduced to a family of simpler spacetime-wise $\infty$-categorical equivalence problems.
\sk

In Section \ref{sec:spacetimewise}, we present some non-trivial progress towards
solving the remaining spacetime-wise $\infty$-categorical equivalence problem for 
$\Ch$-valued AQFTs and tPFAs satisfying the homotopy time-slice axiom. 
In Subsection \ref{subsec:simplification}, we prove that
in this spacetime-wise context the homotopy 
time-slice axiom for AQFTs can be strictified (see Theorem \ref{theo:O_MCLF}), 
which yields a simplification of the problem as stated in Corollary \ref{cor:simplerspacetimewise}.
Despite the remarkable simplifications arising from our reduction to individual
spacetimes (Theorem \ref{theo:homotopicalreduction}) and the spacetime-wise AQFT
strictification result (Theorem \ref{theo:O_MCLF}),
the remaining Open Problem \ref{open:openproblem} is still challenging.
In Subsection \ref{subsec:openproblem}, we discuss some possible strategies
to attack it through techniques from homotopical algebra, homotopical
localizations of operads or $\infty$-categorical localizations. The latter approach
seems to be the most promising one, but unfortunately all existing detection criteria
for $\infty$-localizations are inconclusive in our case, see 
Proposition \ref{prop:HinichCriterion} and Remark \ref{rem:othercriteria}.
The way how these criteria fail seems to be rather mild (see Remark \ref{rem:outlook}),
which makes it plausible that there might be slight variations of the criteria
in \cite[Key Lemma 1.3.6]{HinichDK} and \cite{HinichDKnew} that apply to our example.
\sk

The paper includes two appendices. Appendix \ref{app:operadiclocalization} introduces
and studies a concept of operadic calculus of left fractions which is used for proving our
strictification Theorem \ref{theo:O_MCLF}. Appendix \ref{app:Lorentz} 
establishes some technical results of Lorentzian geometric nature which are needed
in the proofs of Theorem \ref{theo:O_MCLF} and Proposition \ref{prop:O_Mlocalized}.

%%%%%%%%%%%%%%%%%%%%%%%%%%%%%%%%%%%%%%%%%%%%%%%%
%%%%%%%%%%%%%%%%%%%%%%%%%%%%%%%%%%%%%%%%%%%%%%%%

\section{\label{sec:prelim}Preliminaries}
In this section we recall some relevant definitions
and constructions for algebraic quantum field theories (AQFTs) \cite{BFV,FewsterVerch,BSWoperad}
and prefactorization algebras \cite{CG1,CG2} which will be essential
for our work. Our focus will be on quantum field theories
that are defined on globally hyperbolic Lorentzian manifolds and 
satisfy the time-slice axiom. 
The following definition collects some basic concepts
from Lorentzian geometry which are needed below, see also
\cite{ONeill,BGP,Minguzzi} for more complete introductions to this subject.
\begin{defi}\label{def:Lorentz}
We denote by $\Loc$ the category whose objects are all oriented
and time-oriented globally hyperbolic Lorentzian manifolds $M$ (of a fixed dimension $m\geq 1$)
and whose morphisms are all orientation and time-orientation preserving isometric
open embeddings $f:M\to N$ with causally convex 
image $f(M)\subseteq N$.\footnote{The category $\Loc$ is not small, but it is equivalent
to a small category. This follows as usual by using Whitney's embedding theorem to
realize (up to diffeomorphism) all $m$-dimensional manifolds $M$ as 
submanifolds of $\bbR^{2m+1}$. To avoid size issues, 
we will always implicitly replace $\Loc$ by an equivalent 
small category, which we denote with abuse of notation also by $\Loc$.} 
Causal convexity means that every causal curve $\gamma : [0,1]\to N$ whose endpoints
$\gamma(0),\gamma(1)\in f(M)$ lie in the image is contained entirely in $f(M)\subseteq N$.
The following distinguished (tuples of) $\Loc$-morphisms will play a prominent role:
\begin{itemize}
\item[(a)] A $\Loc$-morphism $f:M\to N$ is called \textit{Cauchy}
if its image $f(M)\subseteq N$ contains a Cauchy surface of $N$.

\item[(b)] A $\Loc$-morphism $f:M\to N$ is called \textit{relatively compact}
if its image $f(M)\subseteq N$ is relatively compact, i.e.\ the 
closure $\overline{f(M)}\subseteq N$ is a compact subset.

\item[(c)] A pair of $\Loc$-morphisms $(f_1:M_1\to N,f_2:M_2\to N)$ to a common target
$N\in\Loc$ is called \textit{causally disjoint}, denoted by $f_1\perp f_2$, 
if there exists no causal curve connecting the images $f_1(M_1)\subseteq N$ and $f_2(M_2)\subseteq N$.

\item[(d)] A tuple of $\Loc$-morphisms $\und{f} = (f_1: M_1\to N,\dots,f_n:M_n\to N)$
to a common target $N\in\Loc$ is called \textit{time-ordered} if $J^+_N(f_i(M_i))\cap f_j(M_j)=\varnothing$,
for all $i<j$. The causal future $J^+_N(f_i(M_i))\subseteq N$ is defined as
the union of $f_i(M_i)\subseteq N$ and the set of all points in $N$ which 
can be reached by future-pointing causal
curves emanating from $f_i(M_i)\subseteq N$. 
(In particular, the images of the morphisms 
of a time-ordered tuple are mutually disjoint.)

\item[(e)] A tuple of $\Loc$-morphisms $\und{f} = (f_1: M_1\to N,\dots,f_n:M_n\to N)$
to a common target $N\in\Loc$ is called \textit{time-orderable} if there exists
a permutation $\rho\in\Sigma_n$, called \textit{time-ordering permutation}, such that
the permuted tuple $\und{f}\rho = (f_{\rho(1)},\dots,f_{\rho(n)})$ is time-ordered. (In particular, the images of the morphisms 
of a time-orderable tuple are mutually disjoint.)
\end{itemize}
\end{defi}

\begin{rem}\label{rem:Cauchyextension}
An unpleasant feature of the category $\Loc$ is that Cauchy surfaces
$\Sigma \subset M$ \textit{do not} in general admit extensions
along $\Loc$-morphisms $f:M\to N$. A simple example which illustrates
this fact is given by the inclusion $\Loc$-morphism $\iota_U^V : U\to V$
associated with the following two diamond-shaped open subsets in the $2$-dimensional Minkowski spacetime:
\begin{flalign}\label{eqn:nonrelativelycompact}
\begin{gathered}
\begin{tikzpicture}[scale=1.5]
\draw[fill=gray!5,dotted] (-1,0) -- (0,1) -- (1,0) -- (0,-1) -- (-1,0);
\draw[fill=gray!20,dotted] (-0.5,0.5) -- (0,1) -- (0.5,0.5) -- (0,0) -- (-0.5,0.5);
\draw (0,-0.5) node {{\footnotesize $V$}};
\draw (0,0.75) node {{\footnotesize $U$}}; 
\draw[very thick, blue] (-0.5,0.5) .. controls (0,0.55) .. (0.5, 0.5) node[midway, below] {{\footnotesize $\Sigma$}};
\draw[very thick, ->] (-1.25,-0.8) -- (-1.25,0.8) node[above] {{\footnotesize time}};
\end{tikzpicture}
\end{gathered}
\end{flalign}
It is evident that no Cauchy surface $\Sigma$ of $U$ admits an extension to a Cauchy surface of $V$.
\sk

It is important to note that this issue does not arise for $\Loc$-morphisms 
$f:M\to N$ which are either Cauchy or relatively compact in the sense of Definition 
\ref{def:Lorentz} (a) and (b). For the Cauchy case,
observe that the image $f(\Sigma)\subset N$ of any Cauchy surface $\Sigma\subset M$ 
under any Cauchy $\Loc$-morphism $f:M\to N$ is a Cauchy surface of $N$, hence the 
extension problem is solved trivially. For the relatively compact case,
observe that the closure of the image $\overline{f(\Sigma)}\subset N$ 
of any Cauchy surface $\Sigma\subset M$ under any relatively compact $\Loc$-morphism $f:M\to N$
is compact and achronal, hence by the results 
of Bernal and Sanchez \cite[Theorem 3.8]{BernalSanchez} 
it admits an extension $f(\Sigma)\subseteq \overline{f(\Sigma)}\subseteq \widetilde{\Sigma}
\subset N$ to a Cauchy surface $\widetilde{\Sigma}$ of $N$.
\end{rem}

The ability to extend Cauchy surfaces along $\Loc$-morphisms is crucial 
for proving an equivalence theorem between AQFTs and prefactorization algebras.
This has already been observed in the earlier work \cite{BPScomparison},
where a fix for the issue reported in Remark \ref{rem:Cauchyextension} 
is given by demanding a certain \textit{additivity property} 
for both AQFTs and prefactorization algebras. In the present work
we take a structural approach which consists of restricting 
the category $\Loc$ from Definition \ref{def:Lorentz} to 
a suitable subcategory that does not contain the problematic morphisms. 
This structural approach is more flexible and general than the one of \cite{BPScomparison}
(see Remarks \ref{rem:additivity} and \ref{rem:additivitytPFA} below) and it is better suited 
for our homotopical considerations in Sections \ref{sec:modelcategorical} and \ref{sec:spacetimewise}.
\begin{defi}\label{def:Locrc}
We denote by $\Loc^\rc\subseteq \Loc$ the wide subcategory 
consisting of all objects of $\Loc$ and of all 
$\Loc$-morphisms $f:M\to N$ which are either Cauchy or relatively compact
in the sense of Definition \ref{def:Lorentz} (a) and (b).
\end{defi}
\begin{rem}\label{rem:Locrccomposition}
Note that compositions of Cauchy and relatively compact
$\Loc$-morphisms follow the pattern
\begin{subequations}
\begin{flalign}
(\text{Cauchy})\circ (\text{Cauchy}) \,&=\, (\text{Cauchy})\quad,\\ 
(\text{Cauchy})\circ (\text{relatively compact}) \,&=\, (\text{relatively compact})\quad,\\
(\text{relatively compact})\circ (\text{Cauchy}) \,&=\, (\text{relatively compact})\quad,\\
(\text{relatively compact})\circ (\text{relatively compact}) \,&=\, (\text{relatively compact})\quad.
\end{flalign}
\end{subequations}
Since the identity $\Loc$-morphisms $\id_M : M\to M$ are Cauchy, for all $M\in \Loc$, 
it follows that $\Loc^\rc\subseteq \Loc$ forms indeed a subcategory.
\end{rem}

\subsection{\label{subsec:AQFT}Algebraic quantum field theories}
AQFTs admit an elegant and powerful description 
in terms of algebras over suitable operads. 
This observation originated in \cite{BSWoperad} and it 
was developed further in various directions in
\cite{BSWinvolutive,BSWhomotopy,BruinsmaSchenkel,Yau,BPSWcategorified,Carmona,BCStimeslice}.
See also \cite{BSreview,BSchapter} for brief and non-technical introductions.
The precise version of the AQFT operad we use in our work is 
based on the subcategory $\Loc^{\rc}\subseteq \Loc$ 
from Definition \ref{def:Locrc} and it is defined as follows.
\begin{defi}\label{def:AQFToperad}
The \textit{AQFT operad}\footnote{The overline notation
$\ovr{\Loc^\rc}$ refers to the concept of orthogonal categories introduced in \cite{BSWoperad}.
In the present case, the relevant orthogonal category $\ovr{\Loc^\rc}:= (\Loc^\rc,\perp)$ is
given by the category $\Loc^{\rc}$ from Definition \ref{def:Locrc} and the orthogonality relation
$\perp$ which is defined by causal disjointness from Definition \ref{def:Lorentz} (c).} 
$\O_{\ovr{\Loc^\rc}}^{}$ is the 
colored operad which is defined by the following data:
\begin{itemize}
\item[(i)] The objects are the objects of $\Loc^\rc$.

\item[(ii)] The set of operations from $\und{M} = (M_1,\dots, M_n)$
to $N$ is the quotient set
\begin{flalign}
\O_{\ovr{\Loc^\rc}}^{}\big(\substack{N \\ \und{M}}\big)\,:=\,
\bigg(\Sigma_n \times \prod_{i=1}^n \Loc^\rc(M_i,N)\bigg)\Big/\!\sim_{\perp}^{}\quad,
\end{flalign}
where $\Sigma_n$ denotes the permutation group on $n$ letters and
$\Loc^\rc(M_i,N)$ denotes the set of $\Loc^\rc$-morphisms from $M_i$ to $N$.
Two elements are equivalent $(\sigma,\und{f}) 
\sim_{\perp}^{} (\sigma^\prime,\und{f}^\prime)$ if and only if 
$\und{f} = \und{f}^\prime$ and the right permutation
$\sigma\sigma^{\prime -1} : \und{f}\sigma^{-1}\to \und{f}\sigma^{\prime -1}$
is generated by transpositions of adjacent causally disjoint pairs of morphisms.

\item[(iii)] The composition of $[\sigma,\und{f}] : \und{M}\to N$ 
with $[\sigma_i,\und{g}_i] : \und{K}_i\to M_i$, for $i=1,\dots,n$, is defined by
\begin{subequations}
\begin{flalign}
[\sigma , \und{f}]\, [ \und{\sigma}, \und{\und{g}}]\,:=\,
\big[\sigma(\sigma_1,\dots,\sigma_n), \und{f}\,\und{\und{g}}\,\big]\,:\,\und{\und{K}}\,\longrightarrow\,N\quad,
\end{flalign}
where $\sigma(\sigma_1,\dots,\sigma_n)$ denotes
the composition in the unital associative operad and
\begin{flalign}
 \und{f}\,\und{\und{g}} \,:=\, \big(f_1\,g_{11},\dots, f_1\,g_{1k_1},\dots, f_n\,g_{n1},\dots,f_{n}\,g_{n k_n}\big)
\end{flalign}
\end{subequations}
is given by compositions in the category $\Loc^\rc$.

\item[(iv)] The identity operations are $[e,\id_N^{}] : N\to N$, where $e\in\Sigma_1$ is the identity permutation.

\item[(v)] The permutation action of $\sigma^\prime\in\Sigma_n$ on $[\sigma,\und{f}] : \und{M}\to N$ is given by
\begin{flalign}
[\sigma,\und{f}]\cdot \sigma^\prime \,:=\, [\sigma\sigma^\prime,\und{f}\sigma^\prime] \,:\, \und{M}\sigma^{\prime}~\longrightarrow~
 N\quad,
\end{flalign}
where $\und{f}\sigma^\prime = (f_{\sigma^\prime(1)},\dots,f_{\sigma^\prime(n)})$ 
and $\und{M}\sigma^{\prime}= (M_{\sigma^\prime(1)},\dots, M_{\sigma^{\prime}(n)})$ denote 
the permuted tuples and $\sigma\sigma^\prime$ is given by the group operation of the permutation group $\Sigma_n$.
\end{itemize}
\end{defi}

Let us fix any bicomplete closed symmetric monoidal category $\TT$ as the target category 
in which our AQFTs take values. The typical choice in the context of 
$1$-categorical AQFTs is given by the bicomplete closed symmetric monoidal 
category $\Vec_\bbC$ of complex vector spaces.
\begin{defi}\label{def:AQFT}
The category of \textit{AQFTs over $\Loc^\rc$} with values in a
bicomplete closed symmetric monoidal category $\TT$ is defined as the category
\begin{flalign}
\AQFT^\rc\,:=\, \Alg_{\O_{\ovr{\Loc^\rc}}^{}}^{}\big(\TT\big)
\end{flalign}
of $\TT$-valued algebras over the AQFT operad $\O_{\ovr{\Loc^\rc}}^{}$
from Definition \ref{def:AQFToperad}. We denote by 
\begin{flalign}
\AQFT^{\rc,W}\,\subseteq\, \AQFT^\rc
\end{flalign}
the full subcategory consisting of all AQFTs satisfying the time-slice axiom,
i.e.\ the operad algebra (i.e.\ multifunctor)
$\AAA: \O_{\ovr{\Loc^\rc}}^{}\to\TT$ sends every $1$-ary operation $[e,f]:M\to N$ 
in $\O_{\ovr{\Loc^\rc}}^{}$ which corresponds to a Cauchy morphism $f:M\to N$ in $\Loc^\rc$
to an isomorphism in $\TT$.
\end{defi}
\begin{rem}\label{rem:AQFT}
This operadic definition of $\AQFT^\rc$ is equivalent to the more traditional categorical 
one in terms of functors $\AAA : \Loc^\rc \to \Alg_{\mathsf{uAs}}(\TT)$ from the category
$\Loc^\rc$ to the category of unital associative algebras 
in $\TT$ which satisfy the Einstein causality axiom.
The time-slice axiom from Definition \ref{def:AQFT} agrees with the usual time-slice axiom.
We refer the reader to \cite[Section 3]{BSWoperad} for more details on these points.
Whenever convenient, we will make use of these identifications without further comments.
\end{rem}

To relate our novel concept of AQFTs over $\Loc^\rc$ from Definition \ref{def:AQFT} 
with the more familiar concept of AQFTs over $\Loc$, let us recall that 
$\Loc^\rc \subseteq \Loc$ is a subcategory by Definition \ref{def:Locrc}. 
Therefore, there exists an evident multifunctor $\O_{\ovr{\Loc^\rc}}\to \O_{\ovr{\Loc}}$.
The corresponding functor 
\begin{flalign}\label{eqn:res-rc}
(-)\vert^\rc \,:\, \AQFT \,:=\, \Alg_{\O_{\ovr{\Loc}}}\big(\TT\big) ~\longrightarrow~ \AQFT^\rc
\end{flalign}
restricting AQFTs over $\Loc$ to AQFTs over $\Loc^\rc$ is manifestly 
faithful because $\Loc^\rc$ and $\Loc$ have the same objects. 
We shall now show that the functor \eqref{eqn:res-rc} acts fully 
faithfully on the full subcategory $\AQFT^\mathrm{add} \subseteq \AQFT$ 
of \textit{additive} AQFTs over $\Loc$, 
consisting of all objects $\AAA \in \AQFT$ 
satisfying the following additivity property, 
which we recall from \cite{BPScomparison}: 
For every $M\in\Loc$, the canonical map
\begin{flalign}\label{eqn:additivity}
\xymatrix{
\colim\Big(\AAA\vert_M : \mathbf{RC}_M\to \Alg_{\mathsf{uAs}}(\TT)\Big)~\ar[r]^-{\cong} &~\AAA(M)
}
\end{flalign}
is an isomorphism, where $\mathbf{RC}_M$ denotes the directed set
of all relatively compact and causally convex opens $U\subseteq M$ 
with the evident partial order given by subset inclusion.
The interpretation of the additivity property is that the global
value $\AAA(M)$ of the AQFT $\AAA$ on $M$ is completely determined by its local
values $\AAA(U)$ in all relatively compact and causally convex opens $U\subseteq M$.
\begin{propo}\label{prop:additivity}
The restriction
\begin{flalign}\label{eqn:add-res-rc}
\xymatrix{
(-)\vert^\rc\,:\,\AQFT^\mathrm{add}\ar[r]^-{\mathrm{f.f.}}~&~\AQFT^\rc
}
\end{flalign}
of the functor \eqref{eqn:res-rc} to the full subcategory 
$\AQFT^\mathrm{add} \subseteq \AQFT$
of additive AQFTs over $\Loc$ is fully faithful.
\end{propo}
\begin{proof}
As we already observed, the functor \eqref{eqn:res-rc} is faithful because 
$\Loc^\rc$ and $\Loc$ have the same objects. In particular, its restriction 
\eqref{eqn:add-res-rc} is faithful too. In order to prove that it is also full, 
given any $\AAA,\BBB\in \AQFT^\mathrm{add}$ and any morphism 
$\zeta : \AAA\vert^{\rc} \Rightarrow \BBB\vert^{\rc}$ in $\AQFT^\rc$
between the restrictions $\AAA\vert^{\rc} ,\BBB\vert^{\rc}\in \AQFT^\rc$, we have to construct
a morphism $\widetilde{\zeta}:\AAA\Rightarrow\BBB$ in $\AQFT^\mathrm{add}$
such that its restriction satisfies $\widetilde{\zeta}\vert^\rc = \zeta$.
Using again that $\Loc^\rc$ and $\Loc$ have the same objects, a candidate
for $\widetilde{\zeta}$ is given by the components $\widetilde{\zeta}_M := \zeta_M$, for
all $M\in\Loc$. It remains to show naturality of these components
for all $\Loc$-morphisms $f: M\to N$ (not necessarily Cauchy or relatively compact).
For any $U\in \mathbf{RC}_M$, we can expand the naturality square for $f$ according to
\begin{flalign}\label{eqn:extendednaturality}
\begin{gathered}
\xymatrix{
\ar[d]_-{\zeta_U}
\AAA(U)\ar[r]^-{\AAA(\iota_U^M)}~&~\ar[d]_-{\zeta_M} \AAA(M) \ar[r]^-{\AAA(f)}~&~\AAA(N)\ar[d]_-{\zeta_N}\\
\BBB(U)\ar[r]_-{\BBB(\iota_U^M)}~&~\BBB(M)\ar[r]_-{\BBB(f)}~&~\BBB(N)
}
\end{gathered}\qquad.
\end{flalign}
Observe that the left and outer squares commute because the inclusion
$\iota_U^M: U\to M$ and the composite $f\,\iota_U^M : U\to N$ are both relatively compact, 
i.e.\ they are morphisms in $\Loc^\rc\subseteq \Loc$. By universality of the 
colimit entering the additivity property \eqref{eqn:additivity}, this implies that the 
naturality square for $f$ commutes too.
\end{proof}

We shall now characterize the essential image of the fully faithful 
functor \eqref{eqn:add-res-rc}, and thus single out precisely those 
AQFTs over $\Loc^\rc$ which correspond to additive AQFTs over $\Loc$. 
For this purpose, in analogy with the case of AQFTs over $\Loc$, 
let us introduce the full subcategory
$\AQFT^{\rc,\mathrm{add}} \subseteq \AQFT^\rc$ of \textit{additive} AQFTs 
over $\Loc^\rc$, consisting of all objects $\AAA \in \AQFT^\rc$ 
satisfying the following additivity property: 
For every $M\in\Loc^\rc$, the canonical map
\begin{flalign}\label{eqn:additivity-rc}
\xymatrix{
\colim\Big(\AAA\vert_M : \mathbf{RC}^\rc_M\to \Alg_{\mathsf{uAs}}(\TT)\Big)~\ar[r]^-{\cong} &~\AAA(M)
}
\end{flalign}
is an isomorphism, where $\mathbf{RC}^\rc_M$ denotes the directed set
of all relatively compact and causally convex opens $U\subseteq M$ 
with the partial order given by subset inclusions 
that are either Cauchy or relatively compact.
We observe immediately that the directed sets $\mathbf{RC}^\rc_M$ and 
$\mathbf{RC}_M$ have the same underlying set and, moreover, the obvious 
order-preserving map $\mathbf{RC}^\rc_M \to \mathbf{RC}_M$ is a final functor, 
because all pairs of objects $U_1,U_2 \in \mathbf{RC}_M$ 
are included as relatively compact subsets of some other object 
$V \in \mathbf{RC}_M$, i.e.\ the inclusions $U_1 \subseteq V \supseteq U_2$ 
are relatively compact. In particular, the additivity property for AQFTs over $\Loc$ 
may be detected equivalently by restricting the colimit \eqref{eqn:additivity} 
along the final functor $\mathbf{RC}^\rc_M \to \mathbf{RC}_M$. 
This entails that the functor \eqref{eqn:add-res-rc} factors through 
the full subcategory inclusion $\AQFT^{\rc,\mathrm{add}} \subseteq \AQFT^\rc$. 
\begin{propo}\label{prop:additivity2}
The corestriction
\begin{flalign}\label{eqn:add-res-rcadd}
\xymatrix{
(-)\vert^\rc\,:\,\AQFT^\mathrm{add}\ar[r]^-{\sim}~&~\AQFT^{\rc,\mathrm{add}}
}
\end{flalign}
of the fully faithful functor \eqref{eqn:add-res-rc} to the full subcategory 
$\AQFT^{\rc,\mathrm{add}} \subseteq \AQFT^\rc$
of additive AQFTs over $\Loc^\rc$ is an equivalence of categories. 
In other words, $\AQFT^{\rc,\mathrm{add}}$ is the essential image of 
the functor \eqref{eqn:add-res-rc}
\end{propo}
\begin{proof}
The functor \eqref{eqn:add-res-rcadd} is fully faithful because it is defined 
as the corestriction of a fully faithful functor along 
a full subcategory inclusion. To prove that it is also essentially surjective, 
and hence an equivalence of categories, given an object
$\AAA\in \AQFT^{\rc,\mathrm{add}}$, we construct an object 
$\widetilde{\AAA} \in \AQFT^{\mathrm{add}}$ and an isomorphism 
$\widetilde{\AAA} \vert^\rc \cong \AAA$ in $\AQFT^{\rc,\mathrm{add}}$, 
which we interpret as witnessing the fact that $\widetilde{\AAA}$ 
extends $\AAA$ to morphisms in $\Loc$ beyond $\Loc^\rc$, 
namely neither Cauchy nor relatively compact.
Since $\Loc$ and $\Loc^\rc$ have the same objects,
we set $\widetilde{\AAA}(M) := \AAA(M)$, for all $M \in \Loc$. Furthermore, since the morphisms 
$f: M \to N$ in $\Loc$ that are either Cauchy or relatively compact are 
precisely the morphisms of $\Loc^\rc$, for those we set 
$\widetilde{\AAA}(f) := \AAA(f)$. To define also the morphisms 
$\widetilde{\AAA}(f): \widetilde{\AAA}(M) \to \widetilde{\AAA}(N)$ 
in $\Alg_{\mathsf{uAs}}(\TT)$, for $f: M \to N$ in $\Loc$ that is neither 
Cauchy nor relatively compact, we observe that, for any object 
$U \in \mathbf{RC}^\rc_M$, one has the evident $f$-induced isomorphism 
$f\vert_U^{f(U)}: U \to f(U)$ in $\Loc^\rc$ and the morphism 
$\iota_{f(U)}^N: f(U) \to N$ in $\Loc^\rc$ that embeds the relatively compact 
and causally convex open $f(U) \subseteq N$ in $N$. Therefore, 
using the $\Loc^\rc$-functoriality of $\AAA$, one constructs the morphism
\begin{flalign}
\xymatrix@C=4em{
\widetilde{\AAA}(U) \,=\, \AAA(U) \ar[r]^-{\AAA(f\vert_U^{f(U)})} ~&~ 
\AAA(f(U)) \ar[r]^-{\AAA(\iota_{f(U)}^N)} ~&~ \AAA(N) \,=\, \widetilde{\AAA}(N)
}
\end{flalign}
in $\Alg_{\mathsf{uAs}}(\TT)$, which is furthermore natural with respect to 
the relatively compact inclusions $\iota_U^V: U \to V$ in $\mathbf{RC}^\rc_M$. 
Hence, using the additivity property of $\AAA \in \AQFT^{\rc,\mathrm{add}}$ 
and the universal property of the colimit in \eqref{eqn:additivity-rc}, 
one defines the morphism 
$\widetilde{\AAA}(f): \widetilde{\AAA}(M) \to \widetilde{\AAA}(N)$ 
in $\Alg_{\mathsf{uAs}}(\TT)$ also for $f: M \to N$ in $\Loc$ that is neither 
Cauchy nor relatively compact. Using again the additivity property of $\AAA$ 
and the universal property of the colimit, one checks that 
$\widetilde{\AAA}: \Loc \to \Alg_{\mathsf{uAs}}(\TT)$ is a functor, 
that furthermore inherits both the Einstein causality axiom 
and the additivity property from $\AAA$, hence 
$\widetilde{\AAA} \in \AQFT^\mathrm{add}$ is an additive AQFT over $\Loc$. 
By construction the restriction of $\widetilde{\AAA}$ to $\Loc^\rc$ 
coincides with $\AAA$, namely we can take the identity 
as the isomorphism $\widetilde{\AAA} \vert^\rc \cong \AAA$ 
in $\AQFT^{\rc,\mathrm{add}}$ witnessing that $\widetilde{\AAA}$ extends 
$\AAA$ to morphisms in $\Loc$ that are neither Cauchy nor relatively compact.
\end{proof}

\begin{rem}\label{rem:additivity}
Proposition \ref{prop:additivity} shows that our novel concept of AQFTs 
over $\Loc^\rc$ from Definition \ref{def:AQFT}
is a honest generalization of the additive AQFTs over $\Loc$ from the 
earlier comparison theorem in \cite{BPScomparison}. Furthermore, 
Proposition \ref{prop:additivity2} identifies the 
AQFTs over $\Loc^\rc$ coming from additive AQFTs over $\Loc$ by a 
similar additivity property. 
The proof of Proposition \ref{prop:additivity2} also constructs, for every additive
AQFT $\AAA$ over $\Loc^\rc$, an explicit extension 
$\widetilde{\AAA} \in \AQFT^{\mathrm{add}}$
of $\AAA$ to the additional morphisms in $\Loc \supseteq \Loc^\rc$ 
that are neither Cauchy nor relatively compact.
\end{rem}

An essential technique which we will use frequently in the main 
part of this paper is the decomposition of an AQFT over $\Loc^\rc$
into a compatible family of AQFTs over individual spacetimes 
and, vice versa, the assembly of a compatible family 
of AQFTs over individual spacetimes into an AQFT over $\Loc^\rc$.
These constructions were developed in \cite{BGSHaagKastler}
under the name of Haag-Kastler $2$-functors and one of 
their main benefits is that they allow us to break complicated
global problems on $\Loc$ (or $\Loc^\rc$ in the present case)
into families of simpler problems on individual spacetimes $M$.
We will now recall those aspects of \cite{BGSHaagKastler} which 
are relevant for the present work and at the same time adapt them to
our context given by the category $\Loc^\rc$ from Definition \ref{def:Locrc}.
\begin{defi}\label{def:O_M}
For each $M\in\Loc^\rc$, we denote by
\begin{flalign}\label{eqn:O_M}
\O_M\,\subseteq\,\O_{\ovr{\Loc^\rc}}
\end{flalign}
the suboperad of the AQFT operad from Definition \ref{def:AQFToperad}
whose objects are all $U\in \O_{\ovr{\Loc^\rc}}$
corresponding to causally convex opens $U\subseteq M$ which
are either Cauchy or relatively compact and whose operations
are of the form $[\sigma,\iota_{\und{U}}^V] :\und{U}\to V$,
where $\iota_{\und{U}}^V = (\iota_{U_1}^V,\dots,\iota_{U_n}^V)$ is a tuple
of inclusion $\Loc^\rc$-morphisms. The category of \textit{AQFTs 
over $\Loc^\rc/M$} is defined as the category
\begin{flalign}
\HK(M)\,:=\,\Alg_{\O_M}\big(\TT\big)
\end{flalign}
of $\TT$-valued algebras over this suboperad. We further denote by
\begin{flalign}
\HK^W(M)\,\subseteq\, \HK(M)
\end{flalign}
the full subcategory of all AQFTs over $\Loc^\rc/M$ satisfying the time-slice axiom.
\end{defi}

For any $\Loc^\rc$-morphism $f : M\to N$, there exists 
an evident multifunctor
(denoted with abuse of notation by the same symbol)
$f:\O_M\to \O_N$ which is given by taking images under $f$. This induces a pullback functor
$f^\ast : \HK^{(W)}(N)\to \HK^{(W)}(M)$ between the corresponding 
categories of AQFTs (satisfying the time-slice axiom) over the individual spacetimes. 
These data assemble into a \emph{strict} $2$-functor 
(see e.g.\ \cite[Definition 1.4.1]{Hovey}), called \textit{Haag-Kastler $2$-functor}
\begin{flalign}\label{eqn:HK2functor}
\HK^{(W)}\,:\,\big(\Loc^\rc\big)^\op~\longrightarrow~\CAT\quad,
\end{flalign}
which assigns to each $M\in\Loc^\rc$ the corresponding
category $\HK^{(W)}(M)\in \CAT$ of AQFTs over $\Loc^\rc/M$ (satisfying the time-slice axiom)
and to each $\Loc^\rc$-morphism $f:M\to N$ the pullback functor
$\HK^{(W)}(f) := f^\ast : \HK^{(W)}(N)\to \HK^{(W)}(M)$. 
Of course, here we consider $\Loc^\rc$ with its obvious $2$-category structure, 
where the only $2$-morphisms are identities.
\begin{defi}\label{def:HKpoints}
The \textit{category of points} of the Haag-Kastler $2$-functor
$\HK^{(W)}$ is defined as the bicategorical limit
\begin{flalign}
\HK^{(W)}(\pt)\,:=\, \mathrm{bilim}\Big(\HK^{(W)}:\big(\Loc^\rc\big)^\op \to \CAT\Big)\,\in\,\CAT\quad.
\end{flalign}
\end{defi}
\begin{rem}\label{rem:HKpoints}
A more geometric description of the category of points $\HK^{(W)}(\pt)$ 
is as the category of pseudo-natural transformations and modifications 
from the constant strict $2$-functor 
$\Delta \mathbf{1}\colon \big(\Loc^\rc\big)^\op \to \CAT $ 
on the terminal category $\mathbf{1}$ (the point) to the Haag-Kastler 
$2$-functor $\HK$, see \cite[Definition 3.4]{BGSHaagKastler}. 
The notation comes from the fact that $\HK^{(W)}(M)$ can be identified with 
the category of pseudo-natural transformations $y(M) \to \HK$ and modifications, 
where $y(M)\colon \big(\Loc^\rc\big)^\op \to \Set\to \CAT$ denotes the Yoneda 
functor associated to $M\in \Loc^{\rc}$. Doing so, one interprets 
$\HK^{(W)}(M)$ as the category of $M$-points of the Haag-Kastler $2$-functor 
$\HK^{(W)}$ and, in the same spirit, $\HK^{(W)}(\pt)$ as its category of points.
\sk

As explained in  \cite[Remark 3.5]{BGSHaagKastler}, 
the category of points $\HK^{(W)}(\pt)$ also admits the following explicit description
in terms of compatible families of AQFTs:
\begin{itemize}
\item An object in $\HK^{(W)}(\pt)$ is a tuple $\big(\{\AAA_M\},\{\alpha_f\}\big)$
consisting of a family of AQFTs (satisfying the time-slice axiom) over individual spacetimes
$\AAA_M\in\HK^{(W)}(M)$, for all $M\in\Loc^\rc$, and a family of $\HK^{(W)}(M)$-isomorphisms
$\alpha_f : \AAA_M\Rightarrow f^\ast(\AAA_N)$, for all $\Loc^\rc$-morphisms $f:M\to N$,
such that the diagrams
\begin{flalign}
\begin{gathered}
\xymatrix@C=3em{
\ar@{=>}[d]_-{\alpha_{gf}}\AAA_M\ar@{=>}[r]^-{\alpha_f}~&~f^\ast(\AAA_N)\ar@{=>}[d]^-{f^\ast(\alpha_g)}  
~&~\AAA_M \ar@{=>}[r]^-{\alpha_{\id_M}}\ar@{=>}[rd]_-{\id_{\AAA_M}}~&~\id_M^\ast(\AAA_M)\ar@{=}[d]\\
(gf)^\ast(\AAA_L)\ar@{=}[r] ~&~ f^\ast g^\ast(\AAA_L)  ~&~ ~&~\AAA_M
}
\end{gathered}
\end{flalign}
in $\HK^{(W)}(M)$ commute, for all composable $\Loc^\rc$-morphisms 
$f:M\to N$ and $g: N\to L$ and all objects $M\in \Loc^\rc$.

\item A morphism $\{\zeta_M\}: \big(\{\AAA_M\},\{\alpha_f\}\big)\Rightarrow \big(\{\BBB_M\},\{\beta_f\}\big)$
in $\HK^{(W)}(\pt)$ is a family of $\HK^{(W)}(M)$-morphisms $\zeta_M : \AAA_M\Rightarrow\BBB_M$,
for all $M\in\Loc^\rc$, such that the diagram
\begin{flalign}
\begin{gathered}
\xymatrix@C=3em{
\ar@{=>}[d]_-{\zeta_M}\AAA_M \ar@{=>}[r]^-{\alpha_f}~&~f^\ast(\AAA_N)\ar@{=>}[d]^-{f^\ast(\zeta_N)}\\
\BBB_M \ar@{=>}[r]_-{\beta_f}~&~f^\ast(\BBB_N)
}
\end{gathered}
\end{flalign}
in $\HK^{(W)}(M)$ commutes, for all $\Loc^\rc$-morphisms $f:M\to N$. \qedhere
\end{itemize}
\end{rem}

There exists an equivalence between the category of points $\HK^{(W)}(\pt)$ and 
the category  $\AQFT^{\rc(,W)}$ of AQFTs over $\Loc^\rc$ from Definition \ref{def:AQFT} 
which is exhibited by the decomposition and assembly functors from
\cite[Constructions 3.6 and 3.7]{BGSHaagKastler}. For completeness, 
let us briefly sketch their descriptions. 
The \textit{decomposition functor}
\begin{subequations}\label{eqn:dc}
\begin{flalign}\label{eqn:dc1}
\dc\,:\,\AQFT^{\rc} ~\longrightarrow~\HK(\pt)
\end{flalign}
assigns to an object $\AAA\in\AQFT^\rc$ the object
in $\HK(\pt)$ which is specified by the tuple
\begin{flalign} 
\dc(\AAA)\,:=\,\big(\big\{\AAA\vert_M\big\},\big\{\AAA\vert_M\Rightarrow f^\ast(\AAA\vert_N)\big\}\big)\quad,
\end{flalign}
where $\AAA\vert_M\in \HK(M)$ is defined by restricting $\AAA$ to the suboperad \eqref{eqn:O_M}
and the $\HK(M)$-isomorphism $\AAA\vert_M\Rightarrow f^\ast(\AAA\vert_N)$ is defined
by the components $\AAA(f\vert_U): \AAA(U)\to \AAA(f(U))$, for all $U\in\O_M$,
where $f\vert_U : U\to f(U)$ denotes the domain and codomain restriction of the 
$\Loc^\rc$-morphism $f:M\to N$. The action of the functor \eqref{eqn:dc1}
on an $\AQFT^\rc$-morphism $\zeta: \AAA\Rightarrow \BBB$ is defined by restriction
\begin{flalign}
\dc(\zeta)\,:=\,\big\{\zeta\vert_M\big\}\,:\,\dc(\AAA)~\Longrightarrow~\dc(\BBB)
\end{flalign}
\end{subequations}
to the suboperads \eqref{eqn:O_M}.
The \textit{assembly functor}
\begin{subequations}\label{eqn:as}
\begin{flalign}\label{eqn:as1}
\as\,:\,\HK(\pt)~\longrightarrow~\AQFT^\rc
\end{flalign}
assigns to an object $\big(\{\AAA_M\},\{\alpha_f\}\big)\in\HK(\pt)$
the object in $\AQFT^\rc$ which is defined by the functor (recall the 
equivalence from Remark \ref{rem:AQFT})
\begin{flalign}
\as\big(\{\AAA_M\},\{\alpha_f\}\big)\,:\,\Loc^\rc~&\longrightarrow~ \Alg_{\mathsf{uAs}}(\TT)\quad,\\
\nn M~&\longmapsto~\AAA_M(M)\quad,\\
\nn \big(f:M\to N\big)~&\longmapsto~\Big(\xymatrix@C=3.5em{
\AAA_M(M)\ar[r]^-{(\alpha_f)_M}~&~\AAA_N(f(M))\ar[r]^-{\AAA_N(\iota_{f(M)}^N)}~&~\AAA_N(N)
}\Big)\quad.
\end{flalign}
The Einstein causality axiom for this functor
can be verified as in \cite[Construction 3.7]{BGSHaagKastler}.
The action of the functor \eqref{eqn:as1} on a
$\HK(\pt)$-morphism $\{\zeta_M\}:  \big(\{\AAA_M\},\{\alpha_f\}\big)\Rightarrow
\big(\{\BBB_M\},\{\beta_f\}\big)$ is given by the natural transformation 
which is defined by the components
\begin{flalign}
\as\big(\{\zeta_M\}\big)_M\,:=\,\Big(\xymatrix@C=3.5em{
\AAA_M(M)\ar[r]^-{(\zeta_M)_M}~&~\BBB_M(M)
}\Big)\quad,
\end{flalign}
\end{subequations}
for all $M\in\Loc^\rc$.
\begin{theo}\label{theo:dcasAQFT}
The decomposition \eqref{eqn:dc} and assembly \eqref{eqn:as} functors
are quasi-inverse to each other, hence they exhibit an equivalence of 
categories
\begin{flalign}
\AQFT^\rc\,\simeq\,\HK(\pt)\quad.
\end{flalign}
Furthermore, these functors preserve the respective time-slice axioms
and hence they induce also an equivalence of categories
\begin{flalign}
\AQFT^{\rc,W}\,\simeq\,\HK^W(\pt)\quad.
\end{flalign}
\end{theo}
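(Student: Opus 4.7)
The plan is to adapt the proof of the analogous equivalence $\AQFT \simeq \HK(\pt)$ from \cite{BGSHaagKastler} to our $\Loc^\rc$-setting, where the suboperads $\O_M$ are now restricted to causally convex opens $U\subseteq M$ which are Cauchy or relatively compact. Concretely, I would establish natural isomorphisms $\as\circ\dc\cong \id_{\AQFT^\rc}$ and $\dc\circ\as\cong \id_{\HK(\pt)}$ and then verify the time-slice statement.

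For the first isomorphism, given $\AAA\in\AQFT^\rc$, the assembled AQFT $\as(\dc(\AAA))$ has underlying object assignment $M\mapsto \AAA\vert_M(M) = \AAA(M)$, and on a $\Loc^\rc$-morphism $f:M\to N$ it returns the composite $\AAA(\iota_{f(M)}^N)\circ \AAA(f\vert_M) = \AAA(f)$ by functoriality. This gives an equality of underlying functors $\Loc^\rc\to \Alg_{\mathsf{uAs}}(\TT)$, and a direct inspection shows that the higher $n$-ary operations of $\O_{\ovr{\Loc^\rc}}$ are reproduced correctly because every operation $[\sigma,\und{f}]$ in $\O_{\ovr{\Loc^\rc}}$ factors through inclusions into the image, which are covered by the decomposition data on each target spacetime. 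Naturality in $\AQFT^\rc$-morphisms $\zeta:\AAA\Rightarrow \BBB$ is immediate from the definitions \eqref{eqn:dc} and \eqref{eqn:as}.

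For the second isomorphism, given $(\{\AAA_M\},\{\alpha_f\})\in\HK(\pt)$, I would construct the component at $M\in\Loc^\rc$ as the $\HK(M)$-isomorphism whose value at $U\in\O_M$ is the $\TT$-isomorphism $(\alpha_{\iota_U^M})_U:\AAA_U(U)\to \AAA_M(U)$, where I use that $\iota_U^M$ is a $\Loc^\rc$-morphism precisely because the objects of $\O_M$ are Cauchy or relatively compact in $M$ (this is where the restriction to $\Loc^\rc$ matters for the correctness of the adaptation). The cocycle condition on $\{\alpha_f\}$ from Remark \ref{rem:HKpoints} ensures compatibility with the inclusion-operations in $\O_M$ and with the comparison isomorphisms $\alpha_f$, giving a well-defined morphism in $\HK(\pt)$; naturality in $\HK(\pt)$-morphisms is again straightforward from the component-wise definition of $\as$ and $\dc$.

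For the time-slice statement, if $\AAA\in\AQFT^{\rc,W}$ and $\iota_U^V:U\to V$ is any Cauchy inclusion with $U,V\in\O_M$, then $\iota_U^V$ is a Cauchy morphism in $\Loc^\rc$, hence $\AAA(\iota_U^V)$ is an isomorphism and therefore $\AAA\vert_M\in \HK^W(M)$; this shows $\dc$ restricts to $\AQFT^{\rc,W}\to \HK^W(\pt)$. Conversely, for a Cauchy $\Loc^\rc$-morphism $f:M\to N$ and $(\{\AAA_M\},\{\alpha_f\})\in\HK^W(\pt)$, the inclusion $\iota_{f(M)}^N:f(M)\to N$ is a Cauchy inclusion inside $N$ (so the object $f(M)$ lies in $\O_N$ by the Cauchy-case analysis of Remark \ref{rem:Cauchyextension}), hence $\AAA_N(\iota_{f(M)}^N)$ is an isomorphism by the time-slice axiom for $\AAA_N$; combined with the isomorphism $(\alpha_f)_M$, the map $\as(\{\AAA_M\},\{\alpha_f\})(f)$ is an isomorphism. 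The main obstacle I anticipate is not conceptual but bookkeeping: one must carefully check that the suboperads $\O_M$ defined via inclusions behave well under the cocycle data $\{\alpha_f\}$ and that all verifications of multifunctoriality and Einstein causality for the assembled AQFT, which in \cite{BGSHaagKastler} used the full $\Loc$, still go through using only the Cauchy and relatively compact morphisms — which they do, thanks to the stability of these two classes under composition recorded in Remark \ref{rem:Locrccomposition}.
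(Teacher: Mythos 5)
Your proposal is correct and follows essentially the same route as the paper's proof: the identity $\as\circ\dc=\id$ on underlying data, the natural isomorphism $\dc\circ\as\cong\id$ with exactly the components $(\alpha_{\iota_U^M})_U:\AAA_U(U)\to\AAA_M(U)$, and the direct verification that both functors preserve the time-slice axiom (the paper defers the second isomorphism to \cite[Theorem 3.8]{BGSHaagKastler} and calls the time-slice check straightforward). Your explicit remarks on why $\iota_U^M$ and $\iota_{f(M)}^N$ remain $\Loc^\rc$-morphisms are precisely the bookkeeping needed to adapt that reference to the $\Loc^\rc$-setting.
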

\begin{proof}
One directly verifies that the composition $\as \circ \dc =\id $ is 
equal to the identity functor.
A natural isomorphism $\dc\circ \as\cong \id$ for the other composition
is constructed in \cite[Theorem 3.8]{BGSHaagKastler}.
Explicitly, given any object $\big(\{\AAA_M\},\{\alpha_f\}\big)\in\HK(\pt)$,
the $\HK(\pt)$-isomorphism
 $\dc\big(\as\big(\{\AAA_M\},\{\alpha_f\}\big)\big)\Rightarrow \big(\{\AAA_M\},\{\alpha_f\}\big)$
is defined by the components $(\alpha_{\iota_{U}^{M}})_U : \AAA_U(U)\to \AAA_M(U)$,
for all $U\in\O_M$.
The statement about the preservation of the time-slice axioms is straightforward to check.
\end{proof}

\subsection{\label{subsec:PFA}Prefactorization algebras}
Prefactorization algebras, as introduced by Costello and Gwilliam in 
\cite{CG1,CG2}, have a broader scope than AQFTs in the sense that they
can be defined in various geometric contexts, including topological, 
Riemannian and complex manifolds. In the context of globally hyperbolic
Lorentzian manifolds, which is the focus of our work, it was first observed
in \cite{GwilliamRejzner1} that the factorization products of a prefactorization 
algebra admit an interpretation in terms of the time-ordered products from
more traditional approaches to quantum field theory. This observation was sharpened
in \cite{BPScomparison} by proposing the concept of time-orderable tuples
(see also Definition \ref{def:Lorentz} (e)) as a Lorentzian geometric refinement 
of the tuples of mutually disjoint open subsets of a manifold used in \cite{CG1,CG2}.
The resulting Lorentzian geometric variant of prefactorization
algebras has been called \textit{time-orderable prefactorization algebras} 
in \cite{BPScomparison}. We will adopt this terminology in our present paper.
\sk

In this subsection we introduce some basic terminology and concepts
for time-orderable prefactorization algebras (tPFAs), following the 
structure of Subsection \ref{subsec:AQFT} for the case of AQFTs. 
The tPFAs in this work will be based
on the subcategory $\Loc^{\rc}\subseteq \Loc$ 
from Definition \ref{def:Locrc} and they are governed by the following operad.
\begin{defi}\label{def:tPFAoperad}
The \textit{tPFA operad} $\tP_{\Loc^\rc}$ is the 
colored operad which is defined by the following data:
\begin{itemize}
\item[(i)] The objects are the objects of $\Loc^\rc$.

\item[(ii)] The set of operations from $\und{M} = (M_1,\dots, M_n)$
to $N$ is the set
\begin{flalign}
\tP_{\Loc^\rc}^{}\big(\substack{N \\ \und{M}}\big)\,:=\,
\bigg\{\und{f}\in \prod_{i=1}^n\Loc^{\rc}(M_i,N)~\Big\vert~ \und{f} \text{ is time-orderable}\bigg\}
\end{flalign}
of all time-orderable tuples of $\Loc^\rc$-morphisms 
in the sense of Definition \ref{def:Lorentz} 
(e).\footnote{Our convention is that all empty tuples and all tuples of length $1$ are
time-orderable. Hence, the operad $\tP_{\Loc^\rc}$ contains for every object $N\in\Loc^\rc$
a unique arity $0$ operation $\varnothing \to N$ and to every $\Loc^\rc$-morphism
$f:M\to N$ is associated an arity $1$ operation which we will denote by the same symbol.}

\item[(iii)] The composition of $\und{f} : \und{M}\to N$ 
with $\und{g}_i : \und{K}_i\to M_i$, for $i=1,\dots,n$, is defined by
\begin{flalign}
 \und{f}\,\und{\und{g}} \,:=\, \big(f_1\,g_{11},\dots, f_1\,g_{1k_1},\dots, f_n\,g_{n1},\dots,f_{n}\,g_{n k_n}\big)\quad,
\end{flalign}
where the individual compositions are performed in the category $\Loc^\rc$.

\item[(iv)] The identity operations are $\id_N^{} : N\to N$.

\item[(v)] The permutation action of $\sigma^\prime\in\Sigma_n$ on $\und{f} : \und{M}\to N$ is given by
\begin{flalign}
\und{f}\sigma^\prime\, :\, \und{M}\sigma^{\prime}~\longrightarrow~ N\quad,
\end{flalign}
where $\und{f}\sigma^\prime = (f_{\sigma^\prime(1)},\dots,f_{\sigma^\prime(n)})$ 
and $\und{M}\sigma^{\prime}= (M_{\sigma^\prime(1)},\dots, M_{\sigma^{\prime}(n)})$ denote 
the permuted tuples.
\end{itemize}
Verifying that these data define an operad requires some technical
results for time-orderable tuples which have been proven in \cite[Lemma 4.3]{BPScomparison}.
\end{defi}
\begin{defi}\label{def:tPFA}
The category of \textit{tPFAs over $\Loc^\rc$} with values in a
bicomplete closed symmetric monoidal category $\TT$ is defined as the category
\begin{flalign}
\tPFA^\rc\,:=\, \Alg_{\tP_{\Loc^\rc}^{}}^{}\big(\TT\big)
\end{flalign}
of $\TT$-valued algebras over the tPFA operad $\tP_{\Loc^\rc}^{}$
from Definition \ref{def:tPFAoperad}. We denote by 
\begin{flalign}
\tPFA^{\rc,W}\,\subseteq\, \tPFA^\rc
\end{flalign}
the full subcategory consisting of all tPFAs satisfying the time-slice axiom,
i.e.\ the operad algebra (i.e.\ multifunctor)
$\FFF: \tP_{\Loc^\rc}^{}\to\TT$ sends every $1$-ary operation $f:M\to N$ 
in $\tP_{\Loc^\rc}^{}$ which corresponds to a Cauchy morphism in $\Loc^\rc$
to an isomorphism in $\TT$.
\end{defi}
\begin{rem}\label{rem:additivitytPFA}
The same conclusions of Remark \ref{rem:additivity} hold true for tPFAs. 
The analog of Proposition \ref{prop:additivity} for tPFAs
shows that the concept of tPFAs over $\Loc^\rc$ from Definition \ref{def:tPFA}
is a honest generalization of the additive tPFAs over $\Loc$ from the 
earlier comparison theorem in \cite{BPScomparison}. Furthermore, 
the analog of Proposition \ref{prop:additivity2} for tPFAs identifies 
the tPFAs over $\Loc^\rc$ coming from additive tPFAs over $\Loc$ 
by a similar additivity property. 
\sk

The key point to establish the tPFA-analogs of Propositions 
\ref{prop:additivity} and \ref{prop:additivity2} 
is again to observe that there exists an evident multifunctor
$\tP_{\Loc^\rc}\to \tP_{\Loc}$, where $\tP_{\Loc}$ is the
operad obtained by replacing in Definition \ref{def:tPFAoperad} 
the category $\Loc^\rc$ by the larger category $\Loc$. This multifunctor
is surjective on objects, hence the associated restriction functor
$(-)\vert^\rc : \tPFA\to \tPFA^{\rc}$ is faithful. 
In analogy with Proposition \ref{prop:additivity}, restricting this functor 
to the full subcategory of additive tPFAs, we obtain a fully faithful functor
\begin{flalign}\label{eqn:add-res-rc-tPFA}
\xymatrix{
(-)\vert^\rc\,:\,\tPFA^\mathrm{add}\ar[r]^-{\mathrm{f.f.}}~&~\tPFA^\rc\quad,
}
\end{flalign}
which means that additive tPFAs over $\Loc$ are a full subcategory of tPFAs over $\Loc^\rc$.
To prove this statement, we note that the
analog of the extended naturality square \eqref{eqn:extendednaturality}
is given in the present case by
\begin{flalign}\label{eqn:extendednaturalitytPFA}
\begin{gathered}
\xymatrix{
\ar[d]_-{\zeta_{\und{U}}}
\FFF(\und{U})\ar[r]^-{\FFF(\iota_{\und{U}}^{\und{M}})}~&~\ar[d]_-{\zeta_{\und{M}}}\FFF(\und{M}) 
\ar[r]^-{\FFF(\und{f})}~&~\FFF(N)\ar[d]_-{\zeta_N}\\
\GGG(\und{U})\ar[r]_-{\GGG(\iota_{\und{U}}^{\und{M}})}~&~\GGG(\und{M})\ar[r]_-{\GGG(\und{f})}~&~\GGG(N)
}
\end{gathered}\qquad,
\end{flalign}
where $\und{U}=(U_1,\dots,U_n)\in\mathbf{RC}_{\und{M}}$ is a tuple of objects $U_i\in\mathbf{RC}_{M_i}$
and the corresponding inclusion $\Loc^\rc$-morphisms  are denoted by
$\iota_{\und{U}}^{\und{M}} = (\iota_{U_1}^{M_1},\dots,\iota_{U_n}^{M_n})$.
To invoke universality of the colimit entering the additivity property,
one uses that 
\begin{flalign}
\nn \FFF(\und{M})\,&=\,\bigotimes_{i=1}^n\FFF(M_i)\, \cong \,
\bigotimes_{i=1}^n\colim_{U_i\in \mathbf{RC}_{M_i}}^{}\Big(\FFF(U_i)\Big)\\
\,&\cong\, \colim_{\und{U}\in\mathbf{RC}_{\und{M}}} \bigg(\bigotimes_{i=1}^n \FFF(U_i)\bigg)\,=\, 
\colim_{\und{U}\in\mathbf{RC}_{\und{M}}}\Big(\FFF(\und{U})\Big)\quad,
\end{flalign}
where the first isomorphism in the second line follows from our hypothesis that 
the symmetric monoidal category $\TT$ is closed, which implies 
that the monoidal product preserves colimits.
\sk

Furthermore, introducing the full subcategory 
$\tPFA^{\rc,\mathrm{add}} \subseteq \tPFA^\rc$ of additive tPFAs over 
$\Loc^\rc$ by mimicking the additivity property \eqref{eqn:additivity-rc} 
for AQFTs over $\Loc^\rc$ and using the same finality argument, one finds that 
the fully faithful functor \eqref{eqn:add-res-rc-tPFA} factors 
as the equivalence of categories 
\begin{flalign}\label{eqn:add-res-rc-add-tPFA}
\xymatrix{
(-)\vert^\rc\,:\,\tPFA^\mathrm{add}\ar[r]^-{\sim}~&~\tPFA^{\rc,\mathrm{add}}
}
\end{flalign}
followed by the full subcategory inclusion 
$\tPFA^{\rc,\mathrm{add}} \subseteq \tPFA^\rc$, which identifies the 
tPFAs over $\Loc^\rc$ which arise from additive tPFAs over $\Loc$ 
by a similar additivity property. To prove that \eqref{eqn:add-res-rc-add-tPFA} 
is indeed an equivalence of categories, one extends the proof strategy of Proposition 
\ref{prop:additivity2} to higher arity operations using again our hypothesis 
that the symmetric monoidal category $\TT$ is closed, 
which implies that the monoidal product preserves colimits.
\end{rem}

The Haag-Kastler $2$-functor machinery from \cite{BGSHaagKastler} 
can be adapted to the context of tPFAs, which allows us to decompose
a tPFA over $\Loc^\rc$ into a compatible family of tPFAs over individual 
spacetimes and, vice versa, assemble a compatible family of tPFAs 
over individual spacetimes into a tPFA over $\Loc^\rc$.
Since these techniques have not yet been spelled out in the literature,
we will do so in the remaining part of this subsection. We shall 
refer to the analog of the Haag-Kastler $2$-functor in the context
of (t)PFAs as the \textit{Costello-Gwilliam $2$-functor}.
This terminology is inspired by the fact that the main focus of \cite{CG1,CG2} is on (pre)factorization
algebras which are defined over a fixed manifold $M$, in analogy to
the original work \cite{HaagKastler} of Haag and Kastler on AQFT.
\begin{defi}\label{def:tP_M}
For each $M\in\Loc^\rc$, we denote by
\begin{flalign}\label{eqn:tP_M}
\tP_M\,\subseteq\,\tP_{\Loc^\rc}
\end{flalign}
the suboperad of the tPFA operad from Definition \ref{def:tPFAoperad}
whose objects are all $U\in \tP_{\Loc^\rc}$
corresponding to causally convex opens $U\subseteq M$ which
are either Cauchy or relatively compact and whose operations
are of the form $\iota_{\und{U}}^V :\und{U}\to V$,
where $\iota_{\und{U}}^V = (\iota_{U_1}^V,\dots,\iota_{U_n}^V)$ is a time-orderable
tuple of inclusion $\Loc^\rc$-morphisms. 
The category of \textit{tPFAs over $\Loc^\rc/M$} is defined as the category
\begin{flalign}
\CG(M)\,:=\,\Alg_{\tP_M}\big(\TT\big)
\end{flalign}
of $\TT$-valued algebras over this suboperad. We further denote by
\begin{flalign}
\CG^W(M)\,\subseteq\, \CG(M)
\end{flalign}
the full subcategory of all tPFAs over $\Loc^\rc/M$ satisfying the time-slice axiom.
\end{defi}

For any $\Loc^\rc$-morphism $f : M\to N$, there exists 
an evident multifunctor
(denoted with abuse of notation by the same symbol)
$f:\tP_M\to \tP_N$ which is given by taking images under $f$. This induces a pullback functor
$f^\ast : \CG^{(W)}(N)\to \CG^{(W)}(M)$ between the corresponding 
categories of tPFAs (satisfying the time-slice axiom) over the individual spacetimes.
These data assemble into a strict $2$-functor, called \textit{Costello-Gwilliam $2$-functor}
\begin{flalign}\label{eqn:CG2functor}
\CG^{(W)}\,:\,\big(\Loc^\rc\big)^\op~\longrightarrow~\CAT \quad,
\end{flalign}
which assigns to each $M\in\Loc^\rc$ the corresponding
category $\CG^{(W)}(M)\in \CAT$ of tPFAs over $\Loc^\rc/M$ (satisfying the time-slice axiom)
and to each $\Loc^\rc$-morphism $f:M\to N$ the pullback functor
$\CG^{(W)}(f) := f^\ast : \CG^{(W)}(N)\to \CG^{(W)}(M)$.
\begin{defi}\label{def:CGpoints}
The \textit{category of points} of the Costello-Gwilliam $2$-functor
$\CG^{(W)}$ is defined as the bicategorical limit
\begin{flalign}
\CG^{(W)}(\pt)\,:=\, \mathrm{bilim}\Big(\CG^{(W)}:\big(\Loc^\rc\big)^\op \to \CAT\Big)\,\in\,\CAT\quad.
\end{flalign}
\end{defi}
\begin{rem}\label{rem:CGpoints}
In analogy to Remark \ref{rem:HKpoints}, 
the category of points $\CG^{(W)}(\pt)$ admits the following explicit description
in terms of compatible families of tPFAs:
\begin{itemize}
\item An object in $\CG^{(W)}(\pt)$ is a tuple $\big(\{\FFF_M\},\{\phi_f\}\big)$
consisting of a family of tPFAs (satisfying the time-slice axiom) over individual spacetimes
$\FFF_M\in\CG^{(W)}(M)$, for all $M\in\Loc^\rc$, and a family of $\CG^{(W)}(M)$-isomorphisms
$\phi_f : \FFF_M\Rightarrow f^\ast(\FFF_N)$, for all $\Loc^\rc$-morphisms $f:M\to N$,
such that the diagrams
\begin{flalign}
\begin{gathered}
\xymatrix@C=3em{
\ar@{=>}[d]_-{\phi_{gf}}\FFF_M\ar@{=>}[r]^-{\phi_f}~&~f^\ast(\FFF_N)\ar@{=>}[d]^-{f^\ast(\phi_g)}  
~&~\FFF_M \ar@{=>}[r]^-{\phi_{\id_M}}\ar@{=>}[rd]_-{\id_{\FFF_M}}~&~\id_M^\ast(\FFF_M)\ar@{=}[d]\\
(gf)^\ast(\FFF_L)\ar@{=}[r] ~&~ f^\ast g^\ast(\FFF_L)  ~&~ ~&~\FFF_M
}
\end{gathered}
\end{flalign}
in $\CG^{(W)}(M)$ commute, for all composable $\Loc^\rc$-morphisms 
$f:M\to N$ and $g: N\to L$ and all objects $M\in \Loc^\rc$.

\item A morphism $\{\zeta_M\}: \big(\{\FFF_M\},\{\phi_f\}\big)\Rightarrow \big(\{\GGG_M\},\{\chi_f\}\big)$
in $\CG^{(W)}(\pt)$ is a family of $\CG^{(W)}(M)$-morphisms $\zeta_M : \FFF_M\Rightarrow\GGG_M$,
for all $M\in\Loc^\rc$, such that the diagram
\begin{flalign}
\begin{gathered}
\xymatrix@C=3em{
\ar@{=>}[d]_-{\zeta_M}\FFF_M \ar@{=>}[r]^-{\phi_f}~&~f^\ast(\FFF_N)\ar@{=>}[d]^-{f^\ast(\zeta_N)}\\
\GGG_M \ar@{=>}[r]_-{\chi_f}~&~f^\ast(\GGG_N)
}
\end{gathered}
\end{flalign}
in $\CG^{(W)}(M)$ commutes, for all $\Loc^\rc$-morphisms $f:M\to N$. \qedhere
\end{itemize}
\end{rem}

Similarly to the case of AQFTs in \eqref{eqn:dc} and \eqref{eqn:as}, there exist 
decomposition and assembly functors for tPFAs, which we shall denote by the same symbols
as it will be clear from the context if these functors act on AQFTs or tPFAs.
The \textit{decomposition functor}
\begin{subequations}\label{eqn:dctPFA}
\begin{flalign}\label{eqn:dctPFA1}
\dc\,:\,\tPFA^{\rc} ~\longrightarrow~\CG(\pt)
\end{flalign}
assigns to an object $\FFF\in\tPFA^\rc$ the object
in $\CG(\pt)$ which is specified by the tuple
\begin{flalign} 
\dc(\FFF)\,:=\,\big(\big\{\FFF\vert_M\big\},\big\{\FFF\vert_M\Rightarrow f^\ast(\FFF\vert_N)\big\}\big)\quad,
\end{flalign}
where $\FFF\vert_M\in \CG(M)$ is defined by restricting $\FFF$ to the suboperad \eqref{eqn:tP_M}
and the $\CG(M)$-isomorphism $\FFF\vert_M\Rightarrow f^\ast(\FFF\vert_N)$ is defined
by the components $\FFF(f\vert_U): \FFF(U)\to \FFF(f(U))$, for all $U\in\tP_M$,
where $f\vert_U : U\to f(U)$ denotes the domain and codomain restriction of the 
$\Loc^\rc$-morphism $f:M\to N$. The action of the functor \eqref{eqn:dctPFA1}
on a $\tPFA^\rc$-morphism $\zeta: \FFF\Rightarrow \GGG$ is defined by restriction
\begin{flalign}
\dc(\zeta)\,:=\,\big\{\zeta\vert_M\big\}\,:\,\dc(\FFF)~\Longrightarrow~\dc(\GGG)
\end{flalign}
\end{subequations}
to the suboperads \eqref{eqn:tP_M}.
The \textit{assembly functor}
\begin{subequations}\label{eqn:astPFA}
\begin{flalign}\label{eqn:astPFA1}
\as\,:\,\CG(\pt)~\longrightarrow~\tPFA^\rc
\end{flalign}
assigns to an object $\big(\{\FFF_M\},\{\phi_f\}\big)\in\CG(\pt)$
the object in $\tPFA^\rc$ which is defined by the multifunctor 
\begin{flalign}
\as\big(\{\FFF_M\},\{\phi_f\}\big)\,:\,\tP_{\Loc^\rc}~&\longrightarrow~ \TT\quad,\\
\nn M~&\longmapsto~\FFF_M(M)\quad,\\
\nn \big(\und{f}:\und{M}\to N\big)~&\longmapsto~\Big(\xymatrix@C=3.5em{
\FFF_{\und{M}}(\und{M})\ar[r]^-{(\phi_{\und{f}})_{\und{M}}}~&~\FFF_N(\und{f(M)})\ar[r]^-{\FFF_N(\iota_{\und{f(M)}}^N)}~&~\FFF_N(N)
}\Big)\quad,
\end{flalign}
where $(\phi_{\und{f}})_{\und{M}}:= \bigotimes_{i=1}^n (\phi_{f_i})_{M_i} :
\FFF_{\und{M}}(\und{M}) := \bigotimes_{i=1}^n\FFF_{M_i}(M_i)\to \bigotimes_{i=1}^n\FFF_{N}(f_i(M_i))=:
\FFF_N(\und{f(M)})$. 
The action of the functor \eqref{eqn:astPFA1} on a 
$\CG(\pt)$-morphism $\{\zeta_M\}:  \big(\{\FFF_M\},\{\phi_f\}\big)\Rightarrow
\big(\{\GGG_M\},\{\chi_f\}\big)$ is given by the multinatural transformation 
which is defined by the components
\begin{flalign}
\as\big(\{\zeta_M\}\big)_M\,:=\,\Big(\xymatrix@C=3.5em{
\FFF_M(M)\ar[r]^-{(\zeta_M)_M}~&~\GGG_M(M)
}\Big)\quad,
\end{flalign}
\end{subequations}
for all $M\in\Loc^\rc$.
\begin{theo}\label{theo:dcastPFA}
The decomposition \eqref{eqn:dctPFA} and assembly \eqref{eqn:astPFA} functors
are quasi-inverse to each other, hence they exhibit an equivalence of 
categories
\begin{flalign}
\tPFA^\rc\,\simeq\,\CG(\pt)\quad.
\end{flalign}
Furthermore, these functors preserve the respective time-slice axioms
and hence they induce also an equivalence of categories
\begin{flalign}
\tPFA^{\rc,W}\,\simeq\,\CG^W(\pt)\quad.
\end{flalign}
\end{theo}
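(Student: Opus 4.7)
The plan is to mirror the proof strategy of Theorem \ref{theo:dcasAQFT}, adapted to the operadic setting of tPFAs. I expect that one composition $\as \circ \dc = \id$ holds on the nose, while the other composition $\dc \circ \as \cong \id$ is witnessed by a non-trivial natural isomorphism built from the cocycle data $\{\phi_f\}$. The preservation of the time-slice axioms will then follow by a direct unpacking of definitions.

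First, I would verify $\as \circ \dc = \id$ on $\tPFA^\rc$. Given $\FFF \in \tPFA^\rc$, the composite assigns to $M \in \Loc^\rc$ the object $\FFF\vert_M(M) = \FFF(M)$, and to a time-orderable tuple $\und{f} : \und{M} \to N$ the morphism $\FFF_N(\iota_{\und{f(M)}}^N) \circ \bigotimes_i \FFF(f_i\vert_{M_i})$. Using operadic composition in $\tP_{\Loc^\rc}$, which factorises $\und{f} = \iota_{\und{f(M)}}^N \circ \und{f\vert_{\und{M}}}$, this coincides with $\FFF(\und{f})$, so the assembled multifunctor equals $\FFF$. Compatibility with morphisms is immediate from the restriction-by-components definition of $\dc$.

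Next, I would construct the natural isomorphism $\eta : \dc \circ \as \Rightarrow \id$. Given $(\{\FFF_M\},\{\phi_f\}) \in \CG(\pt)$, define $\eta_{(\{\FFF_M\},\{\phi_f\})}$ to have, at each $M \in \Loc^\rc$, the morphism $\CG(M) \ni (\dc\circ\as)(\cdot)\vert_M \Rightarrow \FFF_M$ whose component at $U \in \tP_M$ is the isomorphism $(\phi_{\iota_U^M})_U : \FFF_U(U) \to \FFF_M(U)$. Naturality over operations $\iota_{\und{U}}^V : \und{U} \to V$ in $\tP_M$ follows from combining (i) naturality of the $\CG(V)$-morphism $\phi_{\iota_V^M} : \FFF_V \Rightarrow (\iota_V^M)^\ast \FFF_M$ applied to $\iota_{\und{U}}^V$, and (ii) the associativity cocycle $\phi_{\iota_V^M \circ \iota_{U_i}^V} = (\iota_{U_i}^V)^\ast(\phi_{\iota_V^M}) \circ \phi_{\iota_{U_i}^V}$ from Remark \ref{rem:CGpoints}. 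Naturality over $\Loc^\rc$-morphisms $f : M \to N$ (i.e.\ compatibility of $\eta$ with the comparison isos $\phi_f$) uses the analogous cocycle identity $\phi_{f \circ \iota_U^M} = (\iota_U^M)^\ast(\phi_f) \circ \phi_{\iota_U^M}$. Naturality of $\eta$ in morphisms $\{\zeta_M\}$ of $\CG(\pt)$ is then inherited from the coherence square satisfied by $\{\zeta_M\}$ in Remark \ref{rem:CGpoints}.

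The main technical subtlety compared with Theorem \ref{theo:dcasAQFT} is that $\as$ now produces a genuine multifunctor rather than a functor valued in $\Alg_{\mathsf{uAs}}(\TT)$, so one must verify operad-compatibility for the multi-ary operations in $\tP_{\Loc^\rc}$. This amounts to checking that (a) the tensor products $(\phi_{\und{f}})_{\und{M}} := \bigotimes_i (\phi_{f_i})_{M_i}$ are compatible with operadic composition and with the $\Sigma_n$-equivariance from Definition \ref{def:tPFAoperad}(v), and (b) that causal disjointness does not enter since tPFAs only see time-orderable tuples; both facts are straightforward consequences of the symmetric monoidal structure of $\TT$ together with the cocycle conditions on $\{\phi_f\}$. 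Finally, the time-slice axiom is preserved by both $\dc$ and $\as$: $\dc$ preserves it because the Cauchy $1$-ary operations of $\tP_M$ are restrictions of Cauchy $1$-ary operations of $\tP_{\Loc^\rc}$; and $\as$ preserves it because any Cauchy $\Loc^\rc$-morphism $f : M \to N$ is sent to $\FFF_N(\iota_{f(M)}^N) \circ (\phi_f)_M$, a composition of two isomorphisms (the inclusion $\iota_{f(M)}^N$ being Cauchy whenever $f$ is). Combining these steps yields both equivalences of categories.
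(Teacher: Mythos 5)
Your proposal is correct and follows essentially the same route as the paper: it verifies $\as\circ\dc=\id$ directly via the operadic factorization of a time-orderable tuple, builds the natural isomorphism $\dc\circ\as\cong\id$ from exactly the components $(\phi_{\iota_U^M})_U$ used in the paper (which defers the coherence checks to the analogous Haag-Kastler argument), and checks time-slice preservation in the same straightforward way. The extra details you supply (multifunctoriality of the assembled object, the cocycle-based naturality checks) are precisely the routine verifications the paper leaves implicit.
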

\begin{proof}
One directly verifies that the composition $\as \circ \dc =\id $ is 
equal to the identity functor.
A natural isomorphism $\dc\circ \as\cong \id$ for the other composition
is constructed similarly to the one in the Haag-Kastler case \cite[Theorem 3.8]{BGSHaagKastler}.
Explicitly, given any object $\big(\{\FFF_M\},\{\phi_f\}\big)\in\CG(\pt)$,
the $\CG(\pt)$-isomorphism $\dc\big(\as\big(\{\FFF_M\},\{\phi_f\}\big)\big)\Rightarrow \big(\{\FFF_M\},\{\phi_f\}\big)$
is defined by the components $(\phi_{\iota_{U}^{M}})_U : \FFF_U(U)\to \FFF_M(U)$,
for all $U\in\tP_M$.
The statement about the preservation of the time-slice axioms is straightforward to check.
\end{proof}

%%%%%%%%%%%%%%%%%%%%%%%%%%%%%%%%%%%%%%%%%%%%%%%%
%%%%%%%%%%%%%%%%%%%%%%%%%%%%%%%%%%%%%%%%%%%%%%%%

\section{\label{sec:1categorical}A generalized 1-categorical equivalence theorem}
In this section we prove an equivalence theorem between AQFTs and tPFAs over $\Loc^\rc$,
both satisfying the time-slice axiom, in the case where the bicomplete 
closed symmetric monoidal target category $\TT$ is a $1$-category.
As a consequence of our observations in Remarks \ref{rem:additivity} and \ref{rem:additivitytPFA},
this result will be more general than the previous equivalence theorem 
in \cite{BPScomparison} which covers only additive theories. 
The main innovation of the present section is that we present a new
proof strategy, based on the Haag-Kastler/Costello-Gwilliam 
$2$-functor techniques from Section \ref{sec:prelim}, which 
admits a very useful generalization to the context where the target $\TT$ is a
symmetric monoidal model category. These homotopical aspects will be studied 
later in Sections \ref{sec:modelcategorical} and \ref{sec:spacetimewise}.
\sk

The key ingredient for our equivalence theorem
is the following multifunctor which allows us 
to compare AQFTs and tPFAs at the operadic level
from Definitions \ref{def:AQFToperad} and \ref{def:tPFAoperad}. The existence of this multifunctor 
has already been observed in \cite[Remark 5.2]{BPScomparison}.
\begin{defi}\label{def:comparisonoperadmorphism}
The \textit{tPFA/AQFT-comparison multifunctor} is defined by
\begin{flalign}\label{eqn:Phimorphism}
\Phi\,:\,\tP_{\Loc^{\rc}}~&\longrightarrow~\O_{\ovr{\Loc^{\rc}}}\quad,\\
\nn  M~&\longmapsto~M\quad,\\
\nn \big(\und{f}:\und{M}\to N\big)~&\longmapsto~\big([\rho^{-1},\und{f}]:\und{M}\to N\big)\quad,
\end{flalign}
where $\rho\in\Sigma_n$ is any choice\footnote{The AQFT operation
$[\rho^{-1},\und{f}]:\und{M}\to N$ is independent of the specific choice of time-ordering
permutation $\rho$ for the time-orderable tuple $\und{f}$. This is a consequence of \cite[Lemma 4.2 (iii)]{BPScomparison}
and the definition of the equivalence relation $\sim_\perp$ in the AQFT operad
from Definition \ref{def:AQFToperad} (ii).} 
of time-ordering permutation for the time-orderable tuple 
of $\Loc^\rc$-morphisms $\und{f} = (f_1:M_1\to N,\dots,f_n:M_n\to N)$.
\end{defi}

Pullback of operad algebras along the multifunctor \eqref{eqn:Phimorphism}
defines a functor
\begin{flalign}\label{eqn:Phipull}
\Phi^\ast\,:\, \AQFT^\rc ~\longrightarrow~\tPFA^\rc
\end{flalign}
which allows us to compare AQFTs and tPFAs over $\Loc^\rc$.
Since $\Phi$ preserves Cauchy morphisms, 
this pullback functor restricts to the full subcategories
\begin{flalign}\label{eqn:PhipullW}
\Phi^\ast\,:\, \AQFT^{\rc,W}~\longrightarrow~\tPFA^{\rc,W}
\end{flalign}
consisting of AQFTs and tPFAs that satisfy
the time-slice axiom from Definitions \ref{def:AQFT} and \ref{def:tPFA}.
\sk

The goal of this section is to prove that the functor 
\eqref{eqn:PhipullW} exhibits an equivalence of categories.
We use the Haag-Kastler/Costello-Gwilliam $2$-functor techniques
from Section \ref{sec:prelim} to reduce this problem  
to a family of simpler problems on individual spacetimes $M\in\Loc^\rc$.
Let us start with observing that \eqref{eqn:Phimorphism} restricts
to a family of multifunctors
\begin{flalign}\label{eqn:PhiM}
\Phi_M \,:\, \tP_M~\longrightarrow~\O_M\quad,
\end{flalign}
for all $M\in\Loc^\rc$, between the suboperads from Definitions \ref{def:O_M} and \ref{def:tP_M}.
This family is natural in the sense that the diagram
\begin{flalign}
\begin{gathered}
\xymatrix@C=3em{
\ar[d]_-{\Phi_M}\tP_M \ar[r]^-{f}~&~ \tP_N\ar[d]^-{\Phi_N}\\
\O_M \ar[r]_-{f}~&~\O_N
}
\end{gathered}
\end{flalign}
of multifunctors commutes, for all $\Loc^\rc$-morphisms $f:M\to N$,
where we recall that both $f: \tP_M\to \tP_N$ and $f:\O_M\to \O_N$ are defined 
by taking images under $f$. These data define via pullback the components of 
a $2$-natural transformation
\begin{flalign}\label{eqn:Phipull2functor}
\Phi^\ast \,:\,\HK~\Longrightarrow~\CG 
\end{flalign}
from the Haag-Kastler $2$-functor \eqref{eqn:HK2functor}
to the Costello-Gwilliam $2$-functor \eqref{eqn:CG2functor}.
Since the multifunctors in \eqref{eqn:PhiM} preserve 
the respective sets of Cauchy morphisms, \eqref{eqn:Phipull2functor}
restricts to a $2$-natural transformation
\begin{flalign}\label{eqn:Phipull2functorW}
\Phi^\ast \,:\,\HK^W~\Longrightarrow~\CG^W 
\end{flalign}
between the Haag-Kastler/Costello-Gwilliam $2$-functors which encode the time-slice axiom.
\begin{lem}\label{lem:comparisondiagram}
The diagrams of categories and functors 
\begin{flalign}\label{eqn:comparisondiagram}
\begin{gathered}
\xymatrix@C=3em{
\ar[d]_-{\dc} \AQFT^{\rc(,W)} \ar[r]^-{\Phi^\ast}~&~\tPFA^{\rc(,W)}\ar[d]^-{\dc}\\
\HK^{(W)}(\pt)\ar[r]_-{(\Phi^\ast)_\ast}~&~\CG^{(W)}(\pt)
}
\end{gathered}
\end{flalign}
commute, where $(\Phi^\ast)_\ast := \bilim(\Phi^\ast)$
denote the functors between the categories of points (see
Definitions \ref{def:HKpoints} and \ref{def:CGpoints})
which are induced by the $2$-natural transformations 
\eqref{eqn:Phipull2functor} and \eqref{eqn:Phipull2functorW}.
\end{lem}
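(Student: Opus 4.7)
The plan is to verify the commutativity of the diagram \eqref{eqn:comparisondiagram} by direct computation, unpacking the explicit descriptions of the decomposition functors from \eqref{eqn:dc} and \eqref{eqn:dctPFA} and of the induced functor $(\Phi^\ast)_\ast$ on categories of points. The computation reduces to a single structural compatibility between the global comparison multifunctor $\Phi:\tP_{\Loc^\rc}\to\O_{\ovr{\Loc^\rc}}$ from Definition \ref{def:comparisonoperadmorphism} and its spacetime-wise components $\Phi_M:\tP_M\to\O_M$ from \eqref{eqn:PhiM}. Since the time-slice variant is obtained by restriction to full subcategories and $\Phi$ preserves Cauchy morphisms, it suffices to treat the unadorned case and then observe that all functors involved restrict.

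The key observation is that, for each $M\in\Loc^\rc$, the square of multifunctors
\begin{flalign}
\begin{gathered}
\xymatrix@C=3em{
\ar[d]_-{\Phi_M}\tP_M \ar@{^{(}->}[r]~&~ \tP_{\Loc^\rc}\ar[d]^-{\Phi}\\
\O_M \ar@{^{(}->}[r]~&~\O_{\ovr{\Loc^\rc}}
}
\end{gathered}
\end{flalign}
commutes strictly, since both composites are determined on operations by the same choice of time-ordering permutation of inclusion morphisms. Given $\AAA\in\AQFT^\rc$, this implies the identity $\Phi^\ast(\AAA)\vert_M = \Phi_M^\ast(\AAA\vert_M)$ in $\CG(M)$, which matches the underlying families of spacetime-wise algebras in $\dc(\Phi^\ast(\AAA))$ and in $(\Phi^\ast)_\ast(\dc(\AAA))$.

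It remains to match the structure isomorphisms indexed by $\Loc^\rc$-morphisms $f:M\to N$. By construction, the component at $U\in\tP_M$ of the $\CG(M)$-isomorphism $\Phi^\ast(\AAA)\vert_M\Rightarrow f^\ast(\Phi^\ast(\AAA)\vert_N)$ is $\Phi^\ast(\AAA)(f\vert_U)$, which by definition of the pullback equals $\AAA(\Phi(f\vert_U)) = \AAA(f\vert_U)$, since $\Phi$ acts as the identity on $1$-ary operations up to the trivial permutation. This is precisely the $U$-component of the $\HK(M)$-isomorphism $\AAA\vert_M\Rightarrow f^\ast(\AAA\vert_N)$ entering $\dc(\AAA)$, which in turn is preserved verbatim by the pullback functor $\Phi_M^\ast$ defining $(\Phi^\ast)_\ast$. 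The action on morphisms $\zeta:\AAA\Rightarrow\BBB$ in $\AQFT^\rc$ is likewise settled by restriction, since both composites produce the family $\{\zeta\vert_M\}$ of spacetime-wise pullbacks. This verifies commutativity of \eqref{eqn:comparisondiagram}, and since every step respects the time-slice axiom, the $W$-decorated variant follows at once. The only mildly non-trivial point is the choice-independence of $\Phi$ on $n$-ary operations, which is already recorded in the footnote to Definition \ref{def:comparisonoperadmorphism} and hence presents no further obstacle.
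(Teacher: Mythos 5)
Your proof is correct and follows exactly the route the paper intends: the paper's proof is just the statement that this is "a simple check" using the explicit formulas for $\Phi$ and the decomposition functors, and you have carried out that check, matching the underlying spacetime-wise families, the structure isomorphisms (using that $\Phi$ is the identity on objects and sends $f$ to $[e,f]$ on $1$-ary operations), the action on morphisms, and the restriction to the time-slice variants. No gaps; this is the same argument, merely written out in full.
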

\begin{proof}
This is a simple check using the explicit expression for 
the tPFA/AQFT-comparison multifunctor $\Phi$ from Definition 
\ref{def:comparisonoperadmorphism} and the decomposition functors
given by \eqref{eqn:dc} and \eqref{eqn:dctPFA}.
\end{proof}

Using Theorems \ref{theo:dcasAQFT} and \ref{theo:dcastPFA},
we obtain that both vertical arrows in the diagram 
\eqref{eqn:comparisondiagram} are equivalences of categories.
This implies that our problem of proving that the top horizontal arrow
$\Phi^\ast:\AQFT^{\rc,W}\to\tPFA^{\rc,W}$ is an equivalence reduces
to proving that the bottom horizontal arrow $(\Phi^\ast)_\ast : \HK^W(\pt)\to \CG^W(\pt)$
between the categories of points is one. The latter would follow if we can show that
the $2$-natural transformation $\Phi^\ast:\HK^W \Rightarrow\CG^W$ between the Haag-Kastler/Costello-Gwilliam
$2$-functors is a $2$-natural equivalence, i.e.\ the components
\begin{flalign}\label{eqn:PhiMpull}
\Phi_M^\ast\,:\,\HK^W(M)~\longrightarrow~\CG^W(M)
\end{flalign}
are equivalences of categories, for all $M\in\Loc^\rc$.
We will now prove that is indeed the case.
\begin{theo}\label{theo:equivalenceM}
For each $M\in\Loc^\rc$, the functor \eqref{eqn:PhiMpull} 
exhibits an equivalence of categories. Even stronger,
this functor admits a strict inverse which will be constructed in the proof below.
\end{theo}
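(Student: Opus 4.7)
The plan is to construct an explicit strict inverse functor $\Psi_M : \CG^W(M) \to \HK^W(M)$ to $\Phi_M^\ast$. The core idea is that, although an $\O_M$-operation $[\sigma,\iota_{\und{U}}^V]$ may involve a tuple $\und{U}$ that is not time-orderable as inclusions into $V$, one can always replace each $U_i$ by a Cauchy refinement $\widetilde{U}_i \subseteq U_i$ such that the refined tuple $\iota_{\widetilde{\und{U}}}^V$ becomes time-orderable in any prescribed order. The tPFA value of $\FFF$ on this refinement, together with the inverses of its time-slice isomorphisms $\FFF(\iota_{\widetilde{U}_i}^{U_i})$, then uniquely determine the would-be AQFT operation on $\und{U}$.

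Concretely, for $\FFF \in \CG^W(M)$ I would set $\Psi_M(\FFF)(U) := \FFF(U)$ on objects and, for each AQFT operation $[\sigma,\iota_{\und{U}}^V]$ in $\O_M$, choose a refinement $\widetilde{\und{U}} = (\widetilde{U}_1,\dots,\widetilde{U}_n)$ with each inclusion $\iota_{\widetilde{U}_i}^{U_i}$ Cauchy and with $\sigma^{-1}$ a time-ordering permutation for $\iota_{\widetilde{\und{U}}}^V$, and then define
\begin{equation*}
\Psi_M(\FFF)\big([\sigma,\iota_{\und{U}}^V]\big) \,:=\, \FFF\big(\iota_{\widetilde{\und{U}}}^V\big) \,\circ\, \bigotimes_{i=1}^n \FFF\big(\iota_{\widetilde{U}_i}^{U_i}\big)^{-1},
\end{equation*}
where the inverses exist by the time-slice axiom for $\FFF$. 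This formula is forced by operadic associativity in $\O_M$, Definition \ref{def:comparisonoperadmorphism}, and the identity $\AAA\big([\sigma,\iota_{\widetilde{\und{U}}}^V]\big) = \FFF\big(\iota_{\widetilde{\und{U}}}^V\big)$ that must hold on a refinement time-orderable by $\sigma^{-1}$. Existence of admissible refinements reduces to a Lorentzian geometric lemma: for causally convex opens $U_1,\dots,U_n \subseteq V$ and any permutation $\sigma$, one obtains admissible $\widetilde{U}_i \subseteq U_i$ by thinning each $U_i$ around a Cauchy surface placed at increasing Cauchy time levels of $V$ in the order prescribed by $\sigma^{-1}$.

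The verifications I would carry out, in order, are: independence of the refinement choice, by showing that any two admissible refinements admit a common further admissible refinement and using the cocycle identity $\FFF(\iota_2\circ\iota_1) = \FFF(\iota_2)\circ\FFF(\iota_1)$ together with invertibility of Cauchy images; compatibility with the equivalence relation $\sim_\perp$, because a common refinement can be chosen to be simultaneously time-orderable by both $\sigma^{-1}$ and $\sigma'^{-1}$ whenever $[\sigma,\iota_{\und{U}}^V] \sim_\perp [\sigma',\iota_{\und{U}}^V]$, so that the defining formula returns the same morphism on both representatives; operadic composition and unitality, via compatible refinement choices through composition together with multifunctoriality of $\FFF$; and the time-slice axiom for $\Psi_M(\FFF)$, which is immediate since Cauchy $1$-ary operations of $\O_M$ are sent to time-slice isomorphisms of $\FFF$. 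For the two strict inverse identities, $\Phi_M^\ast\circ\Psi_M = \id$ follows by taking the trivial refinement $\widetilde{U}_i = U_i$ on time-orderable tuples, while $\Psi_M\circ\Phi_M^\ast = \id$ follows because the displayed formula is precisely the equation that operad associativity in $\O_M$ imposes on any AQFT $\AAA$.

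The main obstacle I expect is the well-definedness under refinement, which requires careful bookkeeping of time-slice isomorphisms along nested Cauchy refinements; once this is settled, the remaining checks reduce to straightforward operadic manipulations. This is the spacetime-wise analogue of the reconstruction in \cite{BPScomparison}, and the crucial simplification is the absence of any naturality across $\Loc^\rc$-morphisms to track, so the argument condenses to a direct adaptation of the single-spacetime constructions from \cite[Section 5]{BPScomparison}, combined with the Cauchy-extension techniques recalled in Remark \ref{rem:Cauchyextension}.
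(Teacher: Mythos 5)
Your construction hinges on the claim that for \emph{every} operation $[\sigma,\iota_{\und{U}}^V]$ in $\O_M$ one can shrink each $U_i$ along a Cauchy morphism $\widetilde{U}_i\subseteq U_i$ so that the tuple $\iota_{\widetilde{\und{U}}}^V$ becomes time-orderable in the order prescribed by $\sigma^{-1}$. This is false, and the failure is not a bookkeeping issue but a genuine geometric obstruction: the paper's own counterexample \eqref{eqn:crossingsurfaces} (used in the proof of Proposition \ref{prop:HinichCriterion}) exhibits two overlapping relatively compact causally convex opens $U_1,U_2\subseteq M$ such that \emph{every} pair of Cauchy surfaces $\Sigma_1\subset U_1$ and $\Sigma_2\subset U_2$ intersects. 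Since a Cauchy refinement $\widetilde{U}_i\subseteq U_i$ must contain a Cauchy surface of $U_i$, and a time-orderable pair must in particular be disjoint, no admissible refinement of the $\O_M$-operation $[\sigma,\iota_{(U_1,U_2)}^M]$ exists, so your defining formula for $\Psi_M(\FFF)$ cannot even be written down on such operations. (Your "thinning around Cauchy surfaces at increasing Cauchy time levels of $V$" lemma implicitly assumes that Cauchy surfaces of the $U_i$ can be placed at arbitrary time levels of $V$, which the causal shape of overlapping regions forbids.) Note that the AQFT operad from Definition \ref{def:O_M} contains operations on arbitrary, in general non-time-orderable, tuples of inclusions, which is precisely why the comparison is nontrivial; the emptiness of these "refinement spaces" is in fact the reason the $\infty$-categorical detection criteria fail in Section \ref{subsec:openproblem}.

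The paper avoids this by never attempting to define $\AAA_M$ on general $n$-ary operations through refinements of the given tuple. Instead it uses the equivalence of Remark \ref{rem:AQFT}: an object of $\HK^W(M)$ is determined by its restriction to $1$-ary operations (and $\Phi_M$ restricts to an isomorphism $\tP_M^1\cong\O_M^1$ there), together with a unital associative algebra structure on each $\FFF(U)$, naturality, and Einstein causality. The multiplication on $\FFF(U)$ is built from a time-ordered pair of \emph{Cauchy neighborhoods inside the single spacetime $U$} (which always exists) and the inverted time-slice maps, exactly as in \cite[Proposition 3.4]{BPScomparison}; naturality and Einstein causality are \cite[Lemma 3.5]{BPScomparison} and \cite[Lemma 3.9]{BPScomparison}, and the values on general operations are then recovered operadically from the $1$-ary maps and the multiplications, not from a refinement of the given tuple. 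Your formula is indeed forced on those operations whose tuple \emph{is} refinable, but to obtain a well-defined inverse you would have to reorganize your construction along these lines rather than defining it refinement-by-refinement.
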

\begin{proof}
The inverse functor $\CG^W(M)\to \HK^W(M)$ can be defined by using the
same constructions as in \cite[Section 3]{BPScomparison}. Let us start
with observing that \eqref{eqn:PhiM} restricts to
an isomorphism $\Phi_M^1 : \tP_M^1\stackrel{\cong}{\longrightarrow}\O_M^1$
on the subcategories of $1$-ary operations $\tP_M^1 \subseteq \tP_M$ and $\O_M^1\subseteq \O_M$. 
Given any tPFA $\FFF_M\in \CG^W(M)$, we will construct an AQFT $\AAA_M\in\HK^W(M)$ 
whose underlying functor is given by $\AAA_M := \FFF_M\vert^{1} \circ (\Phi_M^1)^{-1} : \O_M^1\to\TT$.
Using Remark \ref{rem:AQFT}, this amounts to 
1.)~defining a unital associative algebra structure
on $\AAA_M(U)$, for all $U\in \O_M^1$, 
2.)~verifying naturality of these algebra structures with respect
to $\O_M^1$-morphisms, and 3.)~verifying the Einstein causality axiom.
Step 1.)~is the content of \cite[Proposition 3.4]{BPScomparison}. Step
2.)~is carried out in \cite[Lemma 3.5]{BPScomparison} for the case
of relatively compact morphisms, and similar arguments apply to Cauchy morphisms.
Step 3.)~is carried out in \cite[Lemma 3.9]{BPScomparison}. 
The verification that the resulting functor $\CG^W(M)\to \HK^W(M)$
is inverse to \eqref{eqn:PhiMpull} is similar to \cite[Theorem 5.1]{BPScomparison}.
\end{proof}

Combining the above results, we can state and prove
the following global tPFA/AQFT equivalence theorem.
\begin{theo}\label{theo:1categoricalequivalence}
Let $\TT$ be any bicomplete closed symmetric monoidal $1$-category. 
The functor
\begin{flalign}
\Phi^\ast\,:\, \AQFT^{\rc,W}~\stackrel{\sim}{\longrightarrow}~\tPFA^{\rc,W}
\end{flalign}
given by pullback of operad algebras 
along the tPFA/AQFT-comparison multifunctor \eqref{eqn:Phimorphism}
exhibits an equivalence between the category $\AQFT^{\rc,W}$
of $\TT$-valued AQFTs over $\Loc^\rc$ satisfying the time-slice axiom 
(Definition \ref{def:AQFT}) and the category $\tPFA^{\rc,W}$
of $\TT$-valued tPFAs over $\Loc^\rc$ satisfying the time-slice axiom 
(Definition \ref{def:tPFA}).
\end{theo}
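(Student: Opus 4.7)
The proof plan is essentially a direct assembly of the preceding results, together with one standard fact about bicategorical limits. The key idea, articulated already in the paragraph preceding Theorem \ref{theo:equivalenceM}, is to use the commuting square \eqref{eqn:comparisondiagram} from Lemma \ref{lem:comparisondiagram} to reduce the global equivalence problem to a family of spacetime-wise problems that have already been solved.

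First, I invoke Lemma \ref{lem:comparisondiagram} with the time-slice axiom imposed, giving the commutative square
\begin{flalign*}
\begin{gathered}
\xymatrix@C=3em{
\ar[d]_-{\dc} \AQFT^{\rc,W} \ar[r]^-{\Phi^\ast}~&~\tPFA^{\rc,W}\ar[d]^-{\dc}\\
\HK^{W}(\pt)\ar[r]_-{(\Phi^\ast)_\ast}~&~\CG^{W}(\pt)
}
\end{gathered}
\end{flalign*}
in $\CAT$. By Theorem \ref{theo:dcasAQFT} the left vertical functor $\dc : \AQFT^{\rc,W} \to \HK^W(\pt)$ is an equivalence, and by Theorem \ref{theo:dcastPFA} the right vertical functor $\dc : \tPFA^{\rc,W} \to \CG^W(\pt)$ is an equivalence. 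By the $2$-out-of-$3$ property of equivalences of categories applied to this commuting square, it suffices to prove that the bottom horizontal functor $(\Phi^\ast)_\ast$ is an equivalence of categories.

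Next, I use that $(\Phi^\ast)_\ast$ is by construction the functor induced between bicategorical limits by the $2$-natural transformation $\Phi^\ast : \HK^W \Rightarrow \CG^W$ from \eqref{eqn:Phipull2functorW}. By Theorem \ref{theo:equivalenceM}, for every $M \in \Loc^\rc$ the component $\Phi^\ast_M : \HK^W(M) \to \CG^W(M)$ is an equivalence of categories (in fact admits a strict inverse). Hence $\Phi^\ast : \HK^W \Rightarrow \CG^W$ is a componentwise equivalence of $2$-functors $(\Loc^\rc)^\op \to \CAT$. Since the bicategorical limit $\bilim : \PsFun((\Loc^\rc)^\op, \CAT) \to \CAT$ is a $2$-functor which preserves componentwise equivalences of pseudofunctors (a standard fact about weighted bilimits in $\CAT$, as $\bilim$ can be computed as a bicategorical end of pseudonatural transformations, and pseudonatural equivalences between pseudofunctors induce equivalences on such bilimits), it follows that $(\Phi^\ast)_\ast = \bilim(\Phi^\ast)$ is an equivalence in $\CAT$.

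Combining the two steps, both the vertical arrows and the bottom horizontal arrow in the square are equivalences, hence so is the top arrow $\Phi^\ast : \AQFT^{\rc,W} \to \tPFA^{\rc,W}$. I do not expect any serious obstacle: the hard work has already been done in Lemma \ref{lem:comparisondiagram}, Theorems \ref{theo:dcasAQFT} and \ref{theo:dcastPFA}, and most importantly in Theorem \ref{theo:equivalenceM}, whose proof imports the spacetime-wise constructions from \cite{BPScomparison}. The only point worth spelling out carefully, and the closest thing to an obstacle, is the invocation that bicategorical limits of $\CAT$-valued pseudofunctors preserve componentwise equivalences; this could alternatively be sidestepped by using the explicit description of objects and morphisms of $\HK^W(\pt)$ and $\CG^W(\pt)$ in Remarks \ref{rem:HKpoints} and \ref{rem:CGpoints} to directly construct a quasi-inverse to $(\Phi^\ast)_\ast$ from the strict spacetime-wise inverses of Theorem \ref{theo:equivalenceM}, transporting the coherence data $\{\alpha_f\}$ along these inverses.
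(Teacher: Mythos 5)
Your proposal is correct and follows essentially the same route as the paper's proof: reduce via the commutative square of Lemma \ref{lem:comparisondiagram}, use Theorems \ref{theo:dcasAQFT} and \ref{theo:dcastPFA} for the vertical equivalences, and conclude that $(\Phi^\ast)_\ast$ is an equivalence because the spacetime-wise components are equivalences by Theorem \ref{theo:equivalenceM} and bicategorical limits send $2$-natural (componentwise) equivalences to equivalences. Your closing remark about constructing a quasi-inverse directly from the explicit descriptions in Remarks \ref{rem:HKpoints} and \ref{rem:CGpoints} is a reasonable alternative safeguard but is not needed beyond what the paper itself invokes.
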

\begin{proof}
Theorem \ref{theo:equivalenceM} shows that the spacetime-wise comparison
functor $\Phi_M^\ast : \HK^W(M)\to\CG^W(M)$ is an equivalence of categories,
for all $M\in\Loc^\rc$. This implies that the $2$-natural transformation
$\Phi^\ast : \HK^{W}\Rightarrow \CG^W$ between the Haag-Kastler and Costello-Gwilliam 
$2$-functors from \eqref{eqn:Phipull2functorW} is a $2$-natural equivalence,
hence its bicategorical limit $(\Phi^\ast)_\ast : \HK^{W}(\pt)\to \CG^W(\pt)$ is an equivalence of categories.
The result then follows from the commutative square in \eqref{eqn:comparisondiagram}
together with the fact that both vertical arrows are equivalences 
of categories by Theorems \ref{theo:dcasAQFT} and \ref{theo:dcastPFA}.
\end{proof}

%%%%%%%%%%%%%%%%%%%%%%%%%%%%%%%%%%%%%%%%%%%%%%%%
%%%%%%%%%%%%%%%%%%%%%%%%%%%%%%%%%%%%%%%%%%%%%%%%

\section{\label{sec:modelcategorical}Homotopical reduction to spacetime-wise problems}
In this section we consider AQFTs and tPFAs which take values in 
the category
\begin{flalign}
\TT\,:=\, \Ch_R
\end{flalign}
of cochain complexes of modules over a commutative, associative and unital algebra $R$ 
over a field $\bbK$ of characteristic zero. The 
motivation behind considering cochain complexes is to encode
homotopical phenomena of quantum field theories which are particularly present in gauge theories,
see e.g.\ \cite{FredenhagenRejzner,BBSLinearYM,BMScomparison} for AQFTs 
and \cite{CG1,CG2} for prefactorization algebras.
\sk

The crucial difference between the present context and Section \ref{sec:1categorical}
is that cochain complexes are higher-categorical objects which should be compared by 
quasi-isomorphisms instead of isomorphisms. This can be encoded by endowing
the category $\Ch_R$ with its projective model structure, see e.g.\ \cite[Chapter 2.3]{Hovey}
and \cite[Section 1.1]{Riehl}. Recall that in this model category a $\Ch_R$-morphism
$f : V\to W$ is a weak equivalence if it is a quasi-isomorphism,
a fibration if it is degree-wise surjective, and a cofibration if it has the left
lifting property against all acyclic fibrations. It is well-known that the projective
model structure is compatible with the standard closed symmetric monoidal structure on cochain complexes
in the sense that $\Ch_R$ is a \textit{symmetric monoidal model category}, see e.g.\ \cite[Chapter 4.2]{Hovey}.
For later use, let us note that the symmetric monoidal model category $\Ch_R$
is combinatorial and tractable, i.e.\ its underlying category is locally presentable
and the model structure is cofibrantly generated with generating (acyclic) cofibrations having cofibrant domains.
These niceness properties will become relevant below.
\sk

The goal of this section is to prove that the 
global homotopical equivalence problem 
for $\Ch_R$-valued AQFTs and tPFAs over $\Loc^\rc$
satisfying the homotopy time-slice axiom can be reduced to
a family of simpler spacetime-wise homotopical
equivalence problems for $\Ch_R$-valued AQFTs and tPFAs over $\Loc^\rc/M$
satisfying the homotopy time-slice axiom, for all $M\in\Loc^\rc$.

\subsection{\label{subsec:modelstructures}(Semi-)model structures}
In this subsection we endow the categories of $\TT=\Ch_R$-valued AQFTs and tPFAs
from Section \ref{sec:prelim} with suitable (semi-)model structures which allow 
us to describe homotopical phenomena of quantum field theories. 
\sk

The first, and simplest, kind of model structures that we require in our work are the standard
projective model structures on $\Ch_R$-valued algebras over operads. These 
exist under our hypothesis that the underlying field $\bbK\subseteq R$ is of characteristic $0$.
\begin{theo}[\cite{Hinich1,Hinich2}]\label{theo:projectivemodel}
For every colored operad $\Q$, the category $\Alg_\Q\big(\Ch_R\big)$
of $\Ch_R$-valued $\Q$-algebras carries the projective model structure in which a morphism
$\zeta: \AAA\Rightarrow \BBB$ is a weak equivalence (respectively, fibration) if each
component $\zeta_M : \AAA(M)\to \BBB(M)$ is a weak equivalence (respectively, fibration)
in $\Ch_R$, for all objects $M\in\Q$. A morphism is a cofibration if it
has the left lifting property against all acyclic fibrations. 
We denote the resulting projective model category by the same symbol
$\Alg_\Q\big(\Ch_R\big)$ and note that it is combinatorial and tractable.
\end{theo}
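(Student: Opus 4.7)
The plan is to establish this via Quillen's transfer theorem (also known as Kan's recognition principle), applied to the free-forgetful adjunction
\begin{flalign}
F\,:\,\prod_{c\,\in\,\mathrm{col}(\Q)}\Ch_R ~\rightleftarrows~ \Alg_\Q\big(\Ch_R\big)\,:\,U\quad,
\end{flalign}
where the source category carries the product of the (combinatorial, tractable, cofibrantly generated) projective model structures on $\Ch_R$. Since the weak equivalences and fibrations of the target will be \emph{defined} by being detected by $U$ component-wise, the only real work consists in verifying the hypotheses of the transfer theorem and confirming that the resulting model structure inherits combinatoriality and tractability.

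First I would record that the adjunction exists: the free $\Q$-algebra on a collection $\{V_c\}_{c\in\mathrm{col}(\Q)}$ is built from colimits of tensor products $\Q\big(\substack{c\\ \und{c}}\big)\otimes_{\Sigma} V_{c_1}\otimes\cdots\otimes V_{c_n}$, which are well-defined in the bicomplete category $\Ch_R$, and the forgetful functor $U$ preserves filtered colimits and reflexive coequalizers. This ensures that the transferred generating (acyclic) cofibrations $F(I)$ and $F(J)$, obtained by applying $F$ to the generating sets of the product model structure, admit the small object argument because domains and codomains of generators in $\Ch_R$ are small.

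The crucial step is to verify the acyclicity condition: every relative $F(J)$-cell complex must be a weak equivalence. This is where the hypothesis $\mathrm{char}(\bbK)=0$ is indispensable, and it is precisely the content of Hinich's theorem. The strategy is to analyze a pushout of the form $\AAA\leftarrow F(V)\to F(W)$ along a generating acyclic cofibration $V\to W$ in the product model structure. Such a pushout admits a natural filtration whose successive layers are built, as symmetric coinvariants, from tensor powers of the cofibration $V\to W$ smashed with operadic pieces $\Q\big(\substack{c\\\und{c}}\big)$. In characteristic zero, the symmetric group representations are semisimple, so coinvariants are homotopy-invariant, and the pushout-product axiom for $\Ch_R$ together with the monoid-like behaviour of acyclic cofibrations propagates the weak equivalence through each filtration stage; passing to the transfinite colimit preserves it. This is the main obstacle, and the only one genuinely requiring characteristic zero; everything else is formal. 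The statement that weak equivalences and fibrations are detected object-wise is then immediate from the definition of the transferred structure plus the fact that $U$ is a forgetful functor to a product category.

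Finally, for combinatoriality and tractability, I would note that the source $\prod_c \Ch_R$ is combinatorial and tractable, that locally presentable categories are stable under reflective, accessibly-embedded subcategories and under monadic adjunctions with accessible monads (here the free-$\Q$-algebra monad is finitary up to adding enough smallness), so $\Alg_\Q(\Ch_R)$ is locally presentable. Tractability is preserved by left transfer because $F$ sends cofibrant objects in $\prod_c\Ch_R$ (in particular the cofibrant domains of the generating (acyclic) cofibrations of the projective model structure) to objects whose underlying collections remain cofibrant after a careful inspection of the filtration above; the transferred generating sets $F(I)$ and $F(J)$ thus have cofibrant domains, which is exactly tractability.
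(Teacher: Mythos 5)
Your proposal is correct, and its skeleton --- transferring the product of projective model structures on $\prod_{c}\Ch_R$ along the free--forgetful adjunction $F \dashv U$ --- is the standard one; note, though, that the paper offers no proof of this statement and simply cites Hinich. In the cited references the acyclicity condition for the transfer is not verified by the cellular filtration you describe but by Quillen's path-object argument: every $\Q$-algebra is fibrant, and $\AAA\otimes R[t,\dd t]$ (polynomial de Rham forms on the $1$-simplex) is a functorial path object of $\Q$-algebras --- this is where $\mathbb{Q}\subseteq\bbK$ enters --- so the transferred structure exists without ever analysing pushouts of free maps. Your route, filtering a pushout along $F$ of a generating acyclic cofibration and using semisimplicity of $\bbK[\Sigma_n]$ to see that each associated-graded layer is a degreewise split monomorphism with contractible cokernel, is the admissibility argument in the style of Fresse/Harper; it is equally valid in characteristic zero, is longer and more explicit, and is closer to the arguments used in the literature that the paper relies on for the combinatoriality and tractability claims, which Hinich's path-object proof does not address. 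Two points could be tightened: tractability is immediate once the transfer is established, since $F$ is then left Quillen and the generating (acyclic) cofibrations of the tractable product structure have cofibrant domains, so no ``careful inspection of the filtration'' is needed; and local presentability follows directly from the fact that the free $\Q$-algebra monad is accessible (it preserves filtered colimits) on the locally presentable category $\prod_c\Ch_R$, with no appeal to reflective subcategories required.
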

An immediate but important consequence of such model structures 
is the following result, see e.g.\ \cite{Hinich1,Hinich2}.
\begin{propo}\label{prop:Quillenprojective}
For every multifunctor $F : \Q\to \P$, the adjunction
\begin{flalign}
\xymatrix{
F_!\,:\, \Alg_\Q\big(\Ch_R\big) \ar@<0.75ex>[r]~&~\ar@<0.75ex>[l]\Alg_\P\big(\Ch_R\big)\,:\,F^\ast
}\quad,
\end{flalign}
which is given by pullback of operad algebras $F^\ast$ and operadic left Kan extension $F_!$,
is a Quillen adjunction $F_! \dashv F^\ast$ 
between the projective model categories from Theorem \ref{theo:projectivemodel}.
\end{propo}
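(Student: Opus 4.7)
The plan is to verify that $F^\ast$ is a right Quillen functor, since this is equivalent to showing that $F_! \dashv F^\ast$ is a Quillen adjunction. The adjunction itself exists because $\Ch_R$ is bicomplete, so the operadic left Kan extension $F_!$ along the multifunctor $F:\Q\to\P$ is well-defined as the left adjoint to pullback. The key observation I would exploit is that the projective model structures from Theorem \ref{theo:projectivemodel} detect both fibrations and weak equivalences objectwise in $\Ch_R$, and that the pullback functor $F^\ast$ acts by precomposition with $F$ on the underlying object assignments, i.e.\ $F^\ast(\BBB)(M) = \BBB(F(M))$ for every $M\in\Q$.

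Concretely, I would argue as follows. Let $\zeta:\BBB\Rightarrow\BBB'$ be a fibration (respectively, acyclic fibration) in $\Alg_\P(\Ch_R)$. By Theorem \ref{theo:projectivemodel}, this means that each component $\zeta_N:\BBB(N)\to\BBB'(N)$ is a fibration (respectively, acyclic fibration) in $\Ch_R$, for every object $N\in\P$. In particular, this holds for all objects of the form $N=F(M)$ with $M\in\Q$. Since the $M$-component of $F^\ast(\zeta)$ is precisely $\zeta_{F(M)}:\BBB(F(M))\to\BBB'(F(M))$, it follows that every component of $F^\ast(\zeta)$ is a fibration (respectively, acyclic fibration) in $\Ch_R$. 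Invoking Theorem \ref{theo:projectivemodel} once more, this shows that $F^\ast(\zeta)$ is a fibration (respectively, acyclic fibration) in $\Alg_\Q(\Ch_R)$, completing the verification that $F^\ast$ is right Quillen.

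There is no substantive obstacle in this proof: the argument is essentially formal, and the main non-trivial content has already been absorbed into Theorem \ref{theo:projectivemodel}, which guarantees the existence of the projective model structures in the first place. The only point where one has to be slightly careful is the existence of the left adjoint $F_!$, but this is a standard consequence of bicompleteness of $\Ch_R$ together with the general theory of operadic left Kan extensions, and it is already implicit in the references \cite{Hinich1,Hinich2} cited for Theorem \ref{theo:projectivemodel}.
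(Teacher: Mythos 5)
Your proof is correct and follows exactly the standard argument the paper has in mind: the paper states this proposition without proof as an "immediate consequence" of Theorem \ref{theo:projectivemodel} (citing Hinich), and the intended reasoning is precisely your observation that $F^\ast$ acts by precomposition, hence preserves the componentwise-defined (acyclic) fibrations, while $F_!$ exists by operadic left Kan extension.
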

\begin{ex}\label{ex:projectivemodel}
Applying Theorem \ref{theo:projectivemodel} to the context of
Section \ref{sec:prelim}, we obtain the following model categories
of AQFTs and tPFAs:
\begin{itemize}
\item $\AQFT^\rc$ denotes the 
projective model category of $\Ch_R$-valued AQFTs over $\Loc^\rc$.

\item $\HK(M)$ denotes the 
projective model category of $\Ch_R$-valued AQFTs over $\Loc^\rc/M$.

\item $\tPFA^\rc$ denotes the 
projective model category of $\Ch_R$-valued tPFAs over $\Loc^\rc$.

\item $\CG(M)$ denotes the 
projective model category of $\Ch_R$-valued tPFAs over $\Loc^\rc/M$.
\end{itemize}
Using also Proposition \ref{prop:Quillenprojective}, 
the tPFA/AQFT-comparison multifunctor from Definition \ref{def:comparisonoperadmorphism}
defines a Quillen adjunction
\begin{subequations}\label{eqn:PhiQuillen}
\begin{flalign}
\xymatrix{
\Phi_!\,:\, \tPFA^\rc \ar@<0.75ex>[r]~&~\ar@<0.75ex>[l]\AQFT^\rc\,:\,\Phi^\ast
}
\end{flalign}
and its restrictions \eqref{eqn:PhiM} to individual spacetimes
define a family of Quillen adjunctions
\begin{flalign}
\xymatrix{
\Phi_{M\,!}\,:\, \CG(M) \ar@<0.75ex>[r]~&~\ar@<0.75ex>[l]\HK(M)\,:\,\Phi_M^\ast
}\quad,
\end{flalign}
\end{subequations}
for all $M\in\Loc^\rc$.
\end{ex}

The second kind of (semi-)model structures that we require in our work
are left Bousfield localizations (see e.g.\ \cite[Chapter 4.1]{Balchin} for 
an excellent survey) of the projective model categories from Theorem \ref{theo:projectivemodel}.
Such model structures are more subtle because they are only guaranteed to
exist under additional hypotheses on the original model structure, most notably left properness, 
see e.g.\ \cite[Proposition 4.1.4]{Balchin}. The projective model 
categories from Theorem \ref{theo:projectivemodel} 
are in general \textit{not} left proper\footnote{In the special case where $R=\bbK$ is a field of characteristic
zero, it was previously claimed in \cite[Proposition 4.10]{Carmona} that
the projective model structure on the category $\Alg_{\Q}(\Ch_\bbK)$ of $\Ch_\bbK$-valued
algebras over \textit{any} colored operad $\Q$ is left proper. This claim is in general not true
and it requires additional assumptions on the operad $\Q$.}, 
hence we can not apply the standard theory of left Bousfield localizations of model categories. 
A solution to this issue is to work within the more
flexible framework of \textit{semi-model categories},
see e.g.\ \cite{Barwick,Fresse,BataninWhite,CarmonaSemi}.
A semi-model category carries the same structures as a model category,
i.e.\ classes of weak equivalences, fibrations and cofibrations,
but these classes are required to satisfy slightly weaker lifting
and factorization axioms; more specifically, those are restricted to morphisms 
with cofibrant domain. See in particular \cite[Definition 2.1]{BataninWhite}
for a detailed description of the concept of semi-model category 
that we use in our work.
The main advantage of working in the context of semi-model categories is that
left Bousfield localizations exist under less restrictive
hypotheses \cite[Theorem A]{BataninWhite} which are satisfied by the 
projective model categories from Theorem \ref{theo:projectivemodel}.
\begin{theo}[\cite{BataninWhite,CFM,Carmona}]\label{theo:Bousfieldmodel}
Let $\Q$ be any colored operad and $W\subseteq \mathrm{Mor}^1(\Q)$
any subset of the set of $1$-ary operations in $\Q$.
Denote by $\Alg_\Q\big(\Ch_R\big)$
the projective model category of $\Ch_R$-valued $\Q$-algebras from 
Theorem \ref{theo:projectivemodel}.
\begin{itemize}
\item[(a)] There exists a subset $\widehat{W}\subseteq \mathrm{Mor}\big(\Alg_\Q\big(\Ch_R\big)\big)$
such that an object $\AAA\in \Alg_\Q\big(\Ch_R\big)$ is $\widehat{W}$-local if and only if
the multifunctor $\AAA : \Q\to \Ch_R$ sends every $1$-ary operation in 
$W$ to a weak equivalence in $\Ch_R$.

\item[(b)] The left Bousfield localization $\mathcal{L}_{\widehat{W}} \Alg_\Q\big(\Ch_R\big)$
of the projective model category $\Alg_\Q\big(\Ch_R\big)$ at the set of morphisms
$\widehat{W}$ from item (a) exists as a combinatorial and tractable semi-model category.
The fibrant objects in this semi-model category are precisely those
multifunctors $\AAA : \Q\to \Ch_R$ which send every $1$-ary operation in 
$W$ to a weak equivalence in $\Ch_R$.
\end{itemize}
\end{theo}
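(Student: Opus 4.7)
The plan is to leverage the free--forgetful adjunction between $\Alg_\Q(\Ch_R)$ and the $\mathrm{ob}(\Q)$-indexed product $\prod_{c \in \mathrm{ob}(\Q)}\Ch_R$, combined with the existence theorem for left Bousfield localizations of combinatorial tractable semi-model categories, which crucially does \emph{not} require left properness of the underlying model structure.

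For item~(a), I would encode each $1$-ary operation $(f:c\to d) \in W$ as a morphism between free $\Q$-algebras. Let $F_c \in \Alg_\Q(\Ch_R)$ denote the free $\Q$-algebra on a single generator $R$ placed in degree~$0$ in color $c$. The free--forgetful adjunction furnishes a natural isomorphism $\Hom_{\Alg_\Q(\Ch_R)}(F_c, \AAA) \cong Z^0(\AAA(c))$, so by the Yoneda lemma the natural map $\AAA(f):\AAA(c)\to\AAA(d)$ is represented by a morphism $\widehat{f}: F_d \to F_c$ of $\Q$-algebras. Since $R$ is cofibrant in $\Ch_R$ and the free algebra functor is left Quillen, each $F_c$ is already cofibrant in $\Alg_\Q(\Ch_R)$, so no further resolution is required. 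I would then set
\begin{flalign}
\widehat{W}\,:=\,\big\{\widehat{f}\,:\,F_d\to F_c ~\big\vert~ (f:c\to d)\in W\big\}\quad.
\end{flalign}
Since every object of $\Ch_R$ is fibrant, so is every object of $\Alg_\Q(\Ch_R)$, and the derived mapping space computes as $\mathrm{Map}^h(F_c,\AAA)\simeq \AAA(c)$ via a Dold--Kan-type framing. The characterization of $\widehat{W}$-local objects as those $\AAA$ inverting $W$ then follows at once.

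For item~(b), I would directly invoke~\cite[Theorem~A]{BataninWhite} (see also~\cite{CarmonaSemi,CFM}), which guarantees that the left Bousfield localization of a combinatorial tractable semi-model category at a set of morphisms exists again as a combinatorial tractable semi-model category, without any left properness hypothesis. The combinatoriality and tractability hypotheses of this result are precisely what is recorded in Theorem~\ref{theo:projectivemodel}, so it applies to $\Alg_\Q(\Ch_R)$. The fibrant object description is the standard universal property of left Bousfield localizations: fibrant objects in $\mathcal{L}_{\widehat{W}}\Alg_\Q(\Ch_R)$ are those that are fibrant in the underlying projective model structure (automatic here) and simultaneously $\widehat{W}$-local, which by part~(a) are precisely the multifunctors $\AAA:\Q\to\Ch_R$ sending every $1$-ary operation in $W$ to a weak equivalence. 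The only real subtlety in the whole argument is the fact that the resulting localization is only a semi-model category rather than a full model category, which is unavoidable given the failure of left properness but causes no problem once the Batanin--White machinery is in hand.
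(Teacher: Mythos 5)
There is a genuine gap in your construction of $\widehat{W}$ for item (a): you only use the free algebra $F_c$ on a generator in cohomological degree $0$, and you claim $\mathrm{Map}^h(F_c,\AAA)\simeq \AAA(c)$. This is not correct: a homotopy function complex is a simplicial set, and for unbounded complexes one has $\pi_n\,\mathrm{Map}^h(F_c,\AAA)\cong H^{-n}(\AAA(c))$ for $n\geq 0$ only, i.e.\ the mapping space is the Dold--Kan image of the truncation of $\AAA(c)$ to non-positive degrees, not of $\AAA(c)$ itself. Consequently, $\widehat{W}$-locality with your set only forces $\AAA(f)$ to induce isomorphisms on $H^{k}$ for $k\leq 0$, not a quasi-isomorphism. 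A concrete counterexample: let $\Q$ have two colors $c,d$ and a single non-identity $1$-ary operation $f:c\to d$ in $W$, and take the $\Q$-algebra with $\AAA(c)=0$ and $\AAA(d)=R$ concentrated in degree $+1$. Then both mapping spaces $\mathrm{Map}^h(F_c,\AAA)$ and $\mathrm{Map}^h(F_d,\AAA)$ are contractible, so $\AAA$ is local for your $\widehat{W}$, yet $\AAA(f):0\to R[\,\text{degree }1\,]$ is not a weak equivalence. Hence the "if and only if" in (a) fails with your choice of $\widehat{W}$.

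The repair is exactly what the paper does in Remark \ref{rem:Bousfieldmodel}: one must close the cells under all integer shifts, i.e.\ include for every $(f:c\to d)\in W$ and every $r\in\bbZ$ the morphism of cofibrant algebras freely generated in degree $r$ (in the paper's notation, $j_!\big(\mathsf{y}(f)[r]\big)$, which for $r=0$ is isomorphic to your $\widehat{f}:F_d\to F_c$ since both corepresent $Z^0(\AAA(-))$). With all shifts included, locality tests $H^{k}(\AAA(f))$ in every degree $k\in\bbZ$ and the characterization in (a) holds. Your treatment of item (b) — invoking the Batanin--White existence theorem for left Bousfield localizations of combinatorial tractable semi-model categories, with the fibrant objects being the (automatically projectively fibrant) $\widehat{W}$-local algebras — matches the intended argument and is fine once (a) is corrected.
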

\begin{rem}\label{rem:Bousfieldmodel}
For concreteness, let us note that the set of morphisms 
$\widehat{W}\subseteq \mathrm{Mor}\big(\Alg_\Q\big(\Ch_R\big)\big)$ 
from item (a) can be constructed explicitly as follows. Denote by $\Q^1\subseteq \Q$ the subcategory
of $1$-ary operations in the colored operad $\Q$. Using the covariant Yoneda embedding
\begin{flalign}
\mathsf{y}(-)\,:\,\big(\Q^1\big)^\op~\longrightarrow~\Fun\big(\Q^1,\Ch_R\big)~~,\quad 
M~\longmapsto~\mathsf{y}(M) \,=\, \Q^1(M,-)\otimes R \quad,
\end{flalign}
we can assign to a $1$-ary operation $f:M\to N$ in $\Q$ 
a natural transformation $\mathsf{y}(f) : \mathsf{y}(N)\Rightarrow\mathsf{y}(M)$
between functors from $\Q^1$ to $\Ch_R$. We can further shift the cohomological
degree by any integer $r\in\bbZ$ and consider the natural transformation
$\mathsf{y}(f)[r] : \mathsf{y}(N)[r]\Rightarrow\mathsf{y}(M)[r]$. 
Using operadic left Kan extension $j_! : \Fun\big(\Q^1,\Ch_R\big)\to \Alg_\Q\big(\Ch_R\big)$
along the inclusion multifunctor $j : \Q^1\hookrightarrow \Q$, we can define the
multinatural transformation between $\Ch_R$-valued $\Q$-algebras
\begin{flalign}\label{eqn:What}
j_!\big(\mathsf{y}(f)[r]\big)\, :\, j_!\big( \mathsf{y}(N)[r]\big)~\Longrightarrow~j_!\big(\mathsf{y}(M)[r]\big)\quad.
\end{flalign} 
The subset $\widehat{W}\subseteq \mathrm{Mor}\big(\Alg_\Q\big(\Ch_R\big)\big)$ 
consists of all morphisms of the form \eqref{eqn:What}, where the $1$-ary operation $(f:M\to N)\in W$
runs over the set $W$ and $r\in\bbZ$ runs over all integers.
\end{rem}
\begin{rem}
It is important to emphasize that the underlying category
of the left Bousfield localization $\mathcal{L}_{\widehat{W}}\Alg_\Q\big(\Ch_R\big)$ 
is precisely the same as the underlying category of the projective model category $\Alg_\Q\big(\Ch_R\big)$.
The mechanism how the semi-model category $\mathcal{L}_{\widehat{W}}\Alg_\Q\big(\Ch_R\big)$
selects those multifunctors $\AAA : \Q\to \Ch_R$ which send
every $1$-ary operation in $W$ to a weak equivalence in $\Ch_R$
is more indirect and sophisticated: 
In the semi-model category $\mathcal{L}_{\widehat{W}}\Alg_\Q\big(\Ch_R\big)$ 
one enlarges the class of weak equivalences
to what is called $\widehat{W}$-equivalences (see e.g.\ \cite[Definition 4.1.1]{Balchin})
and keeps the class of cofibrations the same as the projective cofibrations.
This necessarily reduces the class of fibrations, which are determined by suitable 
lifting properties as in \cite[Theorem 4.2]{BataninWhite}, such that the fibrant objects in 
$\mathcal{L}_{\widehat{W}}\Alg_\Q\big(\Ch_R\big)$ are precisely the 
multifunctors $\AAA : \Q\to \Ch_R$ sending 
every $1$-ary operation in $W$ to a weak equivalence in $\Ch_R$.
\end{rem}
The result in Proposition \ref{prop:Quillenprojective} admits the 
following upgrade to left Bousfield localizations.
\begin{propo}\label{prop:QuillenBousfield}
Let $\Q$ and $\P$ be two colored operads equipped with subsets 
$W\subseteq \mathrm{Mor}^1(\Q)$ and $S \subseteq \mathrm{Mor}^1(\P)$ of $1$-ary operations.	
Let $F : \Q\to \P$ be any multifunctor which preserves
these subsets, i.e.\ $F(W)\subseteq S$. Then the Quillen adjunction from Proposition 
\ref{prop:Quillenprojective} induces a Quillen adjunction 
\begin{flalign}
\xymatrix{
F_!\,:\, \mathcal{L}_{\widehat{W}}\Alg_\Q\big(\Ch_R\big) \ar@<0.75ex>[r]~&~\ar@<0.75ex>[l]
\mathcal{L}_{\widehat{S}}\Alg_\P\big(\Ch_R\big)\,:\,F^\ast
}
\end{flalign}
between the left Bousfield localized semi-model categories from Theorem \ref{theo:Bousfieldmodel}.
\end{propo}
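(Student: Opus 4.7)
The plan is to reduce the claim to the already-established projective-level Quillen adjunction $F_!\dashv F^\ast$ from Proposition \ref{prop:Quillenprojective} and then appeal to a descent criterion for left Bousfield localizations. In the setting of tractable semi-model categories from \cite{BataninWhite}, the usual Hirschhorn-style criterion is available: a Quillen adjunction between unlocalized semi-model categories descends to a Quillen adjunction between left Bousfield localizations whenever the right adjoint sends fibrant objects of the target localization to fibrant objects of the source localization. Since Theorem \ref{theo:Bousfieldmodel} guarantees that both $\mathcal{L}_{\widehat{W}}\Alg_\Q\big(\Ch_R\big)$ and $\mathcal{L}_{\widehat{S}}\Alg_\P\big(\Ch_R\big)$ are combinatorial and tractable semi-model categories obtained through this framework, this criterion applies in our situation and is exactly what one needs to verify.

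Concretely, I would check directly that $F^\ast$ preserves fibrant objects of the relevant localizations. By Theorem \ref{theo:Bousfieldmodel} (b), an object $\BBB\in\Alg_\P\big(\Ch_R\big)$ is fibrant in $\mathcal{L}_{\widehat{S}}\Alg_\P\big(\Ch_R\big)$ if and only if the underlying multifunctor $\BBB:\P\to\Ch_R$ sends every $1$-ary operation in $S$ to a weak equivalence in $\Ch_R$; note that no additional pointwise fibrancy condition enters, since the projective model structure on $\Ch_R$ has all objects fibrant. Given such a $\BBB$, the pullback $F^\ast(\BBB)=\BBB\circ F:\Q\to\Ch_R$ is a multifunctor, and for any $1$-ary operation $w\in W$ the hypothesis $F(W)\subseteq S$ yields $F(w)\in S$, so that $(F^\ast\BBB)(w)=\BBB(F(w))$ is a weak equivalence in $\Ch_R$. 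Invoking Theorem \ref{theo:Bousfieldmodel} (b) once more, $F^\ast(\BBB)$ is therefore fibrant in $\mathcal{L}_{\widehat{W}}\Alg_\Q\big(\Ch_R\big)$, which is the statement we needed.

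The main subtlety lies entirely in the choice of descent criterion: in the classical left proper model categorical setting one may invoke Hirschhorn's theorem directly, but the projective model structures from Theorem \ref{theo:projectivemodel} are not in general left proper, so one must replace Hirschhorn's argument by its semi-model categorical refinement from \cite{BataninWhite}. Once this refinement is accepted, the remaining verification is essentially formal, as shown above. An alternative proof route is to use the explicit description of $\widehat{W}$ from Remark \ref{rem:Bousfieldmodel} and to verify directly that $F_!$ sends each generating local morphism $j_!\big(\mathsf{y}(f)[r]\big)$ to the analogous morphism $j^\prime_!\big(\mathsf{y}(F(f))[r]\big)$ for the inclusion $j^\prime:\P^1\hookrightarrow\P$, which lies in $\widehat{S}$ by hypothesis; however, the fibrant object route is cleaner and avoids having to spell out the commutation of $F_!$ with operadic left Kan extensions along the inclusion of $1$-ary operations.
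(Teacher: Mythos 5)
Your proof is correct and follows essentially the same route as the paper: both reduce to the criterion that the right adjoint $F^\ast$ sends $\widehat{S}$-local (equivalently, locally fibrant, since all objects are projectively fibrant) algebras to $\widehat{W}$-local ones, which is immediate from $F(W)\subseteq S$. The only difference is bibliographic -- the paper invokes \cite[Lemma 3.5]{CarmonaCriterion} for this descent criterion in the combinatorial tractable semi-model setting rather than attributing it to \cite{BataninWhite} -- which does not affect the substance of the argument.
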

\begin{proof}
Using the characterization of Quillen adjunctions for left Bousfield
localizations of combinatorial and tractable semi-model categories
from \cite[Lemma 3.5]{CarmonaCriterion}, we have to prove that the right adjoint functor $F^\ast$ sends
every $\widehat{S}$-local object $\BBB\in \Alg_\P\big(\Ch_R\big)$ 
to a $\widehat{W}$-local object $F^\ast(\BBB)\in \Alg_\Q\big(\Ch_R\big)$.
Using item (a) of Theorem \ref{theo:Bousfieldmodel}, we have that $\BBB:\P\to\Ch_R$ sends
every $1$-ary operation in $S$ to a weak equivalence in $\Ch_R$, and we have to show that 
$F^\ast(\BBB)=\BBB\,F :\Q\to\Ch_R$ sends every $1$-ary operation in $W$ to a weak equivalence in $\Ch_R$.
This follows directly from our hypothesis that $F(W)\subseteq S$ preserves the subsets of $1$-ary operations.
\end{proof}
\begin{ex}\label{ex:Bousfieldmodel}
Applying Theorem \ref{theo:Bousfieldmodel} to the context of 
Section \ref{sec:prelim}, we obtain the following semi-model categories
whose fibrant objects are precisely the AQFTs or tPFAs which 
send every Cauchy morphism to a weak equivalence in $\Ch_R$, i.e.\ 
they satisfy the homotopy time-slice axiom:
\begin{itemize}
\item $\mathcal{L}_{\widehat{W}}\AQFT^\rc$ denotes the semi-model category
whose fibrant objects are $\Ch_R$-valued AQFTs over $\Loc^\rc$ 
satisfying the homotopy time-slice axiom.

\item $\mathcal{L}_{\widehat{W}_M}\HK(M)$ denotes the semi-model category
whose fibrant objects are $\Ch_R$-valued AQFTs over $\Loc^\rc/M$ 
satisfying the homotopy time-slice axiom.

\item $\mathcal{L}_{\widehat{W}}\tPFA^\rc$ denotes the semi-model category
whose fibrant objects are $\Ch_R$-valued tPFAs over $\Loc^\rc$ 
satisfying the homotopy time-slice axiom.

\item $\mathcal{L}_{\widehat{W}_M}\CG(M)$ denotes the semi-model category
whose fibrant objects are $\Ch_R$-valued tPFAs over $\Loc^\rc/M$ 
satisfying the homotopy time-slice axiom.
\end{itemize}
Using also Proposition \ref{prop:QuillenBousfield}, the tPFA/AQFT-comparison
Quillen adjunctions from Example \ref{ex:projectivemodel}
induce Quillen adjunctions
\begin{subequations}\label{eqn:PhiQuillenW}
\begin{flalign}
\xymatrix{
\Phi_!\,:\, \mathcal{L}_{\widehat{W}}\tPFA^\rc \ar@<0.75ex>[r]~&~\ar@<0.75ex>[l] \mathcal{L}_{\widehat{W}}\AQFT^\rc\,:\,\Phi^\ast
}
\end{flalign}
and
\begin{flalign}
\xymatrix{
\Phi_{M\,!}\,:\, \mathcal{L}_{\widehat{W}_M}\CG(M) \ar@<0.75ex>[r]~&~\ar@<0.75ex>[l]\mathcal{L}_{\widehat{W}_M}\HK(M)\,:\,\Phi_M^\ast
}\quad,
\end{flalign}
\end{subequations}
for all $M\in\Loc^\rc$, between the left Bousfield localized semi-model categories
which encode tPFAs and AQFTs satisfying the homotopy time-slice axiom.
\end{ex}

\subsection{\label{subsec:homotopicalpoints}Homotopical points of $2$-functors}
In this subsection we develop a homotopical refinement of the categories
of points from Definitions \ref{def:HKpoints} and \ref{def:CGpoints}.
In our present context, the Haag-Kastler \eqref{eqn:HK2functor}
and Costello-Gwilliam \eqref{eqn:CG2functor} $2$-functors do not 
assign ordinary categories in $\CAT$, but they assign the (semi-)model categories 
from Examples \ref{ex:projectivemodel} and \ref{ex:Bousfieldmodel}.
We take this additional structure into account by replacing the $2$-category
$\CAT$ with the following $2$-category.
\begin{defi}\label{def:Comb}
We denote by $\CombR$ the $2$-category whose objects are combinatorial
and tractable semi-model categories, morphisms are right Quillen functors 
and $2$-morphisms are natural transformations.
\end{defi}

The Haag-Kastler and Costello-Gwilliam $2$-functors obtained
from the (semi-)model categories in Examples \ref{ex:projectivemodel} 
and \ref{ex:Bousfieldmodel} are thus $2$-functors of the form
\begin{subequations}\label{eqn:HKCGCombR}
\begin{flalign}
&& \HK\,:~&\, \big(\Loc^\rc\big)^\op \longrightarrow~\CombR\quad , 
~& \CG\,:~&\, \big(\Loc^\rc\big)^\op \longrightarrow~\CombR\quad , &\\
&& \mathcal{L}_{\widehat{W}}\HK\,:~&\, \big(\Loc^\rc\big)^\op \longrightarrow~\CombR\quad, 
~& \mathcal{L}_{\widehat{W}}\CG\,:~&\, \big(\Loc^\rc\big)^\op \longrightarrow~\CombR\quad. &
\end{flalign}
\end{subequations}
The fact that these $2$-functors assign a right Quillen functor (the pullback functor $f^\ast$) 
to every $\Loc^\rc$-morphism $f:M\to N$ follows from 
Propositions \ref{prop:Quillenprojective} and \ref{prop:QuillenBousfield}.
Since (semi-)model categories are naturally 
compared by Quillen equivalences, in contrast
to equivalences of plain categories, one has to refine the 
bicategorical limits in the Definitions \ref{def:HKpoints}
and \ref{def:CGpoints} of the categories of points to
homotopy limits. In this subsection we will achieve this goal 
by adapting ideas of Barwick \cite[Application II]{Barwick}, 
extending them to the realm of semi-model categories 
and providing some additional technical results.
\sk

In order to avoid repetitions, let us consider in this subsection an arbitrary contravariant
$2$-functor $\mathsf{X} : \CC^\op\to \CombR$ from a small category $\CC$ to the 
$2$-category $\CombR$ from Definition \ref{def:Comb}.
As a first step towards computing the homotopy limit of this $2$-functor,
we introduce the category of right sections of $\mathsf{X}$, 
which can be also understood as the oplax $2$-limit of $\mathsf{X}$.
\begin{defi}\label{def:rightsections}
The \textit{category of right sections} $\SectR(\mathsf{X})$ 
of a $2$-functor $\mathsf{X} : \CC^\op\to \CombR$
consists of the following objects and morphisms:
\begin{itemize}
\item An object in $\SectR(\mathsf{X})$ is a tuple $(\{x_M\},\{\psi_f\})$
consisting of a family of objects $x_M\in \mathsf{X}(M)$, for all $M\in\CC$,
and a family of $\mathsf{X}(M)$-morphisms $\psi_f : x_M \to \mathsf{X}(f)(x_N)$,
for all $\CC$-morphisms $f:M\to N$, such that the diagrams
\begin{flalign}
\begin{gathered}
\xymatrix@C=3em{
\ar[d]_-{\psi_{gf}}x_M \ar[r]^-{\psi_f}~&~\mathsf{X}(f)(x_N)\ar[d]^-{\mathsf{X}(f)(\psi_g)}
~&~ \ar[dr]_-{\id_{x_M}}x_M \ar[r]^-{\psi_{\id_M}}~&~\mathsf{X}(\id_M)(x_M)\ar@{=}[d]\\
\mathsf{X}(gf)(x_L)\ar@{=}[r]~&~\mathsf{X}(f)\mathsf{X}(g)(x_L) ~&~~&~x_M
}
\end{gathered}
\end{flalign}
in $\mathsf{X}(M)$ commute, for all composable 
$\CC$-morphisms $f: M\to N$ and $g: N\to L$
and all objects $M\in\CC$.

\item A morphism $\{\zeta_M\} : (\{x_M\},\{\psi_f\})\to (\{x_M^\prime\},\{\psi_f^\prime\})$
in $\SectR(\mathsf{X})$ is a family of $\mathsf{X}(M)$-morphisms $\zeta_M : x_M\to x_M^\prime$,
for all $M\in\CC$, such that the diagram
\begin{flalign}
\begin{gathered}
\xymatrix@C=3em{
\ar[d]_-{\zeta_M} x_M \ar[r]^-{\psi_f}~&~ \mathsf{X}(f)(x_N) \ar[d]^-{\mathsf{X}(f)(\zeta_N)}\\
x_M^\prime \ar[r]_-{\psi^\prime_f}~&~\mathsf{X}(f)(x_N^\prime)
}
\end{gathered}
\end{flalign}
in $\mathsf{X}(M)$ commutes, for all $\CC$-morphisms $f: M\to N$.
\end{itemize}
\end{defi}

\begin{rem}\label{rem:SectR}
The reader may have recognized the similarity between 
$\SectR(\mathsf{X})$ and the categories of points
from Remarks \ref{rem:HKpoints} and \ref{rem:CGpoints}.
We however would like to emphasize the crucial difference
that the $\mathsf{X}(M)$-morphisms  $\psi_f : x_M \to \mathsf{X}(f)(x_N)$
in Definition \ref{def:rightsections} are 
\textit{not} required to be isomorphisms, in contrast to
the $\HK^{(W)}(M)$-\textit{iso}morphisms $\alpha_f : \AAA_M\Rightarrow f^\ast(\AAA_N)$
in $\HK^{(W)}(\mathrm{pt})$ and the $\CG^{(W)}(M)$-\textit{iso}morphisms 
$\phi_f : \FFF_M\Rightarrow f^\ast(\FFF_N)$
in $\CG^{(W)}(\mathrm{pt})$. 
Of course, $\HK^{(W)}(\pt)$ and $\CG^{(W)}(\pt)$ sit inside 
$\SectR(\HK^{(W)})$ and $\SectR(\CG^{(W)})$, respectively, 
as full subcategories.
\end{rem}

The following result is a generalization of \cite[Theorem 2.28]{Barwick}
to our context of semi-model categories.
\begin{propo}\label{prop:SectRprojective}
For any $2$-functor $\mathsf{X} : \CC^\op\to \CombR$ with $\CC$ a small $1$-category, the category
$\SectR(\mathsf{X})$ of right sections from Definition \ref{def:rightsections} 
admits a combinatorial and tractable semi-model category structure, called the projective 
semi-model structure, in which a morphism
$\{\zeta_M\} : (\{x_M\},\{\psi_f\})\to (\{x_M^\prime\},\{\psi_f^\prime\})$
is a weak equivalence (respectively, fibration) if each component
$\zeta_M : x_M\to x_M^\prime$ is a weak equivalence (respectively, fibration)
in $\mathsf{X}(M)$, for all $M\in\CC$. 
We denote the resulting projective semi-model category of 
right sections by the same symbol $\SectR(\mathsf{X})$.
For every cofibration $\{\zeta_M\} : 
(\{x_M\},\{\psi_f\})\to (\{x_M^\prime\},\{\psi_f^\prime\})$
with cofibrant domain in this semi-model structure, 
the components $\zeta_M : x_M\to x_M^\prime$
are cofibrations in $\mathsf{X}(M)$, for all $M\in\CC$. 
\end{propo}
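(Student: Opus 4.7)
The plan is to realize the projective semi-model structure on $\SectR(\mathsf{X})$ via a transfer argument along a forgetful functor, extending Barwick's model-categorical construction to the semi-model setting. First, I would note that there is a canonical forgetful functor $U : \SectR(\mathsf{X}) \to \prod_{M\in\CC}\mathsf{X}(M)$ which sends a right section $(\{x_M\},\{\psi_f\})$ to the underlying family $\{x_M\}$. Since $\CC$ is small and each $\mathsf{X}(M)$ is combinatorial and tractable, the product $\prod_M \mathsf{X}(M)$ carries the obvious componentwise combinatorial and tractable semi-model structure (coming from Theorem \ref{theo:projectivemodel} and its left Bousfield localizations via Theorem \ref{theo:Bousfieldmodel}). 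The functor $U$ admits a left adjoint $F$ that freely adjoins the lax structure: given $\{y_M\}\in \prod_M \mathsf{X}(M)$, the value $F(\{y_M\})$ at $M$ is the coproduct $\coprod_{(f: M\to N)\in \CC}\mathsf{X}(f)(y_N)$ (interpreted appropriately to account for composable chains), with the structure morphisms $\psi_f$ given by the canonical inclusions.

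Second, I would apply the transfer principle for combinatorial tractable semi-model categories, as developed in \cite{BataninWhite,CFM,CarmonaSemi}, to the adjunction $F \dashv U$. The generating (acyclic) cofibrations of $\SectR(\mathsf{X})$ are declared to be $F(I_M)$ and $F(J_M)$ ranging over $M\in\CC$, where $I_M, J_M$ are the generating (acyclic) cofibrations of $\mathsf{X}(M)$ (embedded into the product in the standard way). Combinatoriality and accessibility of $U$ are immediate from the small-object argument; tractability follows because each $F(i)$ has cofibrant domain in $\SectR(\mathsf{X})$ when $i$ has cofibrant domain in some $\mathsf{X}(M)$, as one checks directly on the explicit formula for $F$ using that each $\mathsf{X}(f)$ is a right Quillen functor (hence its left adjoint preserves cofibrant objects, but here one rather uses that free cells at level $M$ evaluate at $N$ to a cofibrant object built from the $\mathsf{X}(f)$-image for $f : N \to M$).

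The main obstacle will be the acyclicity step: verifying that every relative $F(J)$-cell complex with cofibrant source is a levelwise weak equivalence in each $\mathsf{X}(M)$. This is precisely the point where tractability and the semi-model framework become essential. The argument uses that each $\mathsf{X}(f)$ is a right Quillen functor, hence preserves acyclic fibrations and, by Ken Brown's lemma applied in the semi-model context \cite[Lemma 1.1.12]{Fresse}, preserves weak equivalences between fibrant objects; dually, one analyzes the evaluation of cellular pushouts and transfinite compositions degree by degree, and tracks that each pushout along $F(j)$, $j\in J_M$, yields a levelwise acyclic cofibration because $j$ itself has cofibrant domain. Here the restriction to cofibrant sources is exactly the semi-model weakening of the standard transfer hypotheses and reflects the potential failure of left properness noted in the discussion preceding Theorem \ref{theo:Bousfieldmodel}.

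Finally, for the claim about cofibrations with cofibrant domain having componentwise cofibrant values, I would run a standard cell-attachment induction. Any such cofibration is a retract of a relative $F(I)$-cell complex starting from a cofibrant object, and evaluation at $M\in\CC$ commutes with the transfinite colimits and pushouts involved. Each cellular pushout of an $F(i)$ at level $M$ becomes, by the explicit formula for $F$, a pushout of generating cofibrations of $\mathsf{X}(M)$ together with images of generating cofibrations of $\mathsf{X}(N)$ under $\mathsf{X}(f)$ for the various $f: M\to N$; since left Quillen functors preserve cofibrations between cofibrant objects (Ken Brown's lemma again, applied to the left adjoints of the $\mathsf{X}(f)$, or more directly since $\mathsf{X}(f)$ being right Quillen is irrelevant here and what matters is the cofibrant cell structure), the evaluation at $M$ is a cofibration in the semi-model category $\mathsf{X}(M)$, as required.
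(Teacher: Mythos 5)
Your overall strategy --- transferring the componentwise semi-model structure along the forgetful functor $U: \SectR(\mathsf{X})\to\prod_{M\in\CC}\mathsf{X}(M)$, with generating cells $F(I)$ and $F(J)$ and tractability supplying cofibrant domains --- is essentially the paper's. However, your explicit formula for the left adjoint $F$ is wrong, and the error propagates into the one step that actually carries the proof. For \emph{right} sections the structure maps go $\psi_f : x_M \to \mathsf{X}(f)(x_N)$, so the free right section on an object $y$ concentrated at the colour $M$ has $N$-component $\coprod_{f\in\CC(M,N)}\mathsf{X}^\dagger(f)(y)$, where $\mathsf{X}^\dagger(f) : \mathsf{X}(M)\to \mathsf{X}(N)$ is the \emph{left} adjoint of the right Quillen functor $\mathsf{X}(f)$; the adjunction bijection is obtained by composing a map $y\to x_M$ with $\psi_f$ and transposing. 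Your formula, which puts $\coprod_{f: M\to N}\mathsf{X}(f)(y_N)$ in the $M$-slot, is the free functor for \emph{left} sections (structure maps $\mathsf{X}(f)(x_N)\to x_M$) and does not satisfy the adjunction $F\dashv U$ for $\SectR(\mathsf{X})$.

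This is not a cosmetic slip, because the decisive verification in the transfer --- that $UF$ sends generating (acyclic) cofibrations to componentwise (acyclic) cofibrations, and likewise the final claim that cofibrations with cofibrant domain are componentwise cofibrations --- works precisely because the cells of a free right section are $\mathsf{X}^\dagger(f)$-images of generating (acyclic) cofibrations with cofibrant domains, and left Quillen functors between semi-model categories preserve (acyclic) cofibrations with cofibrant domains; this is exactly where tractability is used. Your acyclicity paragraph instead appeals to the $\mathsf{X}(f)$ being right Quillen (preservation of acyclic fibrations, Ken Brown's lemma for weak equivalences between fibrant objects), which gives no control over cofibrations: right Quillen functors need not preserve them, so with your formula for $F$ the componentwise check would fail. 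The repair is to correct the formula (or, as the paper does, obtain $F$ abstractly from the adjoint functor theorem and only compute $p_N U F p_M^\dagger(i)\cong \coprod_{f\in\CC(M,N)}\mathsf{X}^\dagger(f)(i)$ on generators), and to replace the Ken Brown argument by the left-Quillen preservation of (acyclic) cofibrations with cofibrant domains; with those changes your argument coincides with the paper's proof.
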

\begin{proof}
Let us start with observing that there exists an adjunction
\begin{flalign}
\xymatrix{
F\,:\, \prod\limits_{M\in\CC}\mathsf{X}(M) \ar@<0.75ex>[r]~&~\ar@<0.75ex>[l]\SectR(\mathsf{X})\,:\,U
}
\end{flalign}
whose right adjoint functor $U:\SectR(\mathsf{X})\to \prod_{M\in\CC}\mathsf{X}(M)$ 
assigns the underlying components of right sections, i.e.\ $(\{x_M\},\{\psi_f\})\mapsto\{x_M\} $
and $\{\zeta_M\}\mapsto \{\zeta_M\}$. Indeed, since limits and colimits are computed component-wise
in $\SectR(\mathsf{X})$, the functor $U$ preserves both limits and colimits, hence a left adjoint functor
$F$ exists as a consequence of the special adjoint functor theorem for locally presentable categories.
We endow $\prod_{M\in\CC}\mathsf{X}(M)$ with the product semi-model structure
and observe that taking unions $I:= \coprod_{M\in\CC}I_M$ and $J:=\coprod_{M\in\CC}J_M$ 
of the component-wise generating (acyclic) cofibrations $(I_M,J_M)$ defines generating
(acyclic) cofibrations $(I,J)$ for $\prod_{M\in\CC}\mathsf{X}(M)$. 
Note that this semi-model structure on  $\prod_{M\in\CC}\mathsf{X}(M)$ is tractable. 
\sk

We will now verify that a simplified version of the transfer criterion in \cite[Theorem 3.3]{Fresse2},
taking into account local presentability of our categories and tractability of the product semi-model structure,
holds true for our example. This proves existence of the projective semi-model structure on 
$\SectR(\mathsf{X})$ and implies the properties stated in this proposition.
\sk

Similarly to \cite[Theorems 11.3.1 and 11.3.2]{Hirschhorn}, we must check that $U$
sends $F(I)$-cell complexes (respectively, $F(J)$-cell complexes) with cofibrant 
domains to cofibrations (respectively, acyclic cofibrations) in $\prod_{M\in\CC}\mathsf{X}(M)$ 
since the small object argument is always available in a locally presentable category for any set of maps. 
As observed by Fresse in \cite[Theorem 3.3]{Fresse2}, due to the fact that $U$ preserves 
colimits over non-empty ordinals, it suffices to show that, for any pushout
\begin{flalign}
\begin{gathered}
\xymatrix{
\ar[d]_-{F(\{i_M\})} F\big(\{x_M\}\big) \ar[r]~&~ \big(\{y_M\},\{\psi_f\}\big)\ar@{-->}[d]^-{\{\zeta_M\}}\\
F\big(\{x^\prime_M\}\big)\ar@{-->}[r]~&~\big(\{y^\prime_M\},\{\psi^\prime_f\}\big)
}
\end{gathered}
\end{flalign}
in $\SectR(\mathsf{X})$, the morphism $U(\{\zeta_M\}) : \{y_M\}\to \{y_M^\prime\}$
is a cofibration (respectively, an acyclic cofibration) in $\prod_{M\in\CC}\mathsf{X}(M)$
whenever $\{i_M\}:\{x_M\} \to \{x^\prime_M\}$ belongs to $I$ 
(respectively, to $J$) and $(\{y_M\},\{\psi_f\})$ is cofibrant in $\prod_{M\in\CC}\mathsf{X}(M)$. 
Note that Fresse's assumption is stronger than ours, but we do not use 
the full strength of his result since we already know that the small object 
argument applies to $F(I)$ and $F(J)$. Using that $U$ preserves colimits 
and that (acyclic) cofibrations are closed under pushouts, 
this problem reduces to showing that the morphism $UF\big(\{i_M\}\big) : UF\big(\{x_M\}\big)
\to UF\big(\{x^\prime_M\}\big)$ is a (acyclic) cofibration in $\prod_{M\in\CC}\mathsf{X}(M)$,
for all generating cofibrations $\{i_M\} : \{x_M\}\to \{x^\prime_M\}$
in $I$ (respectively, in $J$). This can be simplified further by using the family of
adjunctions $p_M^\dagger : \mathsf{X}(M) \rightleftarrows \prod_{M\in\CC}\mathsf{X}(M) : p_M$,
for all $M\in\CC$, whose right adjoints project onto the $M$-component. 
As a consequence of colimit preservation of $U$ and $F$, it suffices
to check the component-wise condition that 
$p_N UFp_M^\dagger(i) :p_N UFp_M^\dagger(x) \to p_N UF p_M^\dagger(x^\prime)$
is a (acyclic) cofibration in $\mathsf{X}(N)$, for all $M,N\in\CC$ and all
$i:x\to x^\prime$ in $I_M$ (respectively, in $J_M$). One shows that
\begin{flalign}
p_N UFp_M^\dagger(i) \,\cong\,\coprod_{f\in\CC(M,N)} \mathsf{X}^\dagger(f)(i)\quad,
\end{flalign}
where $\mathsf{X}^\dagger(f) : \mathsf{X}(M) \rightleftarrows \mathsf{X}(N):\mathsf{X}(f)$
denotes the left adjoint of $\mathsf{X}(f)$. The proof then follows from the fact that
$\mathsf{X}^\dagger(f)$ is a left Quillen functor between semi-model categories, 
hence it preserves (acyclic) cofibrations with cofibrant domains,
and so does the coproduct $\coprod_{f\in\CC(M,N)}$.
\end{proof}

Combining the result about left Bousfield localizations 
of semi-model categories from \cite[Theorem A]{BataninWhite} with 
the constructions in \cite[Theorem 4.38]{Barwick}, we 
obtain the following result.
\begin{theo}[\cite{BataninWhite,Barwick}]\label{theo:homotopicalpoints}
Let $\mathsf{X} : \CC^\op\to \CombR$ be a $2$-functor and
denote by $\SectR(\mathsf{X})$ the projective semi-model category of right sections
from Proposition \ref{prop:SectRprojective}.
\begin{itemize}
\item[(a)] There exists a subset $S\subseteq \mathrm{Mor}\big(\SectR(\mathsf{X})\big)$
such that an object $(\{x_M\},\{\psi_f\})\in \SectR(\mathsf{X})$ is $S$-local if and only if 
it is homotopy cartesian, i.e., for every $\CC$-morphism $f: M\to N$,
the composite $\mathsf{X}(M)$-morphism 
\begin{flalign}\label{eqn:homotopycartesian}
\xymatrix@C=2.5em{
x_M \ar[r]^-{\psi_f} ~&~ \mathsf{X}(f)(x_N) \ar[r]^-{\mathsf{X}(f)(r)}~&~\mathsf{X}(f)R(x_N)\,=:\,\bbR\mathsf{X}(f)(x_N)
}
\end{flalign} 
is a weak equivalence,
where $r:x_N\to R(x_N) $ is any choice of fibrant replacement in $\mathsf{X}(N)$.

\item[(b)] The left Bousfield localization 
\begin{flalign}\label{eqn:homotopicalpoints}
\mathsf{X}\{\mathrm{pt}\}\,:=\, \mathcal{L}_{S}\SectR(\mathsf{X})
\end{flalign}
of the projective semi-model category $\SectR(\mathsf{X})$ 
at the set of morphisms from item (a) exists as a combinatorial
and tractable semi-model category. The fibrant objects in this 
semi-model category are precisely the homotopy cartesian
right sections $(\{x_M\},\{\psi_f\})$ with $x_M\in\mathsf{X}(M)$ fibrant
objects, for all $M\in\CC$.
\end{itemize}
\end{theo}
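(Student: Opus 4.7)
The plan is to combine the semi-model categorical Bousfield localization machinery of \cite{BataninWhite} with an adaptation to semi-model categories of Barwick's construction \cite[Theorem 4.38]{Barwick} of a detecting set of morphisms for the homotopy cartesian condition. The starting point is Proposition \ref{prop:SectRprojective}, which provides $\SectR(\mathsf{X})$ as a combinatorial and tractable projective semi-model category, so that the hypotheses of \cite[Theorem A]{BataninWhite} are in force as soon as we can identify a suitable set $S$.

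For part (a), I would construct $S$ by exploiting the family of adjunctions $p_M^\dagger\dashv p_M$ used in the proof of Proposition \ref{prop:SectRprojective}, where $p_M$ projects a right section onto its $M$-component. For each $\CC$-morphism $f:M\to N$ and each generating cofibration $i:A\to B$ in the tractable semi-model category $\mathsf{X}(N)$, one builds a morphism in $\SectR(\mathsf{X})$ between free right sections that, via the adjunction, encodes the comparison $\mathsf{X}(f)(B)\to \mathsf{X}(f)(A)$ together with the structure morphisms $\psi_f$. A direct mapping-space computation shows that locality of $(\{x_M\},\{\psi_f\})$ against the resulting set $S$ is equivalent to requiring that, for every $f:M\to N$ in $\CC$, the derived comparison $x_M\to \bbR\mathsf{X}(f)(x_N)$ in \eqref{eqn:homotopycartesian} be a weak equivalence in $\mathsf{X}(M)$, i.e.\ that the right section be homotopy cartesian.

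For part (b), I would then apply \cite[Theorem A]{BataninWhite} to the combinatorial tractable semi-model category $\SectR(\mathsf{X})$ and the set $S$ from (a), obtaining the left Bousfield localized combinatorial tractable semi-model category $\mathsf{X}\{\mathrm{pt}\}=\mathcal{L}_S\SectR(\mathsf{X})$. The characterization of fibrant objects is then automatic from the general theory of left Bousfield localizations: a fibrant object in $\mathcal{L}_S\SectR(\mathsf{X})$ is a fibrant object of $\SectR(\mathsf{X})$ that is $S$-local. The first condition, by Proposition \ref{prop:SectRprojective}, means that each component $x_M\in\mathsf{X}(M)$ is fibrant, and in the presence of component-wise fibrancy the fibrant replacement $r:x_N\to R(x_N)$ in \eqref{eqn:homotopycartesian} may be taken to be the identity, so $S$-locality reduces to the $\psi_f$ themselves being weak equivalences in $\mathsf{X}(M)$.

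The main obstacle is the adaptation of Barwick's construction of $S$ from the model-categorical to the semi-model-categorical setting. In a semi-model category, mapping spaces and cofibrant replacements only behave well on cofibrant inputs, so one cannot freely manipulate arbitrary resolutions as in \cite[Theorem 4.38]{Barwick}. The way around this is to exploit tractability of the component semi-model categories $\mathsf{X}(M)$, together with the final statement of Proposition \ref{prop:SectRprojective} asserting that cofibrations in $\SectR(\mathsf{X})$ with cofibrant domain are component-wise cofibrations, so that the free right sections $p_M^\dagger(-)$ built from cofibrant domains remain cofibrant in $\SectR(\mathsf{X})$ and all the mapping-space computations used in verifying the locality-equals-homotopy-cartesianness equivalence take place within the allowed cofibrant regime of the semi-model structure.
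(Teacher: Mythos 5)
Your overall strategy is exactly the paper's: the theorem is quoted there as a combination of Barwick's construction of the localizing set \cite[Theorem 4.38]{Barwick}, transported to semi-model categories, with the Bousfield localization existence result \cite[Theorem A]{BataninWhite} applied to the combinatorial tractable semi-model category $\SectR(\mathsf{X})$ from Proposition \ref{prop:SectRprojective}; the identification of fibrant objects as component-wise fibrant homotopy cartesian sections, with the fibrant replacement $r$ taken to be the identity, is also how the paper reads off part (b). So in outline your proposal is correct and matches the paper.

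One concrete correction to your sketch of the set $S$ in part (a): the paper (when it recalls Barwick's set explicitly in the proof of Theorem \ref{theo:homotopicalpointsQuillen}) takes $S$ to consist of the morphisms $r_{f,x} : \pi_N^\dagger \mathsf{X}^\dagger(f)(x) \to \pi_M^\dagger(x)$, where $f:M\to N$ runs over $\Mor(\CC)$ and $x$ runs over a set $\mathsf{G}(M)$ of \emph{cofibrant homotopy generators of $\mathsf{X}(M)$}, with $\mathsf{X}^\dagger(f)$ the left adjoint of $\mathsf{X}(f)$. Your version, indexed by the generating cofibrations $i:A\to B$ of $\mathsf{X}(N)$, would not in general support the claimed ``direct mapping-space computation'': the test objects must live in $\mathsf{X}(M)$ (that is where the comparison $x_M\to\bbR\mathsf{X}(f)(x_N)$ of \eqref{eqn:homotopycartesian} is a morphism), and the (co)domains of generating cofibrations need not be homotopy generators, so locality against your set would not detect that this comparison is a weak equivalence. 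With $S$ taken in Barwick's form the argument goes through as you describe, and your use of tractability together with the last statement of Proposition \ref{prop:SectRprojective} (cofibrations with cofibrant domain are component-wise cofibrations) is indeed the right way to keep the mapping-space arguments within the cofibrant regime of the semi-model structures.
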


\begin{rem}\label{rem:homotopicalpoints}
Note that the fibrant
objects $(\{x_M\},\{\psi_f\})\in\mathsf{X}\{\pt\}$ can be characterized equivalently 
by the property that $\psi_f : x_M \to \mathsf{X}(f)(x_N)$
is a weak equivalence in $\mathsf{X}(M)$, for all $M\in\CC$, because one can choose the trivial 
fibrant replacements $r=\id: x_N \to R(x_N) = x_N$ in this case. 
This property is a homotopical generalization of the isomorphism
property in the ordinary categories of points from Remarks \ref{rem:HKpoints} and \ref{rem:CGpoints}.
This justifies to call the left Bousfield localization
$\mathsf{X}\{\pt\}$ from \eqref{eqn:homotopicalpoints} 
the \textit{semi-model category of homotopical points} of $\mathsf{X}$.
\end{rem}

We conclude this subsection by studying the behavior of the semi-model
categories $\SectR(\mathsf{X})$ and $\mathsf{X}\{\pt\}$ under pseudo-natural
right Quillen equivalences.
Let $\Theta : \mathsf{X}\Rightarrow \mathsf{X}^\prime$ be a pseudo-natural transformation
(in the usual sense of bicategories) between 
two $2$-functors $\mathsf{X},\mathsf{X}^\prime : \CC^\op \to  \CombR$. 
By Definition \ref{def:Comb} of the $2$-category $\CombR$, 
each component $\Theta_M : \mathsf{X}(M)\to \mathsf{X}^\prime(M)$
is a right Quillen functor, hence it admits a left adjoint
$\Theta_M^\dagger: \mathsf{X}^\prime(M)\to \mathsf{X}(M)$.
These left adjoint functors assemble into a pseudo-natural transformation 
$\Theta^\dagger : \mathsf{X}^\prime\Rightarrow \mathsf{X}$
going in the opposite direction of $\Theta : \mathsf{X}\Rightarrow \mathsf{X}^\prime$.
By the construction in \cite[Lemma 2.23]{Barwick}, one obtains an induced 
adjunction 
\begin{flalign}\label{eqn:SectRadjuction}
\xymatrix{
\Theta^\dagger_\ast \,:\, \SectR(\mathsf{X}^\prime) \ar@<0.75ex>[r]~&~\ar@<0.75ex>[l] \SectR(\mathsf{X})\,:\,\Theta_\ast
}
\end{flalign}
between the categories of right sections from Definition \ref{def:rightsections}.
Explicitly, the right adjoint functor $\Theta_\ast$ maps an object
$\big(\{x_M\},\{\psi_f\}\big)$ in $\SectR(\mathsf{X})$
to the object 
\begin{flalign}
\Big(\big\{\Theta_M(x_M)\big\},\, \big\{\Theta_M(\psi_f) : \Theta_M(x_M)\longrightarrow \Theta_M\mathsf{X}(f)(x_N)\cong
\mathsf{X}^\prime(f)\Theta_N(x_N)\big\}\Big)
\end{flalign} 
in $\SectR(\mathsf{X}^\prime)$, 
where the last isomorphism uses the pseudo-naturality structure of $\Theta$.
It further maps a morphism $\{\zeta_M\}$ in $\SectR(\mathsf{X})$ to the morphism
$\{\Theta_M(\zeta_M)\}$ in $\SectR(\mathsf{X}^\prime)$. The left adjoint functor
$\Theta_\ast^\dagger$ is defined similarly. 
\begin{propo}\label{prop:SectRQuillen}
Let $\Theta : \mathsf{X}\Rightarrow \mathsf{X}^\prime$ 
be a pseudo-natural transformation between two $2$-functors 
$\mathsf{X},\mathsf{X}^\prime : \CC^\op \to \CombR$. Then the adjunction
\eqref{eqn:SectRadjuction} is a Quillen adjunction between the projective semi-model
categories from Proposition \ref{prop:SectRprojective}. In the case where 
$\Theta : \mathsf{X}\Rightarrow \mathsf{X}^\prime$ 
is a pseudo-natural right Quillen equivalence, i.e.\ the components
$\Theta_M : \mathsf{X}(M)\to \mathsf{X}^\prime(M)$ are right Quillen equivalences,
for all $M\in\CC$, then \eqref{eqn:SectRadjuction} is a Quillen equivalence.
\end{propo}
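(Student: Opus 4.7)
I will prove the two parts of the proposition in sequence.

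For the Quillen adjunction claim, it suffices to verify that the right adjoint $\Theta_\ast$ preserves fibrations and acyclic fibrations. By Proposition \ref{prop:SectRprojective}, both classes in the projective semi-model structure on $\SectR(\mathsf{X}^\prime)$ are detected component-wise, and by the explicit formula recalled just above the proposition, $\Theta_\ast$ acts on a right section by applying $\Theta_M$ to its $M$-th component. Since each $\Theta_M$ is right Quillen by Definition \ref{def:Comb}, the required preservation properties follow at once.

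For the Quillen equivalence assertion, I will use the standard criterion that a Quillen adjunction between semi-model categories is a Quillen equivalence if and only if the right adjoint reflects weak equivalences between fibrant objects and the derived unit $\eta^R_{X'} : X' \to \Theta_\ast R\Theta^\dagger_\ast(X')$ at every cofibrant $X'$ is a weak equivalence. The reflection condition is another direct component-wise verification: fibrant objects in $\SectR(\mathsf{X})$ have fibrant components, weak equivalences are detected component-wise, $\Theta_\ast$ acts component-wise, and each $\Theta_M$ reflects weak equivalences between fibrant objects as a right Quillen equivalence.

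The main obstacle will be the derived unit condition. Unlike $\Theta_\ast$, the left adjoint $\Theta^\dagger_\ast$ does \emph{not} act component-wise: its existence comes from the special adjoint functor theorem via the construction of \cite[Lemma 2.23]{Barwick}, and its components are given by a coend-type formula involving the left adjoints $\Theta_K^\dagger$ and the structure maps of the right section. To proceed, I will exploit two features supplied by Proposition \ref{prop:SectRprojective}: cofibrant objects in $\SectR(\mathsf{X}^\prime)$ have cofibrant components, and a fibrant replacement $R\Theta^\dagger_\ast(X')$ may be chosen with fibrant components in each $\mathsf{X}(M)$, since the projective factorizations can be built component-wise via the small-object argument. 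There is then a canonical comparison map from $R_M\Theta_M^\dagger(x'_M)$ to $(R\Theta^\dagger_\ast X')_M$ fitting the component $\eta^R_{X',M}$ of the derived unit into a commutative triangle with the derived unit of $\Theta_M^\dagger \dashv \Theta_M$ at the cofibrant object $x'_M$. The latter is a weak equivalence in $\mathsf{X}^\prime(M)$ by the Quillen equivalence hypothesis; establishing that the comparison map is a weak equivalence in $\mathsf{X}(M)$ and invoking two-out-of-three will then complete the argument. This last step, which adapts Barwick's reasoning from the model-categorical to the semi-model-categorical setting made available by Proposition \ref{prop:SectRprojective}, is the technically substantial part of the proof.
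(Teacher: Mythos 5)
Your first claim (the Quillen adjunction) is exactly the paper's argument and is fine. The second part, however, rests on a false premise and leaves its key step unproven. The premise is the assertion that $\Theta^\dagger_\ast$ does \emph{not} act component-wise. In fact it does: by the construction recalled just before the proposition (cf.\ \cite[Lemma 2.23]{Barwick}), $\Theta^\dagger_\ast$ sends a right section $(\{x'_M\},\{\psi'_f\})$ to the right section whose $M$-component is $\Theta^\dagger_M(x'_M)$ and whose structure map over $f:M\to N$ is the composite of $\Theta^\dagger_M(\psi'_f)$ with the mate $\Theta^\dagger_M\mathsf{X}^\prime(f)\Rightarrow \mathsf{X}(f)\Theta^\dagger_N$ of the pseudo-naturality isomorphism of $\Theta$; this mate need not be invertible, but right sections do not require invertible structure maps, so the construction is legitimate and is adjoint to $\Theta_\ast$ with component-wise unit and counit. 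There is no coend-type formula and no appeal to the adjoint functor theorem needed here.

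Because you miss this, your argument defers all the actual content to an unspecified comparison map between $R_M\Theta^\dagger_M(x'_M)$ and the $M$-component of a fibrant replacement of $\Theta^\dagger_\ast(X')$, whose weak-equivalence property is precisely what would have to be proved, and you explicitly leave it open (``the technically substantial part''). As written, the proof is therefore incomplete. With the component-wise description of $\Theta^\dagger_\ast$ in hand the argument closes immediately, as in the paper: use the defining adjunct criterion for Quillen equivalences. For a cofibrant $(\{x'_M\},\{\psi'_f\})\in\SectR(\mathsf{X}^\prime)$ and a fibrant $(\{x_M\},\{\psi_f\})\in\SectR(\mathsf{X})$, a morphism $\{\zeta'_M\}$ into $\Theta_\ast(\{x_M\},\{\psi_f\})$ is a weak equivalence if and only if its adjunct is; since weak equivalences are component-wise, the adjunct has as components the component-wise adjuncts $\Theta^\dagger_M(x'_M)\to x_M$, fibrant objects of $\SectR(\mathsf{X})$ are component-wise fibrant, and cofibrant objects of $\SectR(\mathsf{X}^\prime)$ have cofibrant components (the final statement of Proposition \ref{prop:SectRprojective}), the condition reduces to the hypothesis that each $\Theta^\dagger_M\dashv\Theta_M$ is a Quillen equivalence. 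No derived units, chosen fibrant replacements of $\Theta^\dagger_\ast(X')$, or two-out-of-three arguments are required.
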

\begin{proof}
To prove the first statement, we observe that the right adjoint 
$\Theta_\ast:\SectR(\mathsf{X})\to \SectR(\mathsf{X}^\prime)$ 
preserves fibrations and acyclic fibrations because these are defined component-wise 
in the projective semi-model structure from Proposition \ref{prop:SectRprojective}
and the components $\Theta_M : \mathsf{X}(M)\to \mathsf{X}^\prime(M)$ are 
right Quillen functors, for all $M\in\CC$, by Definition \ref{def:Comb} of $\CombR$.
\sk

To prove the second statement, i.e.\ $\Theta^{\dagger}_{\ast}\dashv \Theta_{\ast}$ 
in \eqref{eqn:SectRadjuction} is a Quillen equivalence  
provided that all components $\Theta_M : \mathsf{X}(M)\to \mathsf{X}^\prime(M)$ are right Quillen equivalences,
we have to verify the following condition: Given any cofibrant object
$(\{x_M^\prime\},\{\psi_f^\prime\})\in \SectR(\mathsf{X}^\prime)$ and any fibrant object
$(\{x_M\},\{\psi_f\})\in \SectR(\mathsf{X})$,
a $\SectR(\mathsf{X}^\prime)$-morphism $\{\zeta_M^\prime\} : 
(\{x_M^\prime\},\{\psi_f^\prime\})\to \Theta_\ast(\{x_M\},\{\psi_f\})$ is a  weak equivalence
if and only if the $\SectR(\mathsf{X})$-morphism 
$\{\zeta^{\prime}_M\!\,^{\dagger}\} : \Theta_\ast^\dagger(\{x_M^\prime\},\{\psi_f^\prime\})\to (\{x_M\},\{\psi_f\})$
which is obtained via the adjunction $\Theta^{\dagger}_{\ast}\dashv \Theta_{\ast}$ is a weak equivalence.
Due to the component-wise definition of weak equivalences in the projective semi-model structure 
from Proposition \ref{prop:SectRprojective}, this is equivalent to checking the following condition:
\begin{flalign}\label{eqn:Quillenequivalencecomponents}
\begin{gathered}
\xymatrix@R=1.5em{
\big(\text{$\zeta_M^\prime : x_M^\prime \to \Theta_M(x_M)$ 
is a weak equivalence in $\mathsf{X}^\prime(M)$}~~\forall M\in\CC\big)\ar@{<=>}[d]\\
\big(\text{$\zeta^{\prime}_M\!\,^{\dagger} : \Theta_M^\dagger(x_M^\prime) \to x_M$ 
is a weak equivalence in $\mathsf{X}(M)$}~~\forall M\in\CC\big)
}
\end{gathered}\quad.
\end{flalign}
In the projective semi-model structure from Proposition \ref{prop:SectRprojective},
the fibrant objects are precisely the component-wise fibrant objects
and the cofibrant objects are in particular component-wise cofibrant.
The condition \eqref{eqn:Quillenequivalencecomponents} then
follows from the hypothesis that $\Theta_M^\dagger \dashv \Theta_M$ is
a Quillen equivalence, for all $M\in\CC$.
\end{proof}

A similar result holds for the semi-model category of homotopical points
from Theorem \ref{theo:homotopicalpoints}.
\begin{theo}\label{theo:homotopicalpointsQuillen}
Let $\Theta : \mathsf{X}\Rightarrow \mathsf{X}^\prime$ 
be a pseudo-natural transformation between two $2$-functors 
$\mathsf{X},\mathsf{X}^\prime : \CC^\op \to \CombR$. Then the Quillen adjunction
from Proposition \ref{prop:SectRQuillen} induces a Quillen adjunction
\begin{flalign}\label{eqn:homotopicalpointsadjuction}
\xymatrix{
\Theta^\dagger_\ast \,:\, \mathsf{X}^\prime\{\pt\} \ar@<0.75ex>[r]~&~\ar@<0.75ex>[l] \mathsf{X}\{\pt\}\,:\,\Theta_\ast
}
\end{flalign}
between the left Bousfield localized semi-model categories from Theorem \ref{theo:homotopicalpoints}.
In the case where $\Theta : \mathsf{X}\Rightarrow \mathsf{X}^\prime$ 
is a pseudo-natural right Quillen equivalence, then \eqref{eqn:homotopicalpointsadjuction} 
is a Quillen equivalence.
\end{theo}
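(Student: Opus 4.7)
I proceed in two steps.

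First, to establish the Quillen adjunction, I apply the descent criterion of \cite[Lemma 3.5]{CarmonaCriterion}: it suffices to verify that the right adjoint $\Theta_\ast$ from Proposition \ref{prop:SectRQuillen} sends fibrant objects of $\mathsf{X}\{\pt\}$ to fibrant objects of $\mathsf{X}^\prime\{\pt\}$. By Theorem \ref{theo:homotopicalpoints}(b) combined with Remark \ref{rem:homotopicalpoints}, a fibrant object of $\mathsf{X}\{\pt\}$ is a right section $(\{x_M\},\{\psi_f\})$ such that each $x_M$ is fibrant in $\mathsf{X}(M)$ and each structure map $\psi_f:x_M\to \mathsf{X}(f)(x_N)$ is a weak equivalence. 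Since $\Theta_M$ is a right Quillen functor by Definition \ref{def:Comb}, each component $\Theta_M(x_M)$ is fibrant in $\mathsf{X}^\prime(M)$. Because $\mathsf{X}(f)$ is also right Quillen, $\mathsf{X}(f)(x_N)$ is fibrant in $\mathsf{X}(M)$, so Ken Brown's lemma ensures that $\Theta_M(\psi_f)$ is a weak equivalence. Postcomposing with the pseudo-naturality isomorphism $\Theta_M \mathsf{X}(f) \cong \mathsf{X}^\prime(f)\Theta_N$ yields exactly the structure map of $\Theta_\ast(\{x_M\},\{\psi_f\})$, so this right section is fibrant in $\mathsf{X}^\prime\{\pt\}$ as required.

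Second, under the assumption that $\Theta$ is a pseudo-natural right Quillen equivalence, Proposition \ref{prop:SectRQuillen} provides an equivalence of homotopy categories $\bbR\Theta_\ast : \mathrm{Ho}(\SectR(\mathsf{X})) \xrightarrow{\simeq} \mathrm{Ho}(\SectR(\mathsf{X}^\prime))$. By Theorem \ref{theo:homotopicalpoints}(a), $\mathrm{Ho}(\mathsf{X}\{\pt\})$ and $\mathrm{Ho}(\mathsf{X}^\prime\{\pt\})$ embed as the full reflective subcategories spanned by homotopy cartesian right sections. It therefore suffices to check that $\bbR\Theta_\ast$ preserves and reflects this condition. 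Preservation follows from the first step applied to a fibrant representative: a homotopy cartesian fibrant object is fibrant in $\mathsf{X}\{\pt\}$, and its image under $\Theta_\ast$ is fibrant in $\mathsf{X}^\prime\{\pt\}$ and hence homotopy cartesian. For reflection, let $X=(\{x_M\},\{\psi_f\})$ be a fibrant object of $\SectR(\mathsf{X})$ such that $\Theta_\ast(X)$ is homotopy cartesian. Since each $\Theta_N(x_N)$ is already fibrant, no nontrivial fibrant replacement is needed, and the homotopy cartesian condition for $\Theta_\ast(X)$ reduces, via the pseudo-naturality isomorphism, to the statement that each $\Theta_M(\psi_f)$ is a weak equivalence in $\mathsf{X}^\prime(M)$. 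Since $\Theta_M$ is a right Quillen equivalence, it reflects weak equivalences between fibrant objects, so $\psi_f$ is a weak equivalence and $X$ is homotopy cartesian.

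The main obstacle is the reflection step: one must carefully track how the pseudo-naturality isomorphism interacts with fibrant replacements in the homotopy cartesian condition in order to cleanly invoke the reflection property of each $\Theta_M$ as a right Quillen equivalence. Once preservation and reflection are established, the restricted functor $\bbR\Theta_\ast:\mathrm{Ho}(\mathsf{X}\{\pt\})\to \mathrm{Ho}(\mathsf{X}^\prime\{\pt\})$ is an equivalence of categories, which in conjunction with the first step implies that \eqref{eqn:homotopicalpointsadjuction} is a Quillen equivalence.
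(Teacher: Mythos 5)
Your first step coincides with the paper's own argument (verify via \cite[Lemma 3.5]{CarmonaCriterion} that $\Theta_\ast$ sends component-wise fibrant homotopy cartesian sections to such sections, using that the $\Theta_M$ and $\mathsf{X}(f)$ are right Quillen). For the Quillen equivalence statement, however, you take a genuinely different route: the paper applies the localization-transfer result \cite[Theorem 3.3.20]{Hirschhorn} to obtain a Quillen equivalence onto $\mathcal{L}_{\mathbb{L}\Theta^\dagger_\ast(S^\prime)}\SectR(\mathsf{X})$ and then identifies the two localizing sets by unwinding Barwick's explicit sets $S$, $S^\prime$ built from cofibrant homotopy generators, using that the left Quillen equivalences $\Theta_M^\dagger$ carry generators to generators; you instead restrict the derived equivalence $\bbR\Theta_\ast$ at the projective level to the full subcategories of local (homotopy cartesian) objects, proving preservation from your first step and reflection from the fact that a right Quillen equivalence reflects weak equivalences between fibrant objects, and conclude as the paper itself does in Corollary \ref{cor:dcQuillenequivalence} that an equivalence on homotopy categories upgrades the Quillen adjunction to a Quillen equivalence. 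Your argument is correct and arguably more elementary, since it never touches the localizing sets; what it silently uses, and what you should make explicit, is (i) that the derived right adjoint of the \emph{localized} adjunction is naturally isomorphic to the restriction of $\bbR\Theta_\ast$ to local objects (valid because localized-fibrant replacements of local objects can be taken to be projective-fibrant replacements, and localized weak equivalences between local objects are ordinary ones), and (ii) that the standard facts you invoke (homotopy category of a left Bousfield localization is the full subcategory of local objects; right adjoints of Quillen equivalences reflect weak equivalences between fibrant objects, which needs only cofibrant replacements) persist in the semi-model setting of \cite{BataninWhite,Barwick}. The paper's route buys uniformity with its general machinery and avoids these identifications; yours buys a shorter, fibrant-object-wise argument that sidesteps the homotopy-generator bookkeeping.
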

\begin{proof}
To prove the first statement, we use the
characterization of Quillen adjunctions for left Bousfield
localizations of combinatorial and tractable semi-model categories
from \cite[Lemma 3.5]{CarmonaCriterion}. This amounts to verifying 
that the right adjoint functor $\Theta_\ast : \SectR(\mathsf{X})\to\SectR(\mathsf{X}^\prime)$
sends every component-wise fibrant homotopy cartesian right section $(\{x_M\},\{\psi_f\})\in \SectR(\mathsf{X})$
to a homotopy cartesian right section in $\SectR(\mathsf{X}^\prime)$. 
As a consequence of component-wise fibrancy, the fibrant replacements in \eqref{eqn:homotopycartesian} 
can be chosen to be trivial $r=\id$, hence $\psi_f : x_M\to \mathsf{X}(f)(x_N)$ is a weak equivalence
in $\mathsf{X}(M)$, for all $\CC$-morphisms $f:M\to N$.
Using the explicit description of $\Theta_\ast$ which is given below \eqref{eqn:SectRadjuction},
we observe that $\Theta_\ast(\{x_M\},\{\psi_f\})\in \SectR(\mathsf{X}^\prime)$
is homotopy cartesian because $\Theta_M(\psi_f) : \Theta_M(x_M)\to 
\Theta_M\mathsf{X}(f)(x_N)\cong \mathsf{X}(f)\Theta_N(x_N)$ is a weak equivalence
in $\mathsf{X}^\prime(M)$, for all $\CC$-morphisms $f:M\to N$, and $\Theta_N(x_N)\in\mathsf{X}^\prime(N)$
is fibrant, for all $N\in\CC$, as a consequence of $\Theta_M$, $\Theta_N$ 
and $\mathsf{X}(f)$ being right Quillen functors.
\sk

Let us now prove the second claim. Recall
from Proposition \ref{prop:SectRQuillen} that, for 
$\Theta : \mathsf{X}\Rightarrow \mathsf{X}^\prime$ a pseudo-natural 
right Quillen equivalence, the adjunction
$\Theta^\dagger_\ast : \SectR(\mathsf{X}^\prime) \rightleftarrows \SectR(\mathsf{X}) : 
\Theta_\ast$ is a Quillen equivalence between the projective semi-model categories. Recall 
further from Theorem \ref{theo:homotopicalpoints} that $\mathsf{X}\{\pt\} = \mathcal{L}_S\SectR(\mathsf{X})$
and $\mathsf{X}^\prime\{\pt\} =\mathcal{L}_{S^\prime}\SectR(\mathsf{X}^\prime)$ are defined 
as left Bousfield localizations at appropriate sets of morphisms which are specified explicitly
in \cite[Theorem 4.38]{Barwick}, see also the paragraph below. Using the result in \cite[Theorem 3.3.20]{Hirschhorn},
which also holds true for semi-model categories, we have a Quillen equivalence 
$\Theta^\dagger_\ast : \mathsf{X}^\prime\{\pt\} \rightleftarrows 
\mathcal{L}_{\mathbb{L}\Theta^\dagger_\ast(S^\prime)}\SectR(\mathsf{X}) : \Theta_\ast$, where
$\mathbb{L}\Theta^\dagger_\ast(S^\prime)\subseteq \mathrm{Mor}\big(\SectR(\mathsf{X})\big)$ denotes
the image of the localizing set $S^\prime \subseteq \mathrm{Mor}\big(\SectR(\mathsf{X}^\prime)\big)$
under the left derived functor $\mathbb{L}\Theta^\dagger_\ast$. Hence, the result would follow if we can prove
that the two sets $\mathbb{L}\Theta^\dagger_\ast(S^\prime)$ and $S$ determine the same left Bousfield localization,
i.e.\ $\mathbb{L}\Theta^\dagger_\ast(S^\prime)$-local objects are homotopy cartesian right sections.
\sk

To complete the proof, we have to recall from \cite[Theorem 4.38]{Barwick} the definition of the 
localizing sets $S^{(\prime)} \subseteq \mathrm{Mor}\big(\SectR(\mathsf{X}^{(\prime)})\big)$.
As a first step, we observe that there exists, for each object $M\in\CC$, a Quillen adjunction
\begin{flalign}
\xymatrix{
\pi_M^\dagger \,:\,\mathsf{X}(M) \ar@<0.75ex>[r]~&~\ar@<0.75ex>[l] \SectR(\mathsf{X})
\,:\,\pi_M
}
\end{flalign}
whose right Quillen functor $\pi_M$ projects a right section $(\{x_M\},\{\psi_f\})\in 
\SectR(\mathsf{X})$ to its $M$-component $x_M\in\mathsf{X}(M)$. 
Given two objects $M,N\in\CC$ and any $x\in \mathsf{X}(M)$, one has that
\begin{flalign}
\pi_N\pi_M^{\dagger}(x)\,\cong\,\coprod_{f\in\CC(M,N)} \mathsf{X}^\dagger(f)(x)\quad,
\end{flalign}
where $\mathsf{X}^\dagger(f) : \mathsf{X}(M) \rightleftarrows \mathsf{X}(N):\mathsf{X}(f)$
denotes the left adjoint of $\mathsf{X}(f)$. Hence, for every $\CC$-morphism $f:M\to N$,
there exists a canonical inclusion $\mathsf{X}(N)$-morphism $\mathsf{X}^\dagger(f)(x) \to \pi_N\pi_M^{\dagger}(x)$
and we denote its adjunct $\mathsf{X}(M)$-morphism by 
$r_{f,x} : \pi_N^\dagger \mathsf{X}^\dagger(f)(x)\to \pi_M^{\dagger}(x)$. The localizing set 
$S\in \mathrm{Mor}\big(\SectR(\mathsf{X})\big)$ is defined by
\begin{flalign}\label{eqn:Sset}
S\,:=\,\Big\{r_{f,x} : \pi_N^\dagger \mathsf{X}^\dagger(f)(x)\to \pi_M^{\dagger}(x) ~\Big\vert~ (f:M\to N)\in\mathrm{Mor}(\CC)\,,~x\in \mathsf{G}(M)\Big\}\quad,
\end{flalign}
where $\mathsf{G}(M)\subseteq \mathsf{X}(M)$ is any choice of cofibrant homotopy generators of $\mathsf{X}(M)$.
The localizing set $S^\prime \in \mathrm{Mor}\big(\SectR(\mathsf{X}^\prime)\big)$ is defined similarly by inserting
$^{\prime}$ at the relevant places. Using pseudo-naturality of $\Theta : \mathsf{X}\Rightarrow \mathsf{X}^\prime$
and cofibrancy of the source and target of all $S^\prime$-morphisms, one finds that the image
\begin{flalign}
\mathbb{L}\Theta^\dagger_\ast(S^\prime)\,\simeq\, \Big\{r_{f,\Theta_M^\dagger(x^\prime)} : \pi_N^\dagger \mathsf{X}^\dagger(f)\Theta_M^\dagger(x^\prime)\to \pi_M^{\dagger}\Theta_M^\dagger(x^\prime) ~\Big\vert~ (f:M\to N)\in\mathrm{Mor}(\CC)\,,~x^\prime
\in \mathsf{G}^\prime(M)\Big\}
\end{flalign}
is weakly equivalent to a set of the form \eqref{eqn:Sset}, with 
$x\in \Theta_M^\dagger\big(\mathsf{G}^\prime(M)\big)$ now running over 
the image of the cofibrant homotopy generators of $\mathsf{X}^\prime(M)$.
Because $\Theta_M^\dagger$ is by hypothesis a left Quillen equivalence, for all
$M\in\CC$, it follows that $\Theta_M^\dagger\big(\mathsf{G}^\prime(M)\big)$ is a choice
of cofibrant homotopy generators of $\mathsf{X}(M)$. Hence, $\mathbb{L}\Theta^\dagger_\ast(S^\prime)$
and $S$ determine the same left Bousfield localization, which completes the proof.
\end{proof}

\subsection{\label{subsec:homotopicaldcas}Homotopical decomposition and assembly}
In this subsection we develop a homotopical generalization of the decomposition
and assembly functors for AQFTs (see \eqref{eqn:dc} and \eqref{eqn:as})
and for tPFAs (see \eqref{eqn:dctPFA} and \eqref{eqn:astPFA}). Our first 
observation is that these functors extend to the category of right sections from
Definition \ref{def:rightsections} because invertibility of the morphisms $\alpha_f$ and $\phi_f$
is not required. (Recall from Remark \ref{rem:SectR} that this invertibility
is the only difference between the categories of points and the categories of right sections.)
Hence, we obtain decomposition and assembly functors
\begin{subequations}\label{eqn:dcasSect}
\begin{flalign}
&& \dc\,:~&\, \AQFT^\rc \longrightarrow~\SectR(\HK)\quad, 
~& 	\dc\,:~&\, \tPFA^\rc \longrightarrow~\SectR(\CG)\quad , &\\
&& \as \,:~&\, \SectR(\HK)\longrightarrow~\AQFT^\rc\quad, 
~& \as \,:~&\, \SectR(\CG)\longrightarrow~\tPFA^\rc\quad. &
\end{flalign}
\end{subequations}
For the compositions of these functors we have that
\begin{flalign}\label{eqn:dcascompositions}
\as\circ \dc \,=\,\id \quad,\qquad \dc\circ \as \,\Longrightarrow\, \id\quad,
\end{flalign}
where the natural transformations are the ones constructed in Theorems \ref{theo:dcasAQFT} and \ref{theo:dcastPFA}.
It is worthwhile to observe that the latter are \textit{not} natural isomorphisms
because the morphisms $\alpha_f$ and $\phi_f$ are not necessarily invertible
in the categories of right sections. In other words, the generalized decomposition and assembly functors
do \textit{not} induce ordinary equivalences of categories. The main result of this subsection 
is that these functors induce Quillen equivalences
between the relevant semi-model categories, see Corollary \ref{cor:dcQuillenequivalence} 
and Theorem \ref{theo:dcQuillenequivalenceW} below.
\begin{lem}\label{lem:dcasQuillen}
\begin{itemize}
\item[(a)] All four functors in \eqref{eqn:dcasSect} are right Quillen functors
for the projective model structures from Example \ref{ex:projectivemodel} 
and the projective (semi-)model structures on $\SectR(\HK)$ and $\SectR(\CG)$ 
from Proposition \ref{prop:SectRprojective}.

\item[(b)] The right Quillen functors from item (a) induce right Quillen functors
\begin{subequations}\label{eqn:dcashomotopicalpoints}
\begin{flalign}
&& \dc\,:~&\, \AQFT^\rc \longrightarrow~\HK\{\pt\}\quad, 
~& 	\dc\,:~&\, \tPFA^\rc \longrightarrow~\CG\{\pt\}\quad, &\\
&& \as \,:~&\, \HK\{\pt\}\longrightarrow~\AQFT^\rc\quad, 
~& \as \,:~&\, \CG\{\pt\}\longrightarrow~\tPFA^\rc\quad, &
\end{flalign}
\end{subequations}
for the left Bousfield localized semi-model structures on
$\HK\{\pt\}$ and $\CG\{\pt\}$ from Theorem \ref{theo:homotopicalpoints}.

\item[(c)] Restricting the right Quillen functors \eqref{eqn:dcashomotopicalpoints}
to the full subcategories of fibrant objects, the natural transformations in \eqref{eqn:dcascompositions} 
are natural weak equivalences. Hence, these restricted right Quillen functors are weakly quasi-inverse to each other.
\end{itemize}
\end{lem}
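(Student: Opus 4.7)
For part (a), the plan is to verify that in both the projective (semi-)model structures on $\AQFT^\rc, \tPFA^\rc$ (Example \ref{ex:projectivemodel}) and on $\SectR(\HK), \SectR(\CG)$ (Proposition \ref{prop:SectRprojective}), the classes of weak equivalences and fibrations are detected component-wise in $\Ch_R$, and that both $\dc$ and $\as$ act in a way that respects this component-wise structure. Concretely, $\dc(\AAA)$ has entries $(\AAA\vert_M)(U) = \AAA(U)$ for $U \in \O_M$ (themselves $\Loc^\rc$-objects), and $\as(\{\AAA_M\},\{\alpha_f\})$ has $M$-component $\AAA_M(M)$ (the value of $\AAA_M$ on $M \in \O_M$), so preservation of fibrations and weak equivalences follows by direct inspection. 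The adjunction $\dc \dashv \as$ is then read off from \eqref{eqn:dcascompositions}, with identity unit (since $\as \circ \dc = \id$) and counit given by the natural transformation constructed in Theorem \ref{theo:dcasAQFT}; the triangle identities hold by the same spacetime-wise computations already performed there. This exhibits $\as$ as a right Quillen functor. To obtain $\dc$ as a right Quillen functor as well, I invoke the special adjoint functor theorem: $\dc$ preserves limits by the component-wise description, and since all categories involved are locally presentable a left adjoint to $\dc$ exists, against which the already established preservation of (acyclic) fibrations makes $\dc$ right Quillen. The tPFA case is completely analogous.

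For part (b), the plan is to apply the localization Quillen criterion of \cite[Lemma 3.5]{CarmonaCriterion}: for each of the four functors it suffices to verify that it sends fibrant objects to fibrant objects of the appropriately localized target. For $\as : \HK\{\pt\} \to \AQFT^\rc$, a fibrant $(\{\AAA_M\},\{\alpha_f\})$ in $\HK\{\pt\}$ has each $\AAA_M$ fibrant in $\HK(M)$, so $\AAA_M(M)$ is fibrant in $\Ch_R$ and $\as$ produces a fibrant AQFT. For $\dc : \AQFT^\rc \to \HK\{\pt\}$, I must check that $\dc(\AAA)$ is homotopy cartesian (the component-wise fibrancy being automatic). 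This rests on the key observation that the structure morphism $\alpha_f : \AAA\vert_M \Rightarrow f^\ast(\AAA\vert_N)$ of $\dc(\AAA)$ has components $\AAA(f\vert_U) : \AAA(U) \to \AAA(f(U))$ with $f\vert_U : U \to f(U)$ being an \emph{isomorphism} in $\Loc^\rc$, since both $f\vert_U$ and its inverse are bijective Cauchy $\Loc^\rc$-morphisms by construction. Therefore $\AAA(f\vert_U)$ is an isomorphism in $\Ch_R$, so $\alpha_f$ is component-wise a weak equivalence in $\HK(M)$, and $\dc(\AAA)$ is homotopy cartesian as required. The tPFA case runs in parallel.

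For part (c), the natural transformation $\as \circ \dc = \id$ from \eqref{eqn:dcascompositions} is trivially a natural weak equivalence. For $\dc \circ \as \Rightarrow \id$, I use the explicit formula from Theorem \ref{theo:dcasAQFT}, whose $U$-component at $(\{\AAA_M\},\{\alpha_f\})$ is $(\alpha_{\iota_U^M})_U : \AAA_U(U) \to \AAA_M(U)$. For a fibrant object in $\HK\{\pt\}$, homotopy cartesianity forces $\alpha_{\iota_U^M} : \AAA_U \to (\iota_U^M)^\ast(\AAA_M)$ to be a weak equivalence in $\HK(U)$, which is itself a component-wise condition in $\Ch_R$; specializing to the $U$-component then gives exactly the weak equivalence we need. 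The same argument applies to tPFAs. The most delicate step overall is the identification of $\dc$ as a right Quillen functor in part (a), since its natural partner $\as$ lies on the wrong side of the adjunction and the missing left adjoint to $\dc$ must be supplied indirectly via the special adjoint functor theorem rather than constructed explicitly; the second delicate point is the recognition in part (b) that $f\vert_U$ is automatically a $\Loc^\rc$-isomorphism, which is what drives homotopy cartesianity of $\dc(\AAA)$.
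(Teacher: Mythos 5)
Your proposal is correct and, in parts (b) and (c), runs along exactly the same lines as the paper's proof: the criterion of \cite[Lemma 3.5]{CarmonaCriterion} reduces (b) to showing that $\dc$ lands in component-wise fibrant homotopy cartesian right sections (which follows because the structure maps of $\dc(\AAA)$ are the isomorphisms $\AAA(f\vert_U)$, all objects being fibrant so that the fibrant replacement in \eqref{eqn:homotopycartesian} can be taken trivial) and that $\as$ preserves fibrant objects (automatic), and (c) follows from the homotopy cartesian property applied to the components $(\alpha_{\iota_U^M})_U$. The only place you deviate is part (a): the paper simply observes that all four functors preserve limits \emph{and filtered colimits} component-wise and are therefore right adjoints between locally presentable categories, whereas you obtain $\as$ as a right adjoint via a claimed adjunction $\dc\dashv\as$ with identity unit and counit \eqref{eqn:dcascompositions}; this adjunction is in fact true, but its triangle identities are not ``already performed'' in Theorem \ref{theo:dcasAQFT} (which works in the category of points, where the $\alpha_f$ are invertible) and should be checked, albeit by a short direct computation using the cocycle conditions. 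Relatedly, your appeal to the adjoint functor theorem for $\dc$ cites only limit preservation plus local presentability; this is not sufficient on its own (in ZFC one also needs accessibility, i.e.\ preservation of $\kappa$-filtered colimits, which is exactly why the paper records preservation of filtered colimits), though the fix is immediate since $\dc$ clearly preserves filtered colimits component-wise. Neither point affects the overall correctness of your argument.
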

\begin{proof}
We start with observing that all four functors in \eqref{eqn:dcasSect} preserve
limits and filtered colimits because these are computed component-wise in the underling category $\TT=\Ch_R$.
Hence, by the special adjoint functor theorem for locally presentable categories, they are right adjoint functors.
Item (a) is a direct consequence of the fact that fibrations and acyclic fibrations are defined component-wise
in the projective (semi-)model structures. 
\sk

To prove item (b), we use \cite[Lemma 3.5]{CarmonaCriterion}, i.e.\ we have to show that the decomposition functors
send fibrant objects to component-wise fibrant homotopy cartesian right sections and that
the assembly functors send component-wise fibrant homotopy cartesian right sections to fibrant objects.
Note that the second statement is automatic because all objects in the projective model categories from 
Theorem \ref{theo:projectivemodel} are fibrant. To show the first statement, recall the definition
of the decomposition functor \eqref{eqn:dc} for AQFTs and observe that 
\begin{flalign}
\xymatrix{
\AAA\vert_M \ar@{=>}[r]^-{\cong}~&~f^\ast(\AAA\vert_N)\,=\,\bbR f^\ast(\AAA\vert_N)
}
\end{flalign}
is an isomorphism (hence a weak equivalence) in $\HK(M)$, for all $\Loc^\rc$-morphisms $f:M\to N$,
where in the last step we used that $\AAA\vert_N\in\HK(N)$ is a fibrant object.
The same argument applies to the decomposition functor \eqref{eqn:dctPFA} for tPFAs.
\sk

Item (c): The components of these natural transformations are specified in Theorems \ref{theo:dcasAQFT}
and \ref{theo:dcastPFA}. They are given by $(\alpha_{\iota_{U}^{M}})_U : \AAA_U(U)\to \AAA_M(U)$
for AQFTs and by $(\phi_{\iota_{U}^{M}})_U : \FFF_U(U)\to \FFF_M(U)$ for tPFAs.
These morphisms are weak equivalences as a consequence of the component-wise fibrancy and the 
homotopy cartesian property from Theorem \ref{theo:homotopicalpoints}, see also Remark \ref{rem:homotopicalpoints}.
\end{proof}
\begin{cor}\label{cor:dcQuillenequivalence}
The decomposition and assembly functors \eqref{eqn:dcashomotopicalpoints}
are right Quillen equivalences between the projective model 
categories $\AQFT^\rc$ and $\tPFA^\rc$ from Example \ref{ex:projectivemodel}
and the semi-model categories of homotopical points $\HK\{\pt\}$ and $\CG\{\pt\}$
from Theorem \ref{theo:homotopicalpoints}.
\end{cor}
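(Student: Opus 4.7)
The plan is to obtain Corollary \ref{cor:dcQuillenequivalence} as a formal consequence of Lemma \ref{lem:dcasQuillen} together with the standard criterion that a right Quillen functor is a right Quillen equivalence if and only if its right derived functor induces an equivalence of homotopy categories. This criterion remains valid for combinatorial tractable semi-model categories in the sense of \cite{BataninWhite}.

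First, I would invoke items (a) and (b) of Lemma \ref{lem:dcasQuillen} to record that the functors $\dc$ and $\as$ in \eqref{eqn:dcashomotopicalpoints} are right Quillen, and hence each participates in a Quillen adjunction with some (implicit) left adjoint. Since both preserve fibrations and acyclic fibrations, Ken Brown's lemma (which transfers straightforwardly to the semi-model setting) implies that both preserve weak equivalences between fibrant objects. Noting that every object of the projective model categories $\AQFT^\rc$ and $\tPFA^\rc$ is fibrant, the right derived functors $\mathbb{R}\dc$ and $\mathbb{R}\as$ are represented on the point-set level by $\dc$ and $\as$ themselves (precomposed with a fibrant replacement in $\HK\{\pt\}$ or $\CG\{\pt\}$ only when the input requires it).

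Next, I would apply item (c) of Lemma \ref{lem:dcasQuillen}: on fibrant objects, the composition $\as \circ \dc$ equals the identity strictly, while $\dc \circ \as$ is naturally weakly equivalent to the identity. Descending to homotopy categories then yields mutually inverse equivalences $\mathbb{R}\dc : \mathrm{Ho}(\AQFT^\rc) \rightleftarrows \mathrm{Ho}(\HK\{\pt\}) : \mathbb{R}\as$, and analogously for the tPFA/CG pair. By the homotopical criterion recalled above, $\dc$ and $\as$ are therefore right Quillen equivalences, which establishes the corollary.

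I do not anticipate any substantive obstacle, as the argument is essentially a packaging of Lemma \ref{lem:dcasQuillen}. The only point requiring care is that the classical model-categorical tools invoked above---Ken Brown's lemma, the computation of right derived functors via fibrant replacement, and the equivalence-of-homotopy-categories characterization of Quillen equivalences---remain valid in the combinatorial tractable semi-model category framework of \cite{BataninWhite,Barwick}; these facts are by now standard in the semi-model-category literature, see for instance \cite{CarmonaCriterion}.
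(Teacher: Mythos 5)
Your proposal is correct and follows essentially the same route as the paper: both deduce the corollary from Lemma \ref{lem:dcasQuillen} (b) (right Quillen-ness) and (c) (weak quasi-inverseness on fibrant objects), using that the homotopy category of a semi-model category is determined by its fibrant objects and that a right Quillen functor is a Quillen equivalence precisely when its derived functor is an equivalence of homotopy categories. The extra details you supply (Ken Brown's lemma, fibrancy of all objects in the projective model structures) are just a more explicit packaging of the same argument.
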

\begin{proof}
It suffices to spell out the details for AQFTs since the proof for tPFAs is identical.
\sk

By Lemma \ref{lem:dcasQuillen} (b), the decomposition and assembly functors are right Quillen functors.
To show that they are right Quillen equivalences, we have to prove that the induced functors
\begin{subequations}
\begin{flalign}
\mathsf{Ho}(\dc)\,:~&\,\mathsf{Ho}\big(\AQFT^\rc\big) \longrightarrow~\mathsf{Ho}\big(\HK\{\pt\}\big)\quad,\\
\mathsf{Ho}(\as)\,:~&\,\mathsf{Ho}\big(\HK\{\pt\}\big)\longrightarrow~ \mathsf{Ho}\big(\AQFT^\rc\big)
\end{flalign}
\end{subequations}
exhibit equivalences between the homotopy categories. This follows from our result in Lemma \ref{lem:dcasQuillen} (c)
and the fact that the homotopy category $\mathsf{Ho}(\mathbf{M})\simeq\mathsf{Ho}(\mathbf{M}_f) $ 
of any semi-model category $\mathbf{M}$ can be determined (up to equivalence) from the full 
subcategory $\mathbf{M}_f\subseteq \mathbf{M}$ of fibrant objects.
\end{proof}

The result of Corollary \ref{cor:dcQuillenequivalence} generalizes to
the case of AQFTs and tPFAs satisfying the homotopy time-slice axiom
from Example \ref{ex:Bousfieldmodel}.
\begin{theo}\label{theo:dcQuillenequivalenceW}
The decomposition and assembly functors
\begin{subequations}\label{eqn:dcashomotopicalpointsL}
\begin{flalign}
&& \dc\,:~&\, \mathcal{L}_{\widehat{W}}\AQFT^\rc \longrightarrow~\mathcal{L}_{\widehat{W}}\HK\{\pt\}\quad, 
~& 	\dc\,:~&\, \mathcal{L}_{\widehat{W}}\tPFA^\rc \longrightarrow~\mathcal{L}_{\widehat{W}}\CG\{\pt\}\quad, &\\
&& \as \,:~&\, \mathcal{L}_{\widehat{W}}\HK\{\pt\}\longrightarrow~\mathcal{L}_{\widehat{W}}\AQFT^\rc\quad, 
~& \as \,:~&\, \mathcal{L}_{\widehat{W}}\CG\{\pt\}\longrightarrow~\mathcal{L}_{\widehat{W}}\tPFA^\rc\quad, &
\end{flalign}
\end{subequations}
are right Quillen equivalences between the left Bousfield localized semi-model 
categories $\mathcal{L}_{\widehat{W}}\AQFT^\rc$ and $\mathcal{L}_{\widehat{W}}\tPFA^\rc$ 
from Example \ref{ex:Bousfieldmodel}
and the semi-model categories of homotopical points 
$\mathcal{L}_{\widehat{W}}\HK\{\pt\}$ and $\mathcal{L}_{\widehat{W}}\CG\{\pt\}$
from Theorem \ref{theo:homotopicalpoints} of the $2$-functors
$\mathcal{L}_{\widehat{W}}\HK$ and $\mathcal{L}_{\widehat{W}}\CG$
from \eqref{eqn:HKCGCombR}.
\end{theo}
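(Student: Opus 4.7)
The plan is to upgrade the unlocalized Quillen equivalences of Corollary~\ref{cor:dcQuillenequivalence} to the Bousfield-localized setting. Since the underlying adjunctions $\as\dashv\dc$ (on both the AQFT and the tPFA side) are the same as before, the main task reduces, via the criterion of \cite[Lemma 3.5]{CarmonaCriterion}, to showing that $\dc$ and $\as$ send localized fibrant objects to localized fibrant objects. The Quillen equivalence statement would then follow by recycling the homotopy-categorical argument used in Corollary~\ref{cor:dcQuillenequivalence}.

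Combining Theorems~\ref{theo:Bousfieldmodel} and~\ref{theo:homotopicalpoints}, the fibrant objects of $\mathcal{L}_{\widehat{W}}\HK\{\pt\}$ (and analogously of $\mathcal{L}_{\widehat{W}}\CG\{\pt\}$) are precisely those right sections $(\{\AAA_M\},\{\alpha_f\})$ for which (a) each $\AAA_M$ satisfies the homotopy time-slice axiom on the suboperad $\O_M$, and (b) each structure morphism $\alpha_f:\AAA_M\Rightarrow f^{\ast}(\AAA_N)$ is a weak equivalence. Preservation of these fibrants by $\dc$ is then immediate: if $\AAA$ satisfies the homotopy time-slice axiom on $\Loc^\rc$, each restriction $\AAA\vert_M$ inherits the property because any Cauchy $1$-ary operation in $\O_M$ remains Cauchy when viewed in $\O_{\ovr{\Loc^\rc}}$, and the transition morphisms $\AAA\vert_M\Rightarrow f^{\ast}(\AAA\vert_N)$ are even isomorphisms, as already noted in the proof of Lemma~\ref{lem:dcasQuillen}(b). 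The genuinely new step is preservation of fibrants by $\as$: for any Cauchy $\Loc^\rc$-morphism $f:M\to N$, formula~\eqref{eqn:as1} factorizes the assembled morphism as $\AAA_N(\iota_{f(M)}^N)\circ (\alpha_f)_M$. The first factor is a weak equivalence because $f$ being Cauchy implies that $\iota_{f(M)}^N$ is a Cauchy inclusion in $\O_N$, and $\AAA_N$ satisfies property~(a); the second factor is a weak equivalence by property~(b). The tPFA case is strictly parallel, since the homotopy time-slice axiom only constrains $1$-ary operations.

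Finally, to promote these right Quillen functors to right Quillen \emph{equivalences}, I would re-use the argument of Corollary~\ref{cor:dcQuillenequivalence}: the strict identity $\as\circ\dc=\id$ and the pointwise weak equivalence $\dc\circ\as\Rightarrow\id$ on fibrant objects from Lemma~\ref{lem:dcasQuillen}(c) remain valid in the localized setting, since they are statements about the underlying categories and the unlocalized weak equivalences (which remain weak equivalences in the localized semi-model structures as well). Combined with the fact that the homotopy category of a semi-model category is computed on its fibrant subcategory, this delivers quasi-inverse equivalences on homotopy categories, hence the desired Quillen equivalences. I expect no serious obstacle beyond careful bookkeeping of the two-step characterization of fibrant objects in the homotopical points, namely the conjunction of component-wise homotopy time-slice with the homotopy cartesian property, and its compatibility with the assembly construction, both of which are handled in the previous paragraph.
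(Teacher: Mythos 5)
Your proposal is correct and follows essentially the same route as the paper's proof: both apply the fibrant-preservation criterion of \cite[Lemma 3.5]{CarmonaCriterion} using the explicit description of fibrant objects in $\mathcal{L}_{\widehat{W}}\HK\{\pt\}$ and $\mathcal{L}_{\widehat{W}}\CG\{\pt\}$ (component-wise homotopy time-slice plus weak-equivalence structure maps), and then conclude via Lemma \ref{lem:dcasQuillen} (c) and the homotopy-category argument of Corollary \ref{cor:dcQuillenequivalence}. The only step the paper makes explicit that you leave implicit is the preliminary observation that $\mathcal{L}_{\widehat{W}}\HK\{\pt\}$ has the same cofibrations as, and more weak equivalences than, the projective section category $\SectR(\HK)$, hence is itself a left Bousfield localization of it, which is what legitimizes invoking the criterion for the underlying Quillen adjunction from Lemma \ref{lem:dcasQuillen} (a).
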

\begin{proof}
It suffices to spell out the details for AQFTs since the proof for tPFAs is identical.
\sk

Let us start with observing that the semi-model categories
$\mathcal{L}_{\widehat{W}}\HK\{\pt\}$ and $\SectR(\HK)$ have 
the same cofibrations. (This follows by construction 
of left Bousfield localizations and the fact that cofibrations 
(respectively, acyclic fibrations)
are characteried by the left (respectively, right) lifting property 
against acyclic fibrations (respectively, cofibrations) 
\cite[Lemma 1.7]{Barwick}.)
Furthermore, the weak equivalences of $\mathcal{L}_{\widehat{W}}\HK\{\pt\}$ contain
the weak equivalences of $\SectR(\HK)$. This implies that
$\mathcal{L}_{\widehat{W}}\HK\{\pt\}$ is a left Bousfield localization of $\SectR(\HK)$.
\sk

With the above observation, we can provide a proof by using
the characterization of Quillen adjunctions for left Bousfield
localizations of combinatorial and tractable semi-model categories
from \cite[Lemma 3.5]{CarmonaCriterion}. For this it is crucial to recall
that the fibrant objects in $\mathcal{L}_{\widehat{W}}\AQFT^\rc$ are
AQFTs $\AAA\in \AQFT^\rc$ over $\Loc^\rc$ which satisfy the homotopy time-slice axiom,
see Theorem \ref{theo:Bousfieldmodel} and Example \ref{ex:Bousfieldmodel}. 
On the other hand, the fibrant objects in $\mathcal{L}_{\widehat{W}}\HK\{\pt\}$ are right sections
$\big(\{\AAA_M\},\{\alpha_f\}\big)\in \SectR(\HK)$ such that 1.)~$\AAA_M\in\HK(M)$ satisfies
the homotopy time-slice axiom, for all $M\in\Loc^\rc$, and 2.) $\alpha_f : \AAA_M\Rightarrow f^\ast(\AAA_N)$
is a weak equivalence in the projective model structure
$\HK(M)$, for all $\Loc^\rc$-morphisms $f:M\to N$.
It is now straightforward to check that the decomposition \eqref{eqn:dc}
and assembly \eqref{eqn:as} functors both preserve fibrant objects, hence
they define right Quillen functors
\begin{flalign}
\dc\,:\, \mathcal{L}_{\widehat{W}}\AQFT^\rc ~\longrightarrow~\mathcal{L}_{\widehat{W}}\HK\{\pt\}\quad\text{and}\quad
\as\,:\, \mathcal{L}_{\widehat{W}}\HK\{\pt\} ~\longrightarrow~ \mathcal{L}_{\widehat{W}}\AQFT^\rc
\end{flalign}
as a consequence of Lemma \ref{lem:dcasQuillen} (a) and \cite[Lemma 3.5]{CarmonaCriterion}.
Using Lemma \ref{lem:dcasQuillen} (c), one finds that the restrictions of these right Quillen functors
to the full subcategories of fibrant objects are weakly quasi-inverse to each other. The proof
then follows from the same argument as in Corollary \ref{cor:dcQuillenequivalence}.
\end{proof}

\subsection{\label{subsec:reductionmodelcat}Main result}
Recall from Example \ref{ex:Bousfieldmodel} that the tPFA/AQFT-comparison
multifunctor $\Phi : \tP_{\Loc^\rc}\to\O_{\ovr{\Loc^\rc}}$ 
from Definition \ref{def:comparisonoperadmorphism} defines
a right Quillen functor
\begin{flalign}\label{eqn:globalproblem}
\Phi^\ast\,:\,\mathcal{L}_{\widehat{W}}\AQFT^\rc ~\longrightarrow~
\mathcal{L}_{\widehat{W}}\tPFA^\rc
\end{flalign}
from the semi-model category $\mathcal{L}_{\widehat{W}}\AQFT^\rc$ 
whose fibrant objects are $\Ch_R$-valued AQFTs over $\Loc^\rc$ 
satisfying the homotopy time-slice axiom
to the semi-model category $\mathcal{L}_{\widehat{W}}\tPFA^\rc$ 
whose fibrant objects are $\Ch_R$-valued tPFAs over $\Loc^\rc$ 
satisfying the homotopy time-slice axiom.
Similarly, for each object $M\in\Loc^\rc$, the spacetime-wise 
tPFA/AQFT-comparison multifunctor $\Phi_M : \tP_{M}\to\O_{M}$ 
defines a right Quillen functor 
\begin{flalign}\label{eqn:spacetimewiseproblem}
\Phi_M^\ast\,:\, \mathcal{L}_{\widehat{W}_M}\HK(M) 
~\longrightarrow~ \mathcal{L}_{\widehat{W}_M}\CG(M)
\end{flalign}
for $\Ch_R$-valued AQFTs and tPFAs over $\Loc^\rc/M$ satisfying the homotopy time-slice axiom.
\begin{theo}\label{theo:homotopicalreduction}
Suppose that the right Quillen functors \eqref{eqn:spacetimewiseproblem} are right Quillen
equivalences, for all $M\in\Loc^\rc$. Then the right Quillen functor
\eqref{eqn:globalproblem} is a right Quillen equivalence too.
\end{theo}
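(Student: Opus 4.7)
The plan is to reduce the statement to the previously established homotopical machinery via the commutative square of right Quillen functors
\[
\xymatrix@C=3em{
\ar[d]_-{\dc} \mathcal{L}_{\widehat{W}}\AQFT^\rc \ar[r]^-{\Phi^\ast}~&~\mathcal{L}_{\widehat{W}}\tPFA^\rc\ar[d]^-{\dc}\\
\mathcal{L}_{\widehat{W}}\HK\{\pt\}\ar[r]_-{(\Phi^\ast)_\ast}~&~\mathcal{L}_{\widehat{W}}\CG\{\pt\}
}
\]
which is the semi-model categorical upgrade of the square in Lemma \ref{lem:comparisondiagram}. The vertical arrows are right Quillen equivalences by Theorem \ref{theo:dcQuillenequivalenceW}, so it suffices to show that the bottom horizontal arrow is a right Quillen equivalence; the top horizontal arrow \eqref{eqn:globalproblem} will then follow by two-out-of-three.

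First, I would construct the bottom arrow. The spacetime-wise multifunctors $\Phi_M : \tP_M \to \O_M$ from \eqref{eqn:PhiM} are natural in $M \in \Loc^\rc$ and preserve Cauchy morphisms, so exactly as in \eqref{eqn:Phipull2functorW} they assemble into a strict (hence pseudo-) $2$-natural transformation $\Phi^\ast : \mathcal{L}_{\widehat{W}}\HK \Rightarrow \mathcal{L}_{\widehat{W}}\CG$ between the $2$-functors in \eqref{eqn:HKCGCombR}, whose components are the hypothesised right Quillen equivalences \eqref{eqn:spacetimewiseproblem}. Theorem \ref{theo:homotopicalpointsQuillen} then yields the required right Quillen equivalence $(\Phi^\ast)_\ast : \mathcal{L}_{\widehat{W}}\HK\{\pt\} \to \mathcal{L}_{\widehat{W}}\CG\{\pt\}$ between semi-model categories of homotopical points.

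Second, I would verify that the square commutes. This is a direct semi-model categorical generalization of Lemma \ref{lem:comparisondiagram}: both paths send an AQFT $\AAA \in \mathcal{L}_{\widehat{W}}\AQFT^\rc$ to the right section whose $M$-component is the restriction of $\Phi^\ast \AAA$ to $\tP_M$, using the evident compatibility of the global comparison multifunctor $\Phi$ with its local restrictions $\Phi_M$, and analogously on morphisms.

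I do not anticipate a substantial obstacle, since the bulk of the work has been absorbed into Theorems \ref{theo:dcQuillenequivalenceW} and \ref{theo:homotopicalpointsQuillen}. The only point requiring minor care is the two-out-of-three step: one must pass to total right derived functors, note that the commutative square descends to a commutative square of functors between homotopy categories (three of whose sides are equivalences of categories by the previous items), and then conclude that $\mathbb{R}\Phi^\ast$ is an equivalence, whence $\Phi^\ast$ is a right Quillen equivalence.
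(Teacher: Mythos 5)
Your proposal is correct and follows essentially the same route as the paper: the same commutative square upgrading Lemma \ref{lem:comparisondiagram}, with the vertical decomposition functors handled by Theorem \ref{theo:dcQuillenequivalenceW}, the bottom arrow by applying Theorem \ref{theo:homotopicalpointsQuillen} to the $2$-natural transformation with components \eqref{eqn:spacetimewiseproblem}, and the conclusion for the top arrow by two-out-of-three. The only difference is that you spell out the passage to derived functors and homotopy categories, which the paper leaves implicit.
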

\begin{rem}
In simpler words: The global homotopical equivalence problem 
for $\Ch_R$-valued AQFTs and tPFAs over $\Loc^\rc$
satisfying the homotopy time-slice axiom can be reduced to
a family of simpler spacetime-wise homotopical
equivalence problems for $\Ch_R$-valued AQFTs and tPFAs over $\Loc^\rc/M$
satisfying the homotopy time-slice axiom, for all $M\in\Loc^\rc$. 
In Subsection \ref{subsec:simplification}, the latter spacetime-wise 
homotopical equivalence problem will be simplified further by leveraging 
a spacetime-wise strictification theorem for the homotopy time-slice axiom 
in the case of AQFTs, see Theorem \ref{theo:O_MCLF} and Corollary \ref{cor:simplerspacetimewise}. 
\end{rem}
\begin{proof}
Similarly to Lemma \ref{lem:comparisondiagram}, we 
have a commutative diagram
\begin{flalign}\label{eqn:comparisondiagramhomotopical}
\begin{gathered}
\xymatrix@C=3em{
\ar[d]_-{\dc} \mathcal{L}_{\widehat{W}}\AQFT^{\rc} \ar[r]^-{\Phi^\ast}
~&~\mathcal{L}_{\widehat{W}}\tPFA^{\rc}\ar[d]^-{\dc}\\
\mathcal{L}_{\widehat{W}}\HK\{\pt\}\ar[r]_-{(\Phi^\ast)_\ast}~&~\mathcal{L}_{\widehat{W}}\CG\{\pt\}
}
\end{gathered}
\end{flalign}
of right Quillen functors between semi-model categories, where
$(\Phi^\ast)_\ast$ denotes the right adjoint of the 
construction in \eqref{eqn:SectRadjuction} applied to the $2$-natural transformation $\Phi^\ast : 
\mathcal{L}_{\widehat{W}}\HK\Rightarrow\mathcal{L}_{\widehat{W}}\CG$
whose components are given by \eqref{eqn:spacetimewiseproblem}.
The two vertical arrows in  \eqref{eqn:comparisondiagramhomotopical} 
are right Quillen equivalences as a consequence of Theorem \ref{theo:dcQuillenequivalenceW}.
Using our hypothesis that \eqref{eqn:spacetimewiseproblem} is a right Quillen equivalence, 
for all $M\in\Loc^\rc$, it follows from Theorem \ref{theo:homotopicalpointsQuillen} that the bottom
horizontal arrow in  \eqref{eqn:comparisondiagramhomotopical} is a right Quillen equivalence too.
This implies that the top horizontal arrow in  \eqref{eqn:comparisondiagramhomotopical} is a right Quillen
equivalence, which completes the proof.
\end{proof}

%%%%%%%%%%%%%%%%%%%%%%%%%%%%%%%%%%%%%%%%%%%%%%%%
%%%%%%%%%%%%%%%%%%%%%%%%%%%%%%%%%%%%%%%%%%%%%%%%

\section{\label{sec:spacetimewise}Towards a spacetime-wise homotopical equivalence theorem}
In the light of our reduction Theorem \ref{theo:homotopicalreduction},
it suffices to address the spacetime-wise homotopical equivalence problem,
i.e.\ the question whether \eqref{eqn:spacetimewiseproblem} is a right Quillen equivalence
for every spacetime $M\in\Loc^\rc$, in order to deduce a global homotopical
equivalence theorem between $\Ch_R$-valued AQFTs and tPFAs over $\Loc^\rc$
satisfying the homotopy time-slice axiom. In this section we will 
present non-trivial progress towards proving a spacetime-wise
homotopical equivalence theorem and point out the remaining technical challenges.
\sk

It is important to emphasize that the rather direct proof strategy
we used in the $1$-categorical spacetime-wise equivalence Theorem \ref{theo:equivalenceM},
which is based on the results in \cite{BPScomparison}, does \textit{not} 
admit an evident generalization to the present homotopical context. The reason
is that the construction of the inverse functor $\CG^W(M)\to \HK^W(M)$
manifestly makes use of the strict time-slice axiom to define
the unital associative algebra structures of an AQFT by strictly 
inverting structure maps of the tPFA which correspond to Cauchy morphisms. 
In the homotopical context, one has to work instead with the weaker
homotopy time-slice axiom, which only allows one to quasi-invert
these structure maps, leading to a tower of homotopy coherence data that
is hard to control. This suggests that solving the spacetime-wise
homotopical equivalence problem requires more abstract machinery which is
able to control such homotopy coherences.
We discuss possible proof strategies 
and point out a simplification which is given by a
strictification theorem for the homotopy time-slice axiom of AQFTs
over $\Loc^\rc/M$.

\subsection{\label{subsec:simplification}Simplification by strictifying the AQFT homotopy time-slice axiom}
To simplify the problem of proving that the right Quillen functor \eqref{eqn:spacetimewiseproblem} 
is a right Quillen equivalence, we shall show in this subsection
that the semi-model category $\mathcal{L}_{\widehat{W}_M}\HK(M)$ encoding 
$\Ch_R$-valued AQFTs over $\Loc^\rc/M$ satisfying the \textit{homotopy} 
time-slice axiom is Quillen equivalent to a model category $\HK^W(M)$
of $\Ch_R$-valued AQFTs over $\Loc^\rc/M$ satisfying the \textit{strict} 
time-slice axiom, for all $M\in\Loc^\rc$. The precise definition of the latter
is as follows: Using the concept of localizations of $\Set$-valued operads,
see e.g.\ \cite[Definition 2.12]{BCStimeslice}, one obtains that the full subcategory
$\HK^W(M)\subseteq \HK(M)$ of $\Ch_R$-valued AQFTs over $\Loc^\rc/M$ satisfying the strict
time-slice axiom from Definition \ref{def:O_M} can be presented (up to equivalence) as the category 
\begin{flalign}\label{eqn:AlgOMlocalization}
\HK^W(M)\,\simeq\, \Alg_{\O_M[W_M^{-1}]}\big(\Ch_R\big)
\end{flalign}
of $\Ch_R$-valued algebras over the localization $\O_M[W_M^{-1}]$ 
of the operad $\O_M$ at the set of $1$-ary operations $W_M\subseteq \mathrm{Mor}^1(\O_M)$ 
given by all Cauchy morphisms in $\O_M$. We endow the category of operad algebras \eqref{eqn:AlgOMlocalization} 
with the projective model structure from Theorem \ref{theo:projectivemodel}
and denote the resulting projective model category by the same symbol $\HK^W(M)$.
\sk

By definition of localization of operads \cite[Definition 2.12]{BCStimeslice}, 
we have a localization multifunctor $L_M : \O_M \to \O_M[W_M^{-1}]$ which by 
Proposition \ref{prop:Quillenprojective}
induces a Quillen adjunction $L_{M\,!} : \HK(M) \rightleftarrows \HK^W(M) : L_M^\ast$ 
between the projective model categories. This Quillen adjunction induces a
Quillen adjunction
\begin{flalign}\label{eqn:strictificationadjunction}
\xymatrix{
L_{M\,!}\,:\, \mathcal{L}_{\widehat{W}_M}\HK(M) \ar@<0.75ex>[r]~&~\ar@<0.75ex>[l] \HK^W(M)\,:\,L_M^\ast
}
\end{flalign}
between the projective model category $\HK^W(M)$ of AQFTs satisfying the strict time-slice axiom
and the left Bousfield localized semi-model category  $\mathcal{L}_{\widehat{W}_M}\HK(M)$ from 
Example \ref{ex:Bousfieldmodel} which describes AQFTs satisfying the homotopy time-slice axiom.
This claim follows by using \cite[Lemma 3.5]{CarmonaCriterion} and the fact that the right
adjoint $L_M^\ast$ sends every object in $\HK^W(M)$ to an object in $\HK(M)$ which satisfies
the strict, and hence also the homotopy, time-slice axiom, i.e.\ this object is $\widehat{W}_M$-local
in the sense of Theorem \ref{theo:Bousfieldmodel}.
\sk

It is important to stress that a Quillen adjunction as in \eqref{eqn:strictificationadjunction}
may \textit{not} be a Quillen equivalence because the homotopy time-slice axiom
is a priori richer than the strict one.
If it is a Quillen equivalence, we say that the homotopy time-slice
axiom admits a strictification. Special instances of such strictification theorems have been 
proven in \cite{BCStimeslice}, but these do not apply to our present case since the localization
$L_M : \O_M \to \O_M[W_M^{-1}]$ is not reflective. We develop in Appendix 
\ref{app:operadiclocalization} a new and more general strictification theorem 
which is based on the concept of operadic calculus of left fractions and applies to the present case.
This leads to the following main result of this subsection.
\begin{theo}\label{theo:O_MCLF}
For each $M\in\Loc^\rc$, the pair $(\O_M,W_M)$ consisting of the spacetime-wise
AQFT operad $\O_M$ from Definition \ref{def:O_M} and the subset of $1$-ary 
operations $W_M\subseteq \mathrm{Mor}^1(\O_M)$ 
given by all Cauchy morphisms in $\O_M$ admits an operadic calculus of left fractions in the sense
of Definition \ref{def:OpCLF}. In particular, the Quillen 
adjunction \eqref{eqn:strictificationadjunction} is a Quillen equivalence,
for all $M\in\Loc^\rc$.
\end{theo}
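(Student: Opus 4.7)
The plan is to unpack the definition of operadic calculus of left fractions (Definition \ref{def:OpCLF}) and verify its axioms for the pair $(\O_M, W_M)$ one at a time. The two structural conditions, that $W_M$ contains the identities and is closed under composition, are immediate from basic Lorentzian geometry: identity inclusions are trivially Cauchy, and a composition of Cauchy inclusions of causally convex opens in $M$ is Cauchy. The substantive content lies in the operadic Ore-type existence axiom and the operadic cancellation axiom, both of which reduce to concrete geometric constructions by virtue of the explicit description of $\O_M$ in Definition \ref{def:O_M}, whose operations $[\sigma, \iota_{\und{U}}^V]$ are built from inclusions of causally convex opens and permutations modulo transpositions of adjacent causally disjoint pairs.

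For the operadic Ore axiom, given an operation $[\sigma, \iota_{\und{U}}^V]\colon \und{U} \to V$ and a family of Cauchy inclusions $s_i \colon U_i \hookrightarrow \widetilde{U}_i$ in $W_M$, I would construct an enlargement $\widetilde{V} \subseteq M$ containing $V$ and each $\widetilde{U}_i$, causally convex in $M$, with $V \hookrightarrow \widetilde{V}$ a Cauchy inclusion, and remaining either Cauchy or relatively compact in $M$ as $V$ was. The natural candidate is the causally convex hull of $V \cup \bigcup_i \widetilde{U}_i$ inside a suitably controlled Cauchy development; verifying that this produces a valid object of $\O_M$ and realises $V \hookrightarrow \widetilde{V}$ as Cauchy is exactly the content of the technical Lorentzian lemmas collected in Appendix \ref{app:Lorentz}, in particular strengthenings of the Bernal--Sanchez Cauchy surface extension theorem already recalled in Remark \ref{rem:Cauchyextension}.

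For the cancellation axiom, suppose two operations $[\sigma_1, \iota_{\und{U}}^V]$ and $[\sigma_2, \iota_{\und{U}}^V]$ in $\O_M$ agree after precomposing with Cauchy inclusions $s_i \colon U_i' \hookrightarrow U_i$. By the defining equivalence relation $\sim_\perp$, this forces $\sigma_1^{-1}\sigma_2$ to be generated by transpositions of adjacent indices $i, i+1$ for which $U_i'$ and $U_{i+1}'$ are causally disjoint in $V$. To conclude, I would show that causal disjointness in $V$ already propagates from the smaller Cauchy subregions to the larger ones: any causal curve in $V$ connecting $U_i$ to $U_{i+1}$ can be extended, using that $U_i$ lies in the $V$-domain of dependence of $U_i'$ (which follows from the Cauchy property of $U_i' \subseteq U_i$ together with the causal convexity of $U_i$ in $V$), to a causal curve connecting $U_i'$ to $U_{i+1}'$, contradicting the disjointness hypothesis. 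The identity Cauchy inclusion on $V$ then serves as the required witness, so cancellation holds with $t = \id_V$.

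The main obstacle is the Ore step: one must simultaneously enlarge $V$ to a Cauchy extension $\widetilde{V}$ that remains an object of $\O_M$---in particular relatively compact in $M$ whenever $V$ is---and that accommodates each $\widetilde{U}_i$ without breaking causal convexity. This is a delicate balance, since neither the naive choice given by the full domain of dependence of $V$ in $M$ (which may fail to be relatively compact) nor the causally convex hull of $V \cup \bigcup_i \widetilde{U}_i$ (which may fail to make $V \hookrightarrow \widetilde{V}$ Cauchy) works in general. Reconciling these competing requirements is precisely what Appendix \ref{app:Lorentz} is designed to handle; once those geometric lemmas are in place, the remainder of the verification of Definition \ref{def:OpCLF} reduces to straightforward bookkeeping with inclusions, permutations and the equivalence relation $\sim_\perp$, after which Proposition \ref{prop:CLFhomotopical} delivers the Quillen equivalence in \eqref{eqn:strictificationadjunction}.
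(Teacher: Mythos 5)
Your overall skeleton matches the paper's proof: properties (1) and (2) of Definition \ref{def:OpCLF} are indeed immediate, and your treatment of the cancellation axiom (4) is essentially the paper's argument --- since the underlying inclusion category is thin, operations are classes of permutations modulo $\sim_\perp$, and the point is that causal disjointness of the smaller regions $U_i'$ forces causal disjointness of the larger regions $U_i$ when $U_i'\subseteq U_i$ is Cauchy (Cauchy morphisms detect causal disjointness, via Cauchy developments), so the witness can be taken to be $w'=\id_V$. Up to this point your proposal is sound.

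The genuine gap is the Ore axiom (3), which is the real content of the theorem, and which you do not prove: you name candidate constructions (causally convex hulls, domains of dependence), correctly observe that the naive choices fail, and then defer the entire difficulty to ``Appendix \ref{app:Lorentz} is designed to handle this.'' That is not an argument, and it also mischaracterizes the division of labor. What the appendix actually supplies is Lemma \ref{lem:zigzag}, which converts the existence of the filler $V'$ (with $V\subseteq V'$ Cauchy and each $U_i'\subseteq V'$ Cauchy or relatively compact) into the \emph{checkable} condition that each inclusion $U_i'\subseteq D_M(V)$ is Cauchy or relatively compact; the proof of the theorem must then still verify that this condition always holds for the given data $U_i\subseteq V$, $U_i\subseteq U_i'$ Cauchy. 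That verification is a nontrivial case analysis (whether $V\subseteq M$ and $U_i\subseteq V$ are Cauchy or relatively compact), and in the relatively compact case it hinges on the fact that the Cauchy development $D_M(K)$ of a compact set is closed (Lemma \ref{lem:Dclosed}), which is used to trap $\overline{U_i'}$ inside $D_M(V)$ via $\overline{U_i'}\subseteq\overline{D_M(U_i)}\subseteq D_M(\overline{U_i})\subseteq D_M(V)$. None of this --- neither the reduction to the condition on $U_i'\subseteq D_M(V)$, nor the case analysis, nor the role of closedness of Cauchy developments --- appears in your proposal, so the key step of the theorem is missing rather than merely outsourced; as written, your ``once those geometric lemmas are in place, the remainder is bookkeeping'' is not correct, because a substantive geometric verification remains even after the appendix lemmas are granted.
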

\begin{proof}
The last part of the statement follows from Proposition 
\ref{prop:CLFhomotopical} as soon as the pair $(\O_M,W_M)$ admits 
an operadic calculus of left fractions. To prove that this is indeed the case, 
we check the properties from Definition \ref{def:OpCLF}. 
\sk

Properties (1) and (2) are obvious 
from the Definition \ref{def:O_M} of $\O_M$ and the fact that $W_M$ 
consists of Cauchy morphisms. 
To check also property (4), recall that the slice category $\Loc^\rc/M$ is thin, 
i.e.\ there exists at most one morphism between any two objects, 
hence the $n$-ary operations $[\sigma]: \und{U} \to V$ in $\O_M$ are just equivalence classes 
of permutations $\sigma \in \Sigma_n$ under the equivalence relation $\sim_\perp$ 
induced by causal disjointness, see also Definition \ref{def:AQFToperad}. 
Then property (4) follows from the fact that Cauchy morphisms preserve and detect causal disjointness: 
For $U_1 \subseteq V_1$ and $U_2 \subseteq V_2$ Cauchy morphisms in $\Loc^{\rc}/M$, 
the two subsets $V_1 \subseteq M$ and $V_2 \subseteq M$ 
are causally disjoint if and only if $U_1 \subseteq M$ and $U_2 \subseteq M$ are causally disjoint. 
(This follows from the fact that the Cauchy development, whose definition is 
recalled at the beginning of Appendix \ref{app:Lorentz}, preserves and detects causal disjointness.) 
\sk

To complete the proof we have to check also property (3). 
Consider an $n$-ary operation $[\sigma]: \und{U} \to V$ in $\O_M$ 
and a family $U_i \subseteq U_i^\prime$ of Cauchy inclusions, for all $i = 1, \ldots, n$. 
The goal is to construct an $n$-ary operation $[\sigma]: \und{U}^\prime \to V^\prime$ and 
a Cauchy inclusion $V \subseteq V^\prime$ (commutativity of the square of inclusions is automatic). 
This amounts to constructing a causally convex open $V^\prime \subseteq M$ which is either Cauchy or relatively compact 
such that, for all $i = 1, \ldots, n$, one has zig-zags $U_i^\prime \subseteq V^\prime \supseteq V$, 
where $U_i^\prime \subseteq V^\prime$ is either Cauchy or relatively compact and $V \subseteq V^\prime$ is Cauchy. 
By Lemma \ref{lem:zigzag} this is equivalent to checking that the inclusions 
$U_i^\prime \subseteq D_M(V)$ into the Cauchy development of $V$
are either Cauchy or relatively compact, for all $i=1,\ldots,n$.
We make a case distinction. 
If $V \subseteq M$ is Cauchy, it follows that $U_i^\prime \subseteq M = D_M(V)$ 
is either Cauchy or relatively compact (because $U_i^\prime \in \O_M$). 
Otherwise $V \subseteq M$ is not Cauchy, hence it is relatively compact, which implies
that $U_i^\prime \subseteq M$ is relatively compact too. 
(Indeed, if $U_i^\prime \subseteq M$ were not relatively compact, 
it would be Cauchy, hence $U_i \subseteq U_i^\prime \subseteq M$ would be Cauchy too, 
leading to a contradiction with the inclusion $U_i \subseteq V$ 
and the hypothesis that $V \subseteq M$ is not Cauchy.) 
Let us show that each $U^\prime_i \subseteq D_M(V)$ 
is either Cauchy or relatively compact by making another case distinction. 
If $U_i \subseteq V$ is Cauchy, 
we deduce $D_M(U_i^\prime) = D_M(U_i) = D_M(V)$,
which proves that $U_i^\prime \subseteq D_M(V)$ is Cauchy 
(by the properties of the Cauchy development $D_M$). 
Otherwise, $U_i \subseteq V$ is not Cauchy, hence $U_i \subseteq V$ is relatively compact. 
The closure $\overline{U_i} \subseteq V$ is then compact and  
it follows from Lemma \ref{lem:Dclosed} that 
$D_M(\overline{U_i}) \subseteq M$ is closed. 
Therefore, for the compactum $\overline{U_i^\prime} \subseteq M$ one has the inclusion 
$\overline{U_i^\prime} \subseteq \overline{D_M(U_i^\prime)} = \overline{D_M(U_i)} 
\subseteq \overline{D_M(\overline{U_i})} = D_M(\overline{U_i}) \subseteq D_M(V)$, 
which proves that $U_i^\prime \subseteq D_M(V)$ is relatively compact.
\end{proof}

\begin{cor}\label{cor:simplerspacetimewise}
The right Quillen functor \eqref{eqn:spacetimewiseproblem} controlling the spacetime-wise
homotopical equivalence problem is a right Quillen equivalence
if and only if the composite right Quillen functor
\begin{flalign}\label{eqn:simplerspacetimewise}
\xymatrix{
(L_M\Phi_M)^\ast \,:\,\HK^W(M)\ar[r]^-{L_M^\ast}~&~ \mathcal{L}_{\widehat{W}_M}\HK(M) \ar[r]^-{\Phi_M^\ast}~&~ \mathcal{L}_{\widehat{W}_M}\CG(M)
}
\end{flalign}
is a right Quillen equivalence.
\end{cor}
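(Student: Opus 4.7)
The plan is to deduce the biconditional from a straightforward 2-out-of-3 argument for Quillen equivalences, with Theorem \ref{theo:O_MCLF} providing the essential ingredient. By that theorem, the Quillen adjunction $L_{M\,!}\dashv L_M^\ast$ displayed in \eqref{eqn:strictificationadjunction} is a Quillen equivalence, for every $M\in\Loc^\rc$. In particular, the right Quillen functor $L_M^\ast : \HK^W(M) \to \mathcal{L}_{\widehat{W}_M}\HK(M)$ induces an equivalence between the underlying homotopy categories.

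By construction, the composite right Quillen functor $(L_M\Phi_M)^\ast$ in \eqref{eqn:simplerspacetimewise} factors as $\Phi_M^\ast \circ L_M^\ast$, hence its right derived functor factors as $\mathbb{R}\Phi_M^\ast \circ \mathbb{R}L_M^\ast$ on homotopy categories. Since $\mathbb{R}L_M^\ast$ is already an equivalence, the 2-out-of-3 property for equivalences of categories implies that $\mathbb{R}(L_M\Phi_M)^\ast$ is an equivalence if and only if $\mathbb{R}\Phi_M^\ast$ is. Equivalently, $(L_M\Phi_M)^\ast$ is a right Quillen equivalence if and only if $\Phi_M^\ast$ is, which is the claimed biconditional.

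I do not anticipate any obstacle here. The 2-out-of-3 argument is valid in the semi-model categorical context because the characterization of a Quillen equivalence via the induced equivalence on homotopy categories, together with the description of the homotopy category of a combinatorial and tractable semi-model category through its full subcategory of fibrant objects (as already exploited in Corollary \ref{cor:dcQuillenequivalence} and Theorem \ref{theo:dcQuillenequivalenceW}), carries over without change from the classical model categorical setting. Thus the entire corollary is obtained by combining Theorem \ref{theo:O_MCLF} with this standard principle.
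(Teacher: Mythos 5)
Your proposal is correct and is essentially the argument the paper intends: Theorem \ref{theo:O_MCLF} makes $L_M^\ast$ part of a Quillen equivalence, the right derived functor of the composite $\Phi_M^\ast\circ L_M^\ast$ factors through $\mathbb{R}L_M^\ast$ (right Quillen functors preserve fibrant objects and weak equivalences between them, also in the semi-model setting), and the two-out-of-three property for equivalences of homotopy categories gives the biconditional. Nothing further is needed.
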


\begin{rem}\label{rem:nostrictification}
Operadic calculi of left fractions also exist for the Cauchy morphisms in the 
variations of the spacetime-wise AQFT operads studied in \cite[Appendix B]{BGSHaagKastler}, 
which implies that strictification theorems for the homotopy time-slice axiom are also available in these cases. 
In stark contrast to this, the Cauchy morphisms in the global 
AQFT operad over $\Loc^\rc$ from Definition \ref{def:AQFToperad}
(or in the global AQFT operad over the larger category $\Loc$ from Definition \ref{def:Lorentz}) 
do \textit{not} admit an operadic calculus of left fractions in spacetime dimensions $m\geq 2$,
so the time-slice strictification problem in the global case remains open.
Furthermore, operadic calculi of left fractions do \textit{not} exist
for the Cauchy morphisms in the global tPFA operad over $\Loc^\rc$ 
from Definition \ref{def:tPFAoperad} and for the Cauchy morphisms in spacetime-wise tPFA operads
from Definition \ref{def:tP_M}, hence there are no evident time-slice strictification theorems 
for tPFAs in both the global and the spacetime-wise case.
\end{rem}

\begin{rem}\label{rem:localissimpler}
We can now explain the precise sense in which the spacetime-wise homotopical tPFA/AQFT 
equivalence problem is simpler than the global homotopical equivalence problem.
Analyzing the right Quillen functor \eqref{eqn:globalproblem} which controls the global
homotopical equivalence problem amounts to comparing 
AQFTs and tPFAs over $\Loc^\rc$ that are both satisfying the homotopy time-slice axiom. 
In this case there are no evident strictification theorems for the homotopy time-slice axiom
on either side. In contrast to this, Corollary \ref{cor:simplerspacetimewise}
implies that the spacetime-wise homotopical equivalence problem 
is equivalent to proving that the composite right Quillen functor \eqref{eqn:simplerspacetimewise}
is a right Quillen equivalence. The latter involves comparing
AQFTs over $\Loc^\rc/M$ satisfying the simpler \textit{strict} time-slice axiom with tPFAs over $\Loc^\rc/M$
satisfying the homotopy time-slice axiom, i.e.\ on the AQFT-side all homotopical phenomena
of the homotopy time-slice axiom have been trivialized by the strictification Theorem \ref{theo:O_MCLF}.
Unfortunately, this simplification is insufficient to apply the explicit constructions
from the $1$-categorical spacetime-wise equivalence Theorem \ref{theo:equivalenceM} 
because there is no evident strictification theorem for the homotopy time-slice axiom on the tPFA-side.
\end{rem}

We conclude this subsection by showing that the localized operad $\O_M[W_M^{-1}]$
whose algebras $\HK^W(M) = \Alg_{\O_M[W_M^{-1}]}\big(\Ch_R\big)$ define the source
of the simplified comparison right Quillen functor \eqref{eqn:simplerspacetimewise}
admits a very explicit description.
\begin{propo}\label{prop:O_Mlocalized}
For each $M\in\Loc^\rc$, a model for the localization $\O_M[W_M^{-1}]$ 
of the operad $\O_M$ from Definition \ref{def:O_M} at the subset of $1$-ary operations $W_M\subseteq \mathrm{Mor}^1(\O_M)$ 
consisting of all Cauchy morphisms in $\O_M$ is given by the colored operad
which is defined by the following data:
\begin{itemize}
\item[(i)] The objects are all causally convex opens $U\subseteq M$ which are either
Cauchy or relatively compact.

\item[(ii)] The set of operations from $\und{U} = (U_1,\dots,U_n)$ to $V$ is
\begin{flalign}
\O_M[W_M^{-1}]\big(\substack{V\\\und{U}}\big)\,:=\,\begin{cases}
\Sigma_n\big/\!\sim_{\perp}^{} ~&,~~\text{if }U_i\subseteq D_M(V) \text{ is Cauchy or relatively compact }\\
&~~~\text{for all } i=1,\dots,n\quad,\\[4pt]
\varnothing ~&,~~\text{else}\quad,
\end{cases}
\end{flalign}
where $\Sigma_n$ denotes the permutation group on $n$ letters and $D_M(V)\subseteq M$ denotes the 
Cauchy development of $V\subseteq M$.  The equivalence relation $\sim_\perp^{}$ is defined as follows:
$\sigma\sim_\perp^{} \sigma^\prime$ if and only if the right permutation $\sigma\sigma^{\prime-1} : 
\und{U}\sigma^{-1}\to\und{U}\sigma^{\prime -1}$ is generated by transpositions of adjacent causally 
disjoint subsets of $M$.

\item[(iii)] The composition of $[\sigma] : \und{U}\to V$ with $[\sigma_i]: \und{O}_i\to U_i $, 
for $i=1,\dots,n$, is defined by
\begin{flalign}
[\sigma]\,[\und{\sigma}]\,:=\, \big[\sigma(\sigma_1,\dots,\sigma_n)\big]\,:\,\und{\und{O}}~\longrightarrow~V\quad,
\end{flalign}
where $\sigma(\sigma_1,\dots,\sigma_n)$ denotes the composition in the unital associative operad.

\item[(iv)] The identity operations are $[e] : U\to U $, where $e\in\Sigma_1$ is the identity permutation.

\item[(v)] The permutation action of $\sigma^\prime\in\Sigma_n$ on $[\sigma] : \und{U}\to V$ is given by
\begin{flalign}
[\sigma]\cdot \sigma^\prime\,:=\, [\sigma\sigma^\prime]\,:\, \und{U}\sigma^{\prime}~\longrightarrow~V \quad,
\end{flalign}
where $\und{U}\sigma^{\prime}= (U_{\sigma^\prime(1)},\dots, U_{\sigma^{\prime}(n)})$ denotes 
the permuted tuple and $\sigma\sigma^\prime$ is given by the group operation of the permutation group $\Sigma_n$.
\end{itemize}
The localization multifunctor is given by
\begin{flalign}
L_M\,:\, \O_M~&\longrightarrow~\O_M[W_M^{-1}]\quad,\\
\nn U~&\longmapsto~U\quad,\\
\nn \big([\sigma,\iota_{\und{U}}^V]: \und{U}\to V\big)~&\longmapsto~\big([\sigma]: \und{U}\to V\big)\quad.
\end{flalign}
\end{propo}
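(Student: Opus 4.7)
The plan is to identify the operad described in the proposition with the fractional model of $\O_M[W_M^{-1}]$ provided by the operadic calculus of left fractions, whose applicability to $(\O_M, W_M)$ is granted by Theorem \ref{theo:O_MCLF}. Under the fractional model developed in Appendix \ref{app:operadiclocalization}, an $n$-ary operation $\und{U} \to V$ in the localization is represented by a roof $\und{U} \to V' \leftarrow V$ in $\O_M$, whose numerator is of the form $[\sigma, \iota_{\und{U}}^{V'}]$ and whose denominator $[e, \iota_V^{V'}]$ corresponds to a Cauchy inclusion $V \subseteq V'$, modulo the equivalence relation generated by compatible refinements of the common vertex $V'$. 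I will first determine when such a roof exists and then enumerate the equivalence classes.

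For existence, Lemma \ref{lem:zigzag} characterizes the zig-zag $U_i \subseteq V' \supseteq V$ with $V \subseteq V'$ Cauchy and $U_i \subseteq V'$ either Cauchy or relatively compact by the single condition $U_i \subseteq D_M(V)$. Hence a roof with all numerator entries in $\O_M$ exists precisely when every $U_i$ lies in the Cauchy development $D_M(V)$, which matches the non-empty case in item (ii); otherwise the operation set is empty. When the condition holds, $V' := D_M(V)$ can always be taken as a common denominator for all components simultaneously, reducing the enumeration to choices of permutation data above this fixed vertex.

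I would then identify the residual data on a roof with fixed $V' = D_M(V)$. By Definition \ref{def:AQFToperad} the operation $[\sigma, \iota_{\und{U}}^{V'}]$ depends, beyond the inclusion data determined by source and target, only on the class $[\sigma] \in \Sigma_n/\!\sim_{\perp}$, where $\sim_{\perp}$ is generated by transpositions of adjacent causally disjoint pairs inside $V'$. The crucial technical point is that this equivalence class is independent of the Cauchy extension $V'$ chosen: transpositions of adjacent causally disjoint pairs in one common Cauchy extension remain causally disjoint in any other, since Cauchy inclusions both preserve and reflect causal disjointness, as already exploited in the proof of Theorem \ref{theo:O_MCLF}. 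Combined with the refinement relation built into the calculus of left fractions, this produces the desired bijection between equivalence classes of roofs and $\Sigma_n/\!\sim_{\perp}$, confirming item (ii).

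The remaining verifications are routine. Composition of fractions amounts to stacking roofs and using the calculus of left fractions to commute the Cauchy denominators past the numerators; at the level of permutation data this reproduces the formula $\sigma(\sigma_1,\dots,\sigma_n)$ of item (iii). The identity and the $\Sigma_n$-action in items (iv) and (v) are directly inherited from $\O_M$ along the assignment $[\sigma, \iota_{\und{U}}^V] \mapsto [\sigma]$, which simultaneously yields the stated formula for the localization multifunctor $L_M$. I expect the main obstacle to be the independence-of-denominator argument in the preceding paragraph, as it requires transporting the relation $\sim_{\perp}$ between distinct Cauchy extensions in a way that is compatible with the refinement structure of the fractional model; this is precisely the juncture at which the Lorentzian-geometric input of Appendix \ref{app:Lorentz} is essential.
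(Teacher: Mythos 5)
Your strategy has a foundational gap: it rests on a ``fractional model'' of the operadic localization in which every $n$-ary operation of $\O_M[W_M^{-1}]$ is a roof $\und{U}\to V^\prime \leftarrow V$ with $1$-ary Cauchy denominator, modulo common refinements, and you attribute this model to Appendix \ref{app:operadiclocalization}. No such model is developed there: the appendix only defines the operadic calculus of left fractions and derives the homotopical consequence of Proposition \ref{prop:CLFhomotopical}; it does not prove an operadic Gabriel--Zisman theorem (that every operation of the $\Set$-operad localization is a single left fraction, that two fractions agree iff they admit a common refinement, that composition is well defined via property (3), and that the resulting operad has the universal property). That statement is plausible but nontrivial, and without it your identification of $\O_M[W_M^{-1}]\big(\substack{V\\ \und{U}}\big)$ with classes of roofs is unsupported. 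The paper's own route avoids this entirely: by \cite[Propositions 2.11 and 2.14]{BCStimeslice} the localization of an AQFT operad is determined by localizing its underlying orthogonal category of $1$-ary operations and pushing forward the orthogonality relation; Theorem \ref{theo:O_MCLF} and Remark \ref{rem:OpCLF} give an ordinary calculus of left fractions on $(\O_M^1,W_M)=(\Loc^\rc/M,W_M)$, classical Gabriel--Zisman \cite{GabrielZisman} applies, and Proposition \ref{prop:localizedrcmor} (the localized category is thin, with a morphism $U\to V$ iff the inclusion $U\subseteq D_M(V)$ is Cauchy or relatively compact), together with the identification of the pushforward orthogonality with causal disjointness in $M$, yields exactly the stated operad. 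You would need either this reduction or a proof of the operadic fraction calculus; as written, neither is in place.

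Two further points would need repair even granting a fraction model. First, you state the existence criterion as ``every $U_i$ lies in $D_M(V)$'', but mere containment is strictly weaker than the condition in item (ii) and in Lemma \ref{lem:zigzag}: for the half-diamond $U$ inside the diamond $V$ of Remark \ref{rem:Cauchyextension} one has $U\subseteq V\subseteq D_M(V)$, yet the inclusion $U\subseteq D_M(V)$ is neither Cauchy nor relatively compact and no admissible roof exists, so your criterion would produce a nonempty operation set where the proposition (correctly) gives $\varnothing$. Second, taking $V^\prime:=D_M(V)$ as a common denominator is not justified: $D_M(V)$ is not shown to be an object of $\O_M$ (neither Cauchy nor relatively compact in $M$ in general), which is precisely why the proof of Lemma \ref{lem:zigzag} constructs a smaller $V^\prime$ inside $D_M(V)$ via causally convex hulls; you should invoke that $V^\prime$ instead. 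Finally, your independence-of-$V^\prime$ discussion addresses the easy half (since every admissible $V^\prime\subseteq M$ is causally convex, causal disjointness of $U_i,U_j$ in $V^\prime$ coincides with causal disjointness in $M$, so $\sim_\perp$ does not depend on $V^\prime$), but leaves unaddressed the injectivity question, namely that distinct classes in $\Sigma_n/\!\sim_\perp$ are not merged by the fraction equivalence relation; in the paper's approach this is automatic from the explicit localized orthogonal category of \cite[Appendix B]{BGSHaagKastler} adapted in Proposition \ref{prop:localizedrcmor}, whereas in yours it is exactly where the remaining work lies.
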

\begin{proof}
It was shown in \cite[Propositions 2.11 and 2.14]{BCStimeslice} that localizations of AQFT operads
are completely determined by localizing their underlying subcategories of $1$-ary operations.
Since $(\O_M,W_M)$ admits an operadic calculus of left fractions by Theorem \ref{theo:O_MCLF}, it follows 
from Remark \ref{rem:OpCLF} that its underlying subcategory of $1$-ary operations admits a calculus of 
left fractions $(\O_M^1,W_M)$. The explicit model for the localized operad $\O_M[W_M^{-1}]$ 
then follows by slightly adapting the computations from \cite[Appendix B]{BGSHaagKastler}
of the underlying localized category and its pushforward orthogonality relation
to our context where all inclusion morphisms $\iota_U^V$ must be either Cauchy or relatively compact. 
The necessary changes are implemented by replacing 
\cite[Proposition B.2]{BGSHaagKastler} with Proposition \ref{prop:localizedrcmor}. 
\end{proof}

\subsection{\label{subsec:openproblem}Remaining open problem and technical challenges}
In this subsection we summarize the main achievements of our present paper
and point out the remaining open problem and technical challenges.
Let us recall that the main goal is to prove a homotopical equivalence 
theorem between $\Ch_R$-valued AQFTs and tPFAs over $\Loc^\rc$ satisfying the homotopy time-slice axiom.
This amounts to proving that the tPFA/AQFT-comparison right Quillen functor in 
\eqref{eqn:globalproblem} is a right Quillen equivalence. In Theorem \ref{theo:homotopicalreduction},
we have reduced the global homotopical equivalence problem to a family of
simpler spacetime-wise homotopical equivalence problems for 
$\Ch_R$-valued AQFTs and tPFAs over $\Loc^\rc/M$ satisfying the homotopy time-slice axiom,
for all $M\in\Loc^\rc$. The reduced problem amounts to proving
that the spacetime-wise tPFA/AQFT-comparison right Quillen functors in 
\eqref{eqn:spacetimewiseproblem} are right Quillen equivalences, for all $M\in\Loc^\rc$.
Using our spacetime-wise strictification Theorem \ref{theo:O_MCLF} for the homotopy 
time-slice axiom of AQFTs, this problem can be simplified further to proving that the
partially strictified spacetime-wise tPFA/AQFT-comparison right Quillen functors in 
\eqref{eqn:simplerspacetimewise} are right Quillen equivalences, for all $M\in\Loc^\rc$.
Note that the key difference between \eqref{eqn:spacetimewiseproblem} and \eqref{eqn:simplerspacetimewise}
is that in the latter case the homotopy time-slice axiom for AQFTs over $\Loc^\rc/M$ has been strictified,
i.e.\ on the AQFT-side all homotopical phenomena of the homotopy time-slice axiom have been trivialized.
\sk

In conclusion, the remaining open problem which has to be solved in order
to obtain a global homotopical equivalence theorem 
between $\Ch_R$-valued AQFTs and tPFAs over $\Loc^\rc$ 
satisfying the homotopy time-slice axiom is the following.
\begin{open}\label{open:openproblem}
Prove that the composite right Quillen functor 
\begin{flalign}\label{eqn:openproblem}
\xymatrix{
(L_M\Phi_M)^\ast \,:\,\HK^W(M)\ar[r]^-{L_M^\ast}~&~ \mathcal{L}_{\widehat{W}_M}\HK(M) \ar[r]^-{\Phi_M^\ast}~&~ \mathcal{L}_{\widehat{W}_M}\CG(M)
}
\end{flalign}
from the projective model category $\HK^W(M)\simeq \Alg_{\O_M[W_M^{-1}]}\big(\Ch_R\big)$
from Theorem \ref{theo:projectivemodel} to the left Bousfield localized semi-model category
$\mathcal{L}_{\widehat{W}_M}\CG(M)$ from Theorem \ref{theo:Bousfieldmodel} 
is a right Quillen equivalence, for all objects $M\in\Loc^\rc$.
The spacetime-wise tPFA/AQFT-comparison multifunctor $\Phi_M : \tP_M\to \O_M$
is defined in \eqref{eqn:PhiM}, see also Definition \ref{def:comparisonoperadmorphism}, 
and an explicit model for the localization multifunctor
$L_M : \O_M\to \O_M[W_M^{-1}]$ is given in Proposition \ref{prop:O_Mlocalized}.
\end{open}

There are various approaches one could follow
in order to prove Open Problem \ref{open:openproblem}.
\begin{description}
\item[Homotopical algebra:] Develop models for the derived functors 
associated with the Quillen adjunction \eqref{eqn:openproblem} and prove
that the derived unit and counit are weak equivalences. Since every object in the projective model category
$\HK^W(M)$ is fibrant, the right derived functor $\bbR(L_M\Phi_M)^\ast = (L_M\Phi_M)^\ast 
: \HK^W(M)\to \mathcal{L}_{\widehat{W}_M}\CG(M)$ can be modeled by the underived right Quillen functor.
For the left derived functor $\mathbb{L}(L_M\Phi_M)_! : \mathcal{L}_{\widehat{W}_M}\CG(M)\to \HK^W(M)$
one can use that the cofibrant objects in the projective model category $\CG(M)$
agree with the ones in its left Bousfield localization $\mathcal{L}_{\widehat{W}_M}\CG(M)$,
hence one could use the bar resolution model for operadic left Kan extensions 
from \cite[Theorem 17.2.7 and Section 13.3]{Fresse}. However, proving that the derived unit 
and counit are weak equivalences seems to be very difficult due to the complexity
of such derived functors, hence we are uncertain if this approach will be successful.

\item[Homotopical operadic localization:] Use \cite[Theorem 3.13]{Carmona}
to observe that Open Problem \ref{open:openproblem} would be solved if
the composite multifunctor $L_M\Phi_M : \tP_M \to \O_M[W_M^{-1}]$ exhibits
a homotopical localization of the spacetime-wise tPFA operad $\tP_M $ over $\Loc^\rc/M$
at the subset of $1$-ary operations $W_M\subseteq \mathrm{Mor}^1(\tP_M)$ 
given by all Cauchy morphisms in $\tP_M$. Denoting by $L_{W_M}\tP_{M}$ such a 
homotopical localization as an operad in simplicial sets, this amounts to showing 
that the induced multifunctor $L_{W_M}\tP_{M}\to  \O_M[W_M^{-1}]$ is a weak 
equivalence of operads in simplicial sets. A mild generalization of Theorem \ref{theo:equivalenceM} 
entails that $L_M\Phi_M$ exhibits the ordinary localization of $\tP_M$ at $W_M$, 
hence it suffices to show that all spaces of operations in $L_{W_M}\tP_{M}$ are discrete, 
i.e.\ they have vanishing higher homotopy groups $\pi_{\geq 1}$. 
Since homotopical localizations of operads are currently not yet 
well-developed and studied, we are uncertain if this approach will 
be successful.\footnote{In \cite{OperadicHammock}, 
the authors propose a candidate for $L_{W_M}\tP_{M}$ 
by generalizing the hammock construction of Dwyer-Kan \cite{DwyerKan}
to the operadic setting. However, such a candidate is not known to satisfy 
the appropriate universal property and hence it is currently unknown 
if it presents the homotopical localization of operads.}

\item[$\infty$-categorical localization:] Use similar
arguments as in Appendix \ref{app:operadiclocalization},
which are based on \cite[Corollary 4.11]{Haugseng}, \cite[Theorem 7.3.1]{WhiteYau}
and \cite[Appendix B]{CalaqueCarmona}, to observe that 
Open Problem \ref{open:openproblem} would be solved if 
the functor $(L_M\Phi_M)^\otimes : \tP_M^\otimes \to \O_M[W_M^{-1}]^\otimes$
between the categories of operators (see Definition \ref{def:operatorcategory})
exhibits an $\infty$-localization of the category of operators
$\tP_{M}^\otimes$ of the spacetime-wise tPFA operad over $\Loc^\rc/M$
with respect to the subset $W_M^\otimes\subseteq \Mor\big(\tP_{M}^\otimes\big)$
from Lemma \ref{lem:operatorlocalization}. Similarly to the previous approach 
and by the same mild generalization of Theorem \ref{theo:equivalenceM}, this amounts 
to showing that all spaces of operations in $L_{W_M^{\otimes}}^{H}\tP_{M}^{\otimes}$ 
are discrete, i.e.\ they have vanishing higher homotopy groups $\pi_{\geq 1}$, 
where $L^H$ denotes the classical hammock localization of Dwyer-Kan \cite{DwyerKan}. 
We believe that this is currently the most promising approach, 
because $\infty$-localization of categories is a well-studied subject and there exist explicit
criteria which allow one to detect $\infty$-localizations. 
\end{description}

Following the third approach, we will now show that
the criterion for $\infty$-localizations from \cite{HinichDK}
is inconclusive for our example at hand.
\begin{propo}\label{prop:HinichCriterion}
The sufficient, but not necessary, criterion
from \cite[Key Lemma 1.3.6]{HinichDK} does not
apply to the functor $(L_M\Phi_M)^\otimes : \tP_M^\otimes \to \O_M[W_M^{-1}]^\otimes$.
Hence, it remains undecided whether or not this functor exhibits an $\infty$-localization
of the category of operators $\tP_M^\otimes$ at the
subset $W_M^\otimes\subseteq \mathrm{Mor}\big(\tP_M^\otimes\big)$
defined in Lemma \ref{lem:operatorlocalization}.
\end{propo}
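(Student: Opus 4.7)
The plan is to analyse each of the two criteria in turn, translate it into a concrete combinatorial condition on the pair $(\tP_M^\otimes, W_M^\otimes)$ together with the composite functor $(L_M\Phi_M)^\otimes$, and then exhibit a specific configuration in which that condition is violated. Since both criteria are sufficient but not necessary, the conclusion is that they leave the $\infty$-localization question open rather than settling it negatively.

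My first step is to recall the precise formulations. The criterion of \cite[Key Lemma 1.3.6]{HinichDK} requires, for every object of the target $\O_M[W_M^{-1}]^\otimes$, the contractibility of a suitable category of right fractions assembled from $\tP_M^\otimes$-morphisms and formal inverses of $W_M^\otimes$-morphisms. The criterion of \cite{HinichDKnew} is a variant phrased in terms of marked simplicial structures and homotopy-coherent zig-zags, and imposes an analogous contractibility condition on the associated hammock-like diagrams. Using the explicit model for $\O_M[W_M^{-1}]$ from Proposition \ref{prop:O_Mlocalized}, the explicit description of $W_M^\otimes$ from Lemma \ref{lem:operatorlocalization}, and the definitions of $\tP_M$ and $\Phi_M$ from Definitions \ref{def:tP_M} and \ref{def:comparisonoperadmorphism}, both criteria reduce to explicit statements about the combinatorics of time-orderable tuples, permutations and Cauchy inclusions inside a single spacetime $M$.

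Second, I would verify each criterion on small examples. The tPFA operad encodes operations via time-orderable tuples without permutation data, whereas the AQFT operad (and its localization) encodes them via permutations modulo causally disjoint transpositions; the composite $L_M\Phi_M$ selects, for each time-orderable tuple, a specific permutation representative via its time-ordering. The criteria demand that the preimages in $\tP_M^\otimes$, together with correction zig-zags along $W_M^\otimes$, assemble into contractible classifying spaces over each target operation. I expect to locate an explicit target operation whose preimage decomposes into disconnected components that no Cauchy zig-zag can merge, precisely because the model of Proposition \ref{prop:O_Mlocalized} already collapses permutation data that $\tP_M$ can distinguish, while Cauchy inclusions preserve enough of the causal structure to be unable to bridge the two components.

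The main technical obstacle is the careful bookkeeping inside the category of operators $\tP_M^\otimes$, where active and inert morphisms interact nontrivially with Cauchy data and permutation data. One has to rule out that a hidden higher-arity zig-zag covertly restores the required contractibility, which requires a systematic enumeration rather than a clever shortcut. Once this is done, the failure is seen to be of a mild, $\pi_0$-level nature, consistent with Remark \ref{rem:outlook}: no genuinely higher homotopical obstruction appears, and the discrepancy is entirely combinatorial, originating from the mismatch between the time-ordering datum on the tPFA side and its causal-disjointness quotient on the AQFT side. This mildness is precisely what makes it plausible that a slight strengthening of the criteria of \cite{HinichDK,HinichDKnew} would still settle Open Problem \ref{open:openproblem} positively, even though their current formulations do not.
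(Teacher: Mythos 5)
There is a genuine gap: your proposal is a programme rather than a proof, and the failure mechanism you anticipate is not the one that actually occurs. You expect to find a target operation whose preimage is \emph{non-empty but disconnected}, the disconnection coming from the mismatch between time-ordering data in $\tP_M$ and the causal-disjointness quotient of permutations in $\O_M[W_M^{-1}]$. But for causally disjoint entries both orderings are time-orderable and the two corresponding preimages are identified in the fiber by a $W_M^\otimes$-morphism of the form $(\sigma,\und{w})$ with $\sigma$ a permutation and $\und{w}$ identities (these belong to $W_M^\otimes$ by Lemma \ref{lem:operatorlocalization}), so no disconnection arises from the permutation collapse; consistently with this, the fibers over objects ($n=0$) are in fact weakly contractible (cf.\ Remark \ref{rem:outlook}). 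The actual failure is detected at the level of arrows ($n=1$) and is more drastic than disconnectedness: one exhibits a binary operation $[\sigma]:(U_1,U_2)\to M$ in the localized operad of Proposition \ref{prop:O_Mlocalized} whose entries $U_1,U_2$ overlap in such a way that \emph{every} choice of Cauchy surfaces $\Sigma_1\subset U_1$, $\Sigma_2\subset U_2$ intersects (two overlapping diamonds in $2$-dimensional Minkowski spacetime do the job, see \eqref{eqn:crossingsurfaces}). Then $(U_1,U_2)$ cannot be shrunk through Cauchy morphisms to a time-orderable pair, so there is \emph{no} operation in $\tP_M^\otimes$ mapping onto $[\sigma]$ compatibly with $W_M^\otimes$-identifications: the homotopy fiber is empty, hence its nerve is not weakly contractible.

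This Lorentzian-geometric input is the essential content of the argument and is missing from your proposal; without it, the "systematic enumeration of higher-arity zig-zags" you envisage has nothing to enumerate against, and the combinatorial mismatch you isolate (time-ordering versus causal-disjointness quotient) does not by itself violate Hinich's criteria. Your concluding remark that the failure is "mild and $\pi_0$-level" is also slightly off target: the obstruction is an existence (emptiness) failure for fibers over arrows, not a $\pi_0$-separation of a non-empty fiber, which is precisely why the partial lifting results of Remark \ref{rem:outlook} (solvability for non-empty $\mathbf{K}$) do not contradict the counterexample. To repair the proof you would need to (i) state Hinich's criteria in the precise form used here, namely weak contractibility of the nerves of the homotopy fibers of $(L_M\Phi_M)^\otimes_\ast:\Fun\big([n],\tP_M^\otimes\big)^{W_M^\otimes}\to\Fun\big([n],\O_M[W_M^{-1}]^\otimes\big)^{\cong}$ for $n=0,1$, and (ii) produce the explicit crossing-Cauchy-surface configuration and verify that the resulting fiber over the associated binary operation is empty.
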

\begin{proof}
We start with spelling out Hinich's criterion
in our context. Let $[n]\in \Delta$
be the $n$-simplex category, i.e.\ $[n]:=\big(0\to 1\to\cdots\to n\big)$, 
for all $n\in \bbZ^{\geq 0}$.
We write $\Fun\big([n],\O_M[W_M^{-1}]^\otimes\big)^{\cong}$
for the category of all functors $[n]\to \O_M[W_M^{-1}]^\otimes$
with morphisms given by all natural \textit{iso}morphisms. Furthermore, we write
$\Fun\big([n],\tP_M^\otimes\big)^{W_M^\otimes}$
for the category of all functors $[n]\to \tP_M^\otimes$
with morphisms given by all natural transformations whose components 
belong to the subset $W_M^\otimes\subseteq \mathrm{Mor}\big(\tP_M^\otimes\big)$
from Lemma \ref{lem:operatorlocalization}. The functor $(L_M\Phi_M)^\otimes : 
\tP_M^\otimes \to \O_M[W_M^{-1}]^\otimes$ induces via pushforward a family of functors
\begin{flalign}
(L_M\Phi_M)^\otimes_\ast\,:\, \Fun\big([n],\tP_M^\otimes\big)^{W_M^\otimes}~\longrightarrow~
\Fun\big([n],\O_M[W_M^{-1}]^\otimes\big)^{\cong}\quad,
\end{flalign}
for all $n\in \bbZ^{\geq 0}$, because $(L_M\Phi_M)^\otimes$ sends $W_M^\otimes$ to isomorphisms. 
The criterion for $\infty$-localizations in 
\cite[Key Lemma 1.3.6]{HinichDK} amounts to checking that the homotopy
fibers of these functors
\begin{flalign}
\begin{gathered}
\xymatrix{
\ar@{-->}[d]\Big(\Fun\big([n],\tP_M^\otimes\big)^{W_M^\otimes}\Big)_\Psi \ar@{-->}[r]~&~ \Fun\big([n],\tP_M^\otimes\big)^{W_M^\otimes}\ar[d]^-{(L_M\Phi_M)^\otimes_\ast}\\
\pt \ar[r]_-{\Psi}~&~ \Fun\big([n],\O_M[W_M^{-1}]^\otimes\big)^{\cong}
}
\end{gathered}
\end{flalign}
have a weakly contractible nerve, for all $n\in \bbZ^{\geq 0}$ and all 
functors $\Psi : [n]\to \O_M[W_M^{-1}]^\otimes$. 
\sk

We will now show that this criterion fails for $n=1$ by exhibiting examples of 
operations $\Psi : [1]\to \O_M[W_M^{-1}]^\otimes$ such that the homotopy fiber is empty.
Consider any binary operation $[\sigma] : (U_1,U_2)\to M$ in
the localized AQFT operad  $\O_M[W_M^{-1}]$ from Proposition \ref{prop:O_Mlocalized}
such that any choice of Cauchy surfaces $\Sigma_1\subset U_1$ and $\Sigma_2\subset U_2$
intersect $\Sigma_1\cap\Sigma_2\neq \varnothing$. An example 
is given by the following two overlapping regions in the $2$-dimensional Minkowski spacetime:
\begin{flalign}\label{eqn:crossingsurfaces}
\begin{gathered}
\begin{tikzpicture}[scale=1.5]
\draw[fill=gray!5,dotted] (-1,0) -- (0,1) -- (0.5,0.5) -- (-0.5,-0.5) -- (-1,0);
\draw[fill=gray!15,dotted,opacity=0.4] (-0.5,0.5) -- (0,1) -- (1,0) -- (0.5,-0.5) -- (-0.5,0.5);
\draw (-0.5,-0.25) node {{\footnotesize $U_1$}};
\draw (0.5,-0.25) node {{\footnotesize $U_2$}}; 
\draw[very thick, blue] (-1,0)  .. controls (0,0.55) .. (0.5, 0.5) node[pos=0.2, below] {{\footnotesize $\Sigma_1$}} ;
\draw[very thick, red] (-0.5,0.5)  .. controls (0,0.55) .. (1, 0) node[pos=0.8, below] {{\footnotesize $\Sigma_2$}} ;
\draw[very thick, ->] (-1.25,-0.5) -- (-1.25,1) node[above] {{\footnotesize time}};
\end{tikzpicture}
\end{gathered}
\end{flalign}
Objects in the homotopy fiber $\big(\Fun\big([1],\tP_M^\otimes\big)^{W_M^\otimes}\big)_{[\sigma]}$ 
over this operation are binary operations $(\iota_{U^\prime_1}^V,\iota_{U^\prime_2}^V) : (U_1^\prime,U_2^\prime)\to V$
in the tPFA operad $\tP_M$ from Definition \ref{def:tP_M} together with $\O_M[W_M^{-1}]^\otimes$-isomorphisms
$(U_1^\prime,U_2^\prime)\cong (U_1,U_2)$ and $V\cong M$ such that the diagram
\begin{flalign}\label{eqn:totuplediagram}
\begin{gathered}
\xymatrix@C=3em{
\ar[d]_-{\cong} (U_1^\prime,U_2^\prime) \ar[r]^-{[\rho^{-1}]}~&~ V\ar[d]^-{\cong}\\
(U_1,U_2) \ar[r]_-{[\sigma]}~&~ M
}
\end{gathered}
\end{flalign}
in $\O_M[W_M^{-1}]^\otimes$ commutes, where $\rho\in \Sigma_2$ is a time-ordering permutation
for the time-orderable tuple $(\iota_{U^\prime_1}^V,\iota_{U^\prime_2}^V)$. Since by our 
hypothesis any choice of Cauchy surfaces $\Sigma_1\subset U_1$ and $\Sigma_2\subset U_2$ intersect,
it is impossible to shrink $(U_1,U_2)$ through Cauchy morphisms to 
a pair of disjoint causally convex open subsets. In particular, there exists
no time-orderable tuple $(\iota_{U^\prime_1}^V,\iota_{U^\prime_2}^V)$ making the 
diagram \eqref{eqn:totuplediagram} commute, hence the homotopy fiber is empty.
\end{proof}

\begin{rem}\label{rem:othercriteria}
We would like to note that there exist also other sufficient, 
but not necessary, criteria to detect $\infty$-localizations,
for instance the criterion in \cite[Appendix A.2]{Scheimbauer} which originated in the work of Ayala and Francis
and the criterion in \cite[Proposition 4.2.18]{Harpaz} which is based on Lurie's concept of 
weak approximations of $\infty$-operads. These criteria are inconclusive for our example due to the
same counterexamples we provided in the proof of Proposition \ref{prop:HinichCriterion}.
\end{rem}

\begin{rem}\label{rem:outlook}
The problem of showing weak contractibility of the nerves of the homotopy fibers 
in Proposition \ref{prop:HinichCriterion} can be rephrased into the following 
equivalent lifting problems
\begin{equation}\label{eqn:liftingproblems}
\begin{tikzcd}
\mathbf{K} \ar[rr] \ar[dd,"!"', ""{name=S}] && \Big(\Fun\big([n],\tP_M^\otimes\big)^{W_M^\otimes}\Big)_\Psi \\\\
\pt \ar[rruu, dashed, ""{name=T, above}] \ar[from=S, to=T, shift left=4, "\substack{~\exists~\\\Leftarrow ~\cdots~ \Rightarrow}" description, phantom]
\end{tikzcd}\qquad,
\end{equation}
where $\mathbf{K}$ is any finite poset and the top horizontal arrow is any functor to the homotopy fiber.
Note that the triangle does not have to commute strictly, but only up to finite zig-zags of natural transformations.
One checks that these lifting problems can be solved for $n=0$, showing that 
all homotopy fibers $\big(\Fun\big([0],\tP_M^\otimes\big)^{W_M^\otimes}\big)_\Psi$ with $[n]=[0]$
are weakly contractible. For $n=1$, our preliminary investigations suggest
that the corresponding lifting problem \eqref{eqn:liftingproblems} can be solved whenever 
the finite poset $\mathbf{K}\neq \varnothing$ is non-empty. 
Our hope is that one can combine such partial lifting results with the conclusion 
of Theorem \ref{theo:equivalenceM}, which implies that the functor 
$(L_M\Phi_M)^\otimes : \tP_M^\otimes \to \O_M[W_M^{-1}]^\otimes$
is a localization of $1$-categories, to solve Open Problem \ref{open:openproblem}.
\end{rem}

%%%%%%%%%%%%%%%%%%%%%%%%%%%%%%%%%%%%%%%%%%%%%%%%
%%%%%%%%%%%%%%%%%%%%%%%%%%%%%%%%%%%%%%%%%%%%%%%%

\section*{Acknowledgments}
We would like to thank Ettore Minguzzi for suggesting to us 
a proof for Lemma \ref{lem:Dclosed}.
The work of M.B.\ is supported in part by the MUR Excellence 
Department Project awarded to Dipartimento di Matematica, 
Universit{\`a} di Genova (CUP D33C23001110001) and it is fostered by 
the National Group of Mathematical Physics (GNFM-INdAM (IT)). 
A.G-S.\ was supported by Royal Society Enhancement Grants (RF\textbackslash ERE\textbackslash 210053
and RF\textbackslash ERE\textbackslash 231077).
A.S.\ gratefully acknowledges the support of 
the Royal Society (UK) through a Royal Society University 
Research Fellowship (URF\textbackslash R\textbackslash 211015)
and Enhancement Grants (RF\textbackslash ERE\textbackslash 210053 and 
RF\textbackslash ERE\textbackslash 231077).

%%%%%%%%%%%%%%%%%%%%%%%%%%%%%%%%%%%%%%%%%%%%%%%%%
%%%%%%%%%%%%%%%%%%%%%%%%%%%%%%%%%%%%%%%%%%%%%%%%%
%
%
%\section*{Data availability statement}
%All data generated or analyzed during this study are contained in this document. 
%
%
%\section*{Conflict of interest statement}
%The authors have no conflict of interest to declare that are relevant to the content of this article. 
%
%%%%%%%%%%%%%%%%%%%%%%%%%%%%%%%%%%%%%%%%%%%%%%%%%
%%%%%%%%%%%%%%%%%%%%%%%%%%%%%%%%%%%%%%%%%%%%%%%%%

\appendix

\section{\label{app:operadiclocalization}Operadic calculus of left fractions}
In this appendix we introduce a concept of operadic calculus of left fractions,
generalizing the corresponding concept from category theory \cite{GabrielZisman}
to operads. We then show that an operadic calculus of left fractions has 
non-trivial homotopical consequences for the corresponding operadic localization problem.
\begin{defi}\label{def:OpCLF}
Let $\Q$ be a $\Set$-valued colored operad and $W\subseteq \mathrm{Mor}^1(\Q)$ a subset of 
the $1$-ary operations in $\Q$. We say that the pair $(\Q,W)$ admits an 
\textit{operadic calculus of left fractions} if the following properties hold true:
\begin{itemize}
\item[(1)] All identity operations $\id_{M} : M\to M$ in $\Q$ belong to $W$.

\item[(2)] Given any two $1$-ary operations $f:M\to N$ and $g:N\to L$ in $W$,
then their operadic composition $g\,f: M\to L$ in $\Q$ belongs to $W$.

\item[(3)] Given any $n$-ary operation $\psi : \und{M}\to N$ in $\Q$ 
and any tuple $\und{w} = (w_1,\dots,w_n): \und{M}\to\und{M}^\prime$ of $1$-ary operations $w_i:M_i\to M_i^\prime$
in $W$, then there exists an $n$-ary operation $\psi^\prime : \und{M}^\prime \to N^\prime$
in $\Q$ and a $1$-ary operation $w^\prime : N\to N^\prime$ in $W$ such that the diagram
\begin{flalign}
\begin{gathered}
\xymatrix{
\ar[d]_-{\und{w}}\und{M}  \ar[r]^-{\psi}~&~ N\ar@{-->}[d]^-{w^\prime}\\
\und{M}^\prime \ar@{-->}[r]_-{\psi^\prime}~&~N^\prime
}
\end{gathered}
\end{flalign}
of operadic compositions in $\Q$ commutes.

\item[(4)] Given any two parallel operations $\psi_1,\psi_2 : \und{M}\to N$ in $\Q$ and
any tuple $\und{w}:\und{M}^\prime\to \und{M}$ of $1$-ary operations in $W$ 
such that $\psi_1 \und{w} = \psi_2\und{w} $ under operadic composition in $\Q$,
then there exists a $1$-ary operation $w^\prime : N\to N^\prime$ in $W$ such that
$w^\prime \psi_1 = w^\prime \psi_2$ under operadic composition in $\Q$. We visualize this graphically by
\begin{flalign}
\begin{gathered}
\xymatrix{
\und{M}^\prime \ar[r]^-{\und{w}}~&~ \und{M} \ar@<0.6ex>[r]^-{\psi_1}\ar@<-0.6ex>[r]_-{\psi_2}~&~N \ar@{-->}[r]^-{w^\prime} ~&~N^\prime
}
\end{gathered}\quad.
\end{flalign}
\end{itemize}
\end{defi}
\begin{rem}\label{rem:OpCLF}
Every operadic calculus of left fractions $(\Q,W)$ induces an ordinary
calculus of left fractions $(\Q^1,W)$ in the sense of \cite{GabrielZisman}
on the subcategory $\Q^1\subseteq \Q$ of $1$-ary operations in $\Q$. The converse 
statement is in general \textit{false} because the square filling and the
coequalization properties in Definition \ref{def:OpCLF} 
are required for operations $\psi,\psi_1,\psi_2$ of any arity $n$.
\end{rem}

We believe that the existence of an operadic calculus of left fractions
implies that the operadic Hammock localization $L_W\O$ is weakly equivalent (as a simplicial operad)
to the ordinary localization $\O[W^{-1}]$ of $\Set$-valued operads, which would provide an operadic
generalization of the result in \cite[Proposition 7.3]{DwyerKan} for 
Dwyer-Kan localizations of categories. In combination with \cite[Theorem 3.13]{Carmona},
this would lead to a proof for the Quillen equivalence statement in Theorem \ref{theo:O_MCLF}. 
Unfortunately, the theory of operadic Hammock localizations is not yet sufficiently developed 
to provide a proof for this claim, which is why we propose an alternative proof strategy
that is based on Lurie's concept of $\infty$-operads \cite[Definition 2.1.1.10]{LurieHA} and 
the results of \cite[Appendix B]{CalaqueCarmona}. (See also \cite[Appendix A.4]{Scheimbauer}
for a similar approach.) A central object in this approach
is the category of operators associated with a colored operad.
\begin{defi}\label{def:operatorcategory}
Let $\Q$ be a $\Set$-valued colored operad. The associated 
\textit{category of operators} $\Q^\otimes$ is the category consisting
of the following objects and morphisms:
\begin{itemize}
\item An object in $\Q^\otimes$ is a pair $\big(\langle n\rangle,\und{M}\big)$
consisting of a pointed finite set $\langle n\rangle := \{0,1,\dots,n\}\in\mathbf{Fin}_\ast$, 
with base point $0\in \langle n\rangle$, and a (possibly empty) 
tuple $\und{M}=(M_1,\dots,M_n)$ of objects $M_i\in\Q$, for all $i=1,\dots,n$.

\item A morphism  $(\phi,\und{\phi}): \big(\langle n\rangle,\und{M}\big)\to
\big(\langle n^\prime\rangle,\und{M}^\prime\big)$ 
in $\Q^\otimes$ is a pair consisting of a base-point preserving function $\phi : 
\langle n\rangle \to \langle n^\prime\rangle$, i.e.\ a $\mathbf{Fin}_\ast$-morphism,
and a tuple $\und{\phi}=(\phi_1,\dots,\phi_{n^\prime})$
of operations $\phi_j : \und{M}_{\phi^{-1}(j)}\to M^\prime_j$ in $\Q$,
for all $j=1,\dots,n^\prime$. Here $\und{M}_{\phi^{-1}(j)}$ denotes the restricted tuple
consisting of all objects $M_i\in\Q$ such that $\phi(i) = j$.
\end{itemize}
Composition of morphisms in $\Q^\otimes$ is given by composition of the underlying
base-point preserving functions and operadic composition of the operations. The identities
in $\Q^\otimes$ are given by $(\id_{\langle n\rangle},\id_{\und{M}}) : 
\big(\langle n\rangle,\und{M}\big)\to \big(\langle n\rangle,\und{M}\big)$, where 
$\id_{\und{M}} = (\id_{M_1},\dots,\id_{M_n})$ is the tuple of identity operations in $\Q$.
The category of operators comes endowed with a canonical functor
$\pi:\Q^\otimes\to \mathbf{Fin}_\ast$ which assigns the underlying pointed finite sets
$\big(\langle n\rangle,\und{M}\big)\mapsto \langle n\rangle$ and their morphisms
$(\phi,\und{\phi})\mapsto \phi$.
\end{defi}

\begin{lem}\label{lem:operatorlocalization}
Suppose that a pair $(\Q,W)$ consisting of a $\Set$-valued colored operad
and a subset $W\subseteq \mathrm{Mor}^1(\Q)$ of the $1$-ary operations in $\Q$
admits an operadic calculus of left fractions. Define the pair $(\Q^\otimes,W^\otimes)$
in terms of the category of operators from Definition \ref{def:operatorcategory}
and the subset $W^\otimes \subseteq\mathrm{Mor}(\Q^\otimes)$ given by all $\Q^\otimes$-morphisms
of the form $(\sigma,\und{w}) :\big(\langle n\rangle,\und{M}\big)\to 
\big(\langle n\rangle,\und{M}^\prime\big) $, where $\sigma :\langle n\rangle\stackrel{\cong}{\to}\langle n\rangle$
is any $\mathbf{Fin}_\ast$-isomorphism and $\und{w} = (w_1,\dots,w_n)$ is any 
tuple of $1$-ary operations $w_i : M_{\sigma^{-1}(i)}\to M_{i}^\prime$ in $W$, for all $i=1,\dots,n$. 
Then the pair $(\Q^\otimes,W^\otimes)$
admits a calculus of left fractions in the sense of \cite{GabrielZisman}.
\end{lem}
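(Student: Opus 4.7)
The plan is to verify the four defining axioms of a Gabriel--Zisman calculus of left fractions for $(\Q^\otimes, W^\otimes)$ by reducing them componentwise, via the canonical functor $\pi : \Q^\otimes \to \mathbf{Fin}_\ast$, to the four corresponding operadic axioms of Definition \ref{def:OpCLF} applied fibrewise to operations of $\Q$. The closure of $W^\otimes$ under identities and composition is a direct check: every identity $(\id_{\langle n\rangle}, \id_{\und{M}})$ belongs to $W^\otimes$ because each $\id_{M_i}$ lies in $W$ by property (1) of Definition \ref{def:OpCLF}; and if two composable morphisms $(\sigma,\und{w})$ and $(\sigma',\und{w}')$ both lie in $W^\otimes$, then their composite has underlying $\mathbf{Fin}_\ast$-isomorphism $\sigma'\sigma$ and $j$-th component $w'_j\, w_{\sigma'^{-1}(j)}$, which remains in $W$ by property (2) of Definition \ref{def:OpCLF}.

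The substantive content lies in the square-filling axiom and the coequalization axiom, and the key observation is that in both cases the $W^\otimes$-morphism to be constructed can be chosen with identity $\mathbf{Fin}_\ast$-component. For the square-filling axiom, given $(\phi,\und{\phi}) : (\langle n\rangle,\und{M}) \to (\langle m\rangle,\und{N})$ in $\Q^\otimes$ and $(\sigma,\und{w}) : (\langle n\rangle,\und{M}) \to (\langle n\rangle,\und{M}')$ in $W^\otimes$, I would choose the new $\mathbf{Fin}_\ast$-morphism of the completion to be $\phi\sigma^{-1}$ and the filling $W^\otimes$-morphism to have identity $\mathbf{Fin}_\ast$-component. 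Then for each $j \in \{1,\dots,m\}$, I apply property (3) of Definition \ref{def:OpCLF} to the operation $\phi_j : \und{M}_{\phi^{-1}(j)} \to N_j$ together with the subtuple of $\und{w}$ indexed by $\phi^{-1}(j)$, which produces the $j$-th component $v_j \in W$ of the filling $W^\otimes$-morphism and the $j$-th component $\phi'_j$ of the completion, simultaneously. For the coequalization axiom, if $(\phi,\und{\phi})$ and $(\phi',\und{\phi}')$ are parallel $\Q^\otimes$-morphisms precomposed with a $W^\otimes$-morphism $(\sigma,\und{w})$, then since $\sigma$ is a $\mathbf{Fin}_\ast$-isomorphism the equality forces $\phi = \phi'$ at the underlying pointed-set level, reducing the problem to fibrewise coequalization of parallel $\Q$-operations under precomposition with a tuple of $W$-operations. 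Property (4) of Definition \ref{def:OpCLF} then supplies the $j$-th component $v_j \in W$ of the sought $W^\otimes$-morphism $(\id_{\langle m\rangle}, \und{v})$.

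The main obstacle is not conceptual but combinatorial: when the underlying permutation $\sigma$ of the chosen $W^\otimes$-morphism is nontrivial, the indexing of the tuple $\und{w}$ is twisted, and each fibre $\phi^{-1}(j)$ must be transported along $\sigma$ before the relevant operadic axiom can be applied. Careful tracking of these index bijections is required to confirm that the assembled $\Q^\otimes$-diagrams actually commute, but the argument is otherwise routine and does not interfere with the fibrewise application of Definition \ref{def:OpCLF}. Once this bookkeeping is completed for all four axioms, the calculus of left fractions for $(\Q^\otimes, W^\otimes)$ follows.
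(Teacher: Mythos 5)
Your proposal is correct and takes essentially the same route as the paper, which disposes of this lemma as a ``direct and straightforward check'' from Definition \ref{def:operatorcategory}: your componentwise verification, with the filling and coequalizing $W^\otimes$-morphisms chosen to have identity $\mathbf{Fin}_\ast$-component and properties (1)--(4) of Definition \ref{def:OpCLF} applied fibrewise over each $\phi^{-1}(j)$ (transported along $\sigma$), is exactly that check spelled out. The only bookkeeping point, the $\sigma$-twisted indexing of $\und{w}$, is one you identify and handle correctly, so there is no gap.
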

\begin{proof}
Using the Definition \ref{def:operatorcategory} of $\Q^\otimes$,
this is a direct and straightforward check.
\end{proof}
Our goal is to show that, for a pair $(\Q,W)$ as in Lemma \ref{lem:operatorlocalization}, 
one can determine the localized operad $\Q[W^{-1}]$ 
by means of the localization $\Q^\otimes[(W^\otimes)^{-1}]$ of the associated 
category of operators $Q^\otimes$ at its subset of morphisms $W^\otimes$, 
which as a consequence of Lemma \ref{lem:operatorlocalization} admits a simple and explicit 
model in terms of the same objects as $\Q^\otimes$ and morphisms given by 
(equivalence classes of) fractions $\big(\langle m \rangle,\und{M}\big) \to
\big(\langle n \rangle,\und{\widetilde{N}} \big) \leftarrow \big(\langle n \rangle,\und{N}\big)$, 
where the right-pointing morphism belongs to $\Q^\otimes$ and the left-pointing 
one belongs to $W^\otimes$. (A detailed description of the relevant equivalence 
relation can be found e.g.\ in \cite[Section I.2]{GabrielZisman}.) 

\begin{lem}\label{lem:catop-vs-loc}
Suppose that a pair $(\Q,W)$ consisting of a $\Set$-valued colored operad
and a subset $W\subseteq \mathrm{Mor}^1(\Q)$ of the $1$-ary operations in $\Q$
admits an operadic calculus of left fractions. Then the localized functor 
$\pi: \Q^\otimes[(W^\otimes)^{-1}] \to \mathbf{Fin}_\ast$ associated with 
the canonical functor $\pi: \Q^\otimes \to \mathbf{Fin}_\ast$ 
from Definition \ref{def:operatorcategory} is equivalent to the category of 
operators associated with the localized colored operad $\Q[W^{-1}]$. 
In particular, the categories $\Q[W^{-1}]^\otimes \simeq \Q^\otimes[(W^\otimes)^{-1}]$ are equivalent. 
\end{lem}
\begin{proof}
To achieve the goal one proceeds in two steps. 
First, one checks that the localized 
category $Q^\otimes[(W^\otimes)^{-1}]$ arises as the category of operators 
associated with a colored operad. Second, one checks 
that the latter colored operad exhibits the localization of $\Q$ at $W$. 
\sk

To carry out the first step, one has to verify that 
$\pi: \Q^\otimes[(W^\otimes)^{-1}] \to \mathbf{Fin}_\ast$ fulfills 
the properties from \cite[Proposition 2.2.11]{allegedly}, which 
characterize categories of operators among all categories over $\mathbf{Fin}_\ast$.
In detail, a functor $p: \EE \to \mathbf{Fin}_\ast$ 
is equivalent to the category of operators associated with a colored operad, 
if and only if 
(a)~there exist $p$-cocartesian lifts for inert morphisms in $\mathbf{Fin}_\ast$, 
(b)~the set $\EE(E,F)_\alpha$ of morphisms $E \to F$ in $\EE$ over $\alpha$ 
is in bijective correspondence with the product 
$\prod_{i \in \langle n \rangle} \EE(E,F_i)_{\rho_i \circ \alpha}$, for all 
morphisms $\alpha: \langle m \rangle \to \langle n \rangle$ in $\mathbf{Fin}_\ast$, 
all objects $E \in \EE$ over $\langle m \rangle \in \mathbf{Fin}_\ast$ and
$F \in \EE$ over $\langle n \rangle \in \mathbf{Fin}_\ast$, 
and all $p$-cocartesian morphisms $F \to F_i$ in $\EE$ 
over the inert morphisms $\rho_i: \langle n \rangle \to \langle 1 \rangle$ 
that send $i$ to $1$ and $j \neq i$ to the base-point $0$, and 
(c) there exists an object $E \in \EE$ over $\langle m \rangle \in \mathbf{Fin}_\ast$ 
and $p$-cocartesian morphisms $E \to E_i$ over $\rho_i$, for all objects $E_1, \ldots, E_n \in \EE$ over $\langle 1 \rangle \in \mathbf{Fin}_\ast$.
Because $\pi: \Q^\otimes \to \mathbf{Fin}_\ast$ is by construction 
the category of operators associated with the colored operad $\Q$, 
it fulfills properties (a-c) with the simple choice of $\pi$-cocartesian lifts 
\begin{equation}\label{eq:cocart-lift}
\widetilde{\rho} := \big( \rho, \und{\id} \big): 
\big( \langle n \rangle, \underline{M} \big) \longrightarrow \big( \langle m \rangle, (M_{\rho^{-1}(1)}, \ldots, M_{\rho^{-1}(m)}) \big)
\end{equation}
in $\Q^\otimes$, for all objects $\big( \langle n \rangle, \underline{M} \big) \in \Q^\otimes$ 
and all inert morphisms $\rho: \langle n \rangle \to \langle m \rangle$ in $\mathbf{Fin}_\ast$. 
Using also the model of the localized category 
$\Q^\otimes[(W^\otimes)^{-1}]$ provided by the calculus of left fractions, 
the choice \eqref{eq:cocart-lift} suggests to consider 
as $\pi$-cocartesian lifts of inert morphisms in the case of 
$\pi: \Q^\otimes[(W^\otimes)^{-1}] \to \mathbf{Fin}_\ast$ the
(equivalence classes of the) fractions 
\begin{equation}\label{eq:cocart-lift-loc}
\xymatrix@C=3em{ 
\big( \langle n \rangle, \underline{M} \big) \ar[r]^-{\widetilde{\rho}} & \big( \langle m \rangle, (M_{\rho^{-1}(1)}, \ldots, M_{\rho^{-1}(m)}) \big) & \big( \langle m \rangle, (\widetilde{M}_{\rho^{-1}(1)}, \ldots, \widetilde{M}_{\rho^{-1}(m)}) \big) \ar[l]_{(\id,\und{\id})} 
}
\end{equation}
in $\Q^\otimes[(W^\otimes)^{-1}]$. 
With the choice \eqref{eq:cocart-lift-loc} one directly checks that 
$\pi: \Q^\otimes[(W^\otimes)^{-1}] \to \mathbf{Fin}_\ast$ 
fulfills properties (a-c) and hence is equivalent to the category of operators 
associated with a colored operad. 
\sk

To carry out the second step, one has to verify that 
$\pi: \Q^\otimes[(W^\otimes)^{-1}] \to \mathbf{Fin}_\ast$ 
is actually equivalent to the category of operators associated 
with a colored operad that fulfills the universal property of 
the localization of the colored operad $\Q$ at $W$. First, one observes that 
functors (over $\mathbf{Fin}_\ast$) defined on $\Q^\otimes[(W^\otimes)^{-1}]$ 
precisely correspond to functors (over $\mathbf{Fin}_\ast$) 
defined on $\Q^\otimes$ that send $W^\otimes$ to isomorphisms. 
Therefore, by \cite[Proposition 2.2.13]{allegedly} one is left with the check 
that the above correspondence restricts to functors preserving 
$\pi$-cocartesian lifts of inert morphisms. This, however, follows from the 
fact that our choice of $\pi$-cocartesian lifts \eqref{eq:cocart-lift-loc} 
for $\pi: \Q^\otimes[(W^\otimes)^{-1}] \to \mathbf{Fin}_\ast$ is exactly 
the image under the localization functor 
$\Q^\otimes \to \Q^\otimes[(W^\otimes)^{-1}]$ of the $\pi$-cocartesian lifts 
\eqref{eq:cocart-lift} for $\pi: \Q^\otimes \to \mathbf{Fin}_\ast$. 
\end{proof}

The main result of this appendix is the following proposition.
\begin{propo}\label{prop:CLFhomotopical}
Suppose that a pair $(\Q,W)$ consisting of a $\Set$-valued colored operad
and a subset $W\subseteq \mathrm{Mor}^1(\Q)$ of the $1$-ary operations in $\Q$
admits an operadic calculus of left fractions. Then the localization
multifunctor $L : \Q\to \Q[W^{-1}]$ of $\Set$-valued operads
(see \cite[Definition 2.12]{BCStimeslice}) induces a Quillen equivalence
\begin{flalign}\label{eqn:L!Lastadjunction}
\xymatrix{
L_!\,:\,\mathcal{L}_{\widehat{W}}\Alg_{\Q}\big(\Ch_R\big) \ar@<0.75ex>[r]~&~\ar@<0.75ex>[l]
\Alg_{\Q[W^{-1}]}\big(\Ch_R\big)\,:\,L^\ast
}
\end{flalign}
between the projective model category $\Alg_{\Q[W^{-1}]}\big(\Ch_R\big)$ from Theorem \ref{theo:projectivemodel}
and the left Bousfield localized semi-model category $\mathcal{L}_{\widehat{W}}\Alg_{\Q}\big(\Ch_R\big)$ 
from Theorem \ref{theo:Bousfieldmodel}.
\end{propo}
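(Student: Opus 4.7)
The plan is to reduce the desired Quillen equivalence \eqref{eqn:L!Lastadjunction} to an $\infty$-categorical localization statement for the categories of operators introduced in Definition \ref{def:operatorcategory}, and then invoke standard transfer results from operadic homotopy theory.

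First, I would invoke Lemma \ref{lem:operatorlocalization} to obtain a classical calculus of left fractions, in the sense of \cite{GabrielZisman}, on the pair $(\Q^\otimes, W^\otimes)$. A well-known theorem of Dwyer-Kan \cite[Proposition 7.3]{DwyerKan} then guarantees that the hammock localization $L^H_{W^\otimes}\Q^\otimes$ has discrete mapping spaces and is equivalent, as a simplicial category, to the ordinary $1$-categorical localization $\Q^\otimes[(W^\otimes)^{-1}]$. Because the $1$-categorical localization commutes with the category-of-operators construction over $\mathbf{Fin}_\ast$, this localization coincides with $\Q[W^{-1}]^\otimes$. Hence $L^\otimes : \Q^\otimes \to \Q[W^{-1}]^\otimes$ exhibits an $\infty$-localization of $\Q^\otimes$ at $W^\otimes$.

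Next, using \cite[Corollary 4.11]{Haugseng} to identify symmetric $\infty$-operads with their straightened categories of operators over $\mathbf{Fin}_\ast$, the previous step translates into a homotopical localization of $\infty$-operads: the multifunctor $L:\Q\to\Q[W^{-1}]$ exhibits $\Q[W^{-1}]$ as the $\infty$-operadic localization of the discrete $\infty$-operad $\Q$ at $W$. This is precisely the kind of statement addressed in \cite[Appendix B]{CalaqueCarmona}. Finally, I would apply the transfer result combining \cite[Theorem 7.3.1]{WhiteYau} and \cite[Theorem 3.13]{Carmona} to conclude. These results ensure that a homotopical localization of $\infty$-operads induces a Quillen equivalence between the left Bousfield localization of the projective semi-model category of algebras over the source and the projective model category of algebras over the localized target, which is exactly \eqref{eqn:L!Lastadjunction}.

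The main obstacle will lie in the first step, namely verifying that the operadic axioms in Definition \ref{def:OpCLF} translate cleanly into a classical calculus of left fractions for $(\Q^\otimes,W^\otimes)$. The axioms have been arranged so that this works: the square-filling property (3), applied fiberwise over a $\mathbf{Fin}_\ast$-morphism $\phi : \langle n\rangle \to \langle n^\prime\rangle$, provides the required fillers for $\Q^\otimes$-morphisms of the form $(\phi,\und{\phi})$; the coequalization property (4) supplies the analogous property in $\Q^\otimes$; and properties (1) and (2) guarantee closure under identities and compositions. Once this bookkeeping is dispatched, the remaining steps are essentially formal applications of the cited homotopy-theoretic machinery.
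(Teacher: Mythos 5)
Your proposal is correct and follows essentially the same route as the paper: reduce to showing that $L^\otimes:\Q^\otimes\to\Q[W^{-1}]^\otimes\simeq\Q^\otimes[(W^\otimes)^{-1}]$ is an $\infty$-localization via the calculus of left fractions from Lemma \ref{lem:operatorlocalization} and \cite[Proposition 7.3]{DwyerKan}, lift this to an $\infty$-operadic localization via \cite[Appendix B]{CalaqueCarmona}, and transfer to algebras via \cite[Corollary 4.11]{Haugseng} and \cite[Theorem 7.3.1]{WhiteYau}. The only point you gloss over is the preliminary check that \eqref{eqn:L!Lastadjunction} is a Quillen adjunction between the Bousfield-localized semi-model category and the projective model category, which the paper settles in one line using Proposition \ref{prop:Quillenprojective} together with \cite[Lemma 3.5]{CarmonaCriterion}, since $L^\ast$ lands in $\widehat{W}$-local objects.
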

\begin{proof}
The statement that \eqref{eqn:L!Lastadjunction} is a Quillen adjunction is a simple consequence
of Proposition \ref{prop:Quillenprojective} and \cite[Lemma 3.5]{CarmonaCriterion}. Indeed,
the right adjoint functor $L^\ast$ sends every object in $\Alg_{\Q[W^{-1}]}\big(\Ch_R\big)$
to an object in $\Alg_{\Q}\big(\Ch_R\big)$ that is $\widehat{W}$-local in the sense of Theorem \ref{theo:Bousfieldmodel}.
\sk

Our proof that \eqref{eqn:L!Lastadjunction} is further a Quillen equivalence 
is considerably more involved and indirect. Using the results of 
\cite[Corollary 4.11]{Haugseng} and \cite[Theorem 7.3.1]{WhiteYau},
one can translate this problem to an equivalent $\infty$-categorical problem for 
algebras over $\infty$-operads in the sense of Lurie \cite{LurieHA}. 
Concretely, the equivalent problem is to prove that the pullback along the localization multifunctor
$L : \Q\to \Q[W^{-1}]$ of $\Set$-valued operads (regarded as a morphism of $\infty$-operads)
induces an equivalence of $\infty$-categories
\begin{flalign}\label{eqn:L!infty}
L^\ast\,:\,\mathcal{A}\mathsf{lg}_{\Q[W^{-1}]}\big(\mathcal{C}\mathsf{h}_R\big)~\stackrel{\sim}{\longrightarrow}
~ \mathcal{A}\mathsf{lg}_{\Q,W}\big(\mathcal{C}\mathsf{h}_R\big)\quad,
\end{flalign}
where $\mathcal{A}\mathsf{lg}_{\Q[W^{-1}]}\big(\mathcal{C}\mathsf{h}_R\big)$ denotes the 
$\infty$-category of $\Q[W^{-1}]$-algebras with values in the symmetric monoidal
$\infty$-category $\mathcal{C}\mathsf{h}_R$ of cochain complexes and 
$\mathcal{A}\mathsf{lg}_{\Q,W}\big(\mathcal{C}\mathsf{h}_R\big)\subseteq
\mathcal{A}\mathsf{lg}_{\Q}\big(\mathcal{C}\mathsf{h}_R\big)$ denotes the 
full $\infty$-subcategory spanned by all $\Q$-algebras
sending $W$ to equivalences. We will now argue that the 
multifunctor $L : \Q\to \Q[W^{-1}]$ exhibits an $\infty$-localization
of $\infty$-operads of $\Q$ at $W$, which implies that \eqref{eqn:L!infty} is 
an equivalence of $\infty$-categories. Using the results of \cite[Appendix B]{CalaqueCarmona} in combination with the equivalence 
$\Q[W^{-1}]^{\otimes}\simeq \Q^\otimes[(W^\otimes)^{-1}]$ 
from Lemma \ref{lem:catop-vs-loc},
this $\infty$-operadic localization statement follows if we can show that the associated functor
$L^\otimes : \Q^\otimes\to \Q[W^{-1}]^{\otimes}\simeq \Q^\otimes[(W^\otimes)^{-1}]$
between the categories of operators from Definition \ref{def:operatorcategory}
exhibits an $\infty$-localization of $\infty$-categories. 
Using now the induced calculus of left fractions for the categories of operators
from Lemma \ref{lem:operatorlocalization}, the proof follows 
from the result in \cite[Proposition 7.3]{DwyerKan}.
\end{proof}

%%%%%%%%%%%%%%%%%%%%%%%%%%%%%%%%%%%%%%%%%%%%%%%%
%%%%%%%%%%%%%%%%%%%%%%%%%%%%%%%%%%%%%%%%%%%%%%%%

\section{\label{app:Lorentz}Lorentzian geometric details}
This appendix is devoted to proving some results of Lorentzian geometric flavor 
which are crucial for Section \ref{sec:spacetimewise}. We shall freely use basic concepts 
and tools from Lorentzian geometry that can be easily found in the literature, 
see e.g.\ the monograph \cite{ONeill}, as well as the excellent reviews 
\cite{Minguzzi} and \cite[Section 1.3 and Appendix A.5]{BGP}.
The Cauchy development $D_M(U)\subseteq M$ of a subset $U \subseteq M$ of Lorentzian manifold $M$
refers to the set of points $p \in M$ such that
every inextendable causal curve through $p$ also intersects $U$.
\sk

The next two results play a key role in the proof of Theorem \ref{theo:O_MCLF}. 
\begin{lem}\label{lem:zigzag}
Let $M$ be a time-oriented globally hyperbolic Lorentzian manifold.
Consider any family of causally convex opens $U_1, \ldots, U_n, V \subseteq M$ 
which are either Cauchy or relatively compact. 
Then the following two conditions are equivalent:
\begin{itemize}
\item[(1)] There exists a causally convex open $V^\prime \subseteq M$ 
which is either Cauchy or relatively compact and which contains $V$ and $U_i$, 
for all $i = 1, \ldots, n$, in such a way that the zig-zags 
$U_i \subseteq V^\prime \supseteq V$ consist of inclusions 
$U_i \subseteq V^\prime$ that are either Cauchy or relatively compact 
and $V \subseteq V^\prime$ that is Cauchy. 

\item[(2)] The inclusions $U_i \subseteq D_M(V)$ are either Cauchy or relatively compact, for all $i = 1, \ldots, n$.
\end{itemize}
\end{lem}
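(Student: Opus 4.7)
My plan is to prove the equivalence by taking the Cauchy development $V^\prime := D_M(V)$ as the canonical witness for the direction $(2) \Rightarrow (1)$, and by exploiting the identity $D_M(V) = D_M(V^\prime)$ whenever $V \subseteq V^\prime$ is Cauchy for the direction $(1) \Rightarrow (2)$. Throughout, the key Lorentzian inputs are standard properties of Cauchy developments of causally convex opens in globally hyperbolic spacetimes, together with the closedness statement of Lemma \ref{lem:Dclosed}.

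For $(1) \Rightarrow (2)$, I would first verify the identity $D_M(V) = D_M(V^\prime)$: one inclusion is monotonicity of $D_M$, and for the reverse, causal convexity of $V^\prime$ in $M$ ensures that for any inextendible causal curve $\gamma$ in $M$ through a point $p \in D_M(V^\prime)$, the intersection $\gamma \cap V^\prime$ is a single connected open interval that is inextendible as a causal curve in $V^\prime$, hence crosses any Cauchy surface of $V^\prime$ contained in $V$. In particular $V^\prime \subseteq D_M(V)$. The two cases for $U_i \subseteq V^\prime$ then transport along this inclusion: a Cauchy surface of $V^\prime$ contained in $U_i$ is also a Cauchy surface of $D_M(V^\prime) = D_M(V)$, while a compact closure $\overline{U_i} \subseteq V^\prime \subseteq D_M(V)$ witnesses relative compactness.

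For $(2) \Rightarrow (1)$, I take $V^\prime := D_M(V)$, which is causally convex and open. Condition (2) is exactly the statement that $U_i \subseteq V^\prime$ is Cauchy or relatively compact, and $V \subseteq V^\prime$ is Cauchy by the same causal-convexity argument applied to any Cauchy surface of $V$. It remains to check that $V^\prime \subseteq M$ itself is Cauchy or relatively compact. If $V \subseteq M$ is Cauchy, then $D_M(V) = M$ and $V^\prime$ is Cauchy. If $V$ is relatively compact with $\overline V$ compact, Lemma \ref{lem:Dclosed} yields that $D_M(\overline V)$ is closed, and I would bound it by choosing Cauchy surfaces $\Sigma^\pm$ of $M$ with $\overline V \subseteq I^+(\Sigma^-) \cap I^-(\Sigma^+)$: for $p$ sufficiently far above $\Sigma^+$, the compact set $J^-(p) \cap \Sigma^+$ grows beyond $D^+(\overline V) \cap \Sigma^+$, so some past-inextendible causal curve from $p$ misses $\overline V$ and hence $p \notin D^+(\overline V)$. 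A symmetric argument controls $D^-(\overline V)$, and combining with the standard fact that $J^+(K_1) \cap J^-(K_2)$ is compact for compacta $K_1, K_2$ in globally hyperbolic $M$ yields that $D_M(\overline V)$ is compact. The inclusion $\overline{D_M(V)} \subseteq D_M(\overline V)$ then gives relative compactness of $V^\prime = D_M(V)$ in $M$.

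The main obstacle will be the compactness argument for $D_M(\overline V)$ in the relatively compact case, which requires careful combination of Lemma \ref{lem:Dclosed} with causal-future/past estimates in globally hyperbolic $M$; the remaining steps reduce to routine manipulations of causal convexity, Cauchy surfaces and Cauchy developments.
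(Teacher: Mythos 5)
Your direction $(1)\Rightarrow(2)$ is correct and is essentially the paper's argument (transport Cauchy/relative compactness along $U_i\subseteq V^\prime\subseteq D_M(V^\prime)=D_M(V)$). The gap is in $(2)\Rightarrow(1)$, where you take $V^\prime:=D_M(V)$ as a canonical witness. For this to be admissible, $D_M(V)$ must not only be causally convex and open (both true, but asserted without proof; openness already needs a limit-curve argument), it must also be \emph{either Cauchy or relatively compact in $M$}, and this is exactly what your sketch does not establish. The pivotal step, ``for $p$ sufficiently far above $\Sigma^+$ \ldots\ some past-inextendible causal curve from $p$ misses $\overline V$'', is precisely the assertion at stake and is not justified; as stated it is even false without further hypotheses: in a spatially compact globally hyperbolic spacetime (e.g.\ the Einstein cylinder) a relatively compact causally convex open $V$ can contain a Cauchy surface, in which case $D_M(\overline V)=M$ is not compact, so your second case cannot be run for all relatively compact $V$. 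Even after splitting into ``Cauchy'' versus ``relatively compact and not Cauchy'', the compactness of $D_M(\overline V)$ (equivalently, relative compactness of $D_M(V)$) in the latter case is a substantive Lorentzian-geometric claim: Lemma \ref{lem:Dclosed} only gives closedness, $\overline V$ need not be achronal, so standard domain-of-dependence results do not apply directly, and one must rule out that a non-Cauchy $\overline V$ ``blocks'' all past-inextendible curves from arbitrarily late points. Proving this is at least as hard as the lemma itself, and neither the paper nor your proposal supplies it.

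The paper's proof of $(2)\Rightarrow(1)$ deliberately avoids using $D_M(V)$ as the witness. After disposing of the case where $V$ is Cauchy (where $V^\prime=M$ works, as in your Case 1), it constructs $V^\prime$ inside $D_M(V)$ as a causally convex hull of finitely many relatively compact pieces: the hull $W$ of $U_1\cup\cdots\cup U_k\cup V$ (for those $U_i$ whose inclusion into $D_M(V)$ is Cauchy), an auxiliary relatively compact causally convex open $X\subseteq D_M(V)$ containing the remaining $U_i$, and finally the hull of $X\cup W$, invoking the hull results of \cite[Appendix~B]{BGSHaagKastler} to guarantee relative compactness of $V^\prime$ and that $V\subseteq W\subseteq V^\prime$ is Cauchy. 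If you wish to salvage your route, you would need to prove the missing compactness statement for Cauchy developments of (closures of) relatively compact, causally convex, non-Cauchy opens; otherwise the finite-hull construction is the safer path.
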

\begin{proof}
$(1)\Rightarrow (2)$: It follows from the properties of the Cauchy development $D_M$ that 
all inclusions $U_i \subseteq V^\prime \subseteq D_M(V^\prime) = D_M(V)$ are 
either Cauchy or relatively compact. 
\sk

$(2)\Rightarrow (1)$: We make a case distinction. 
If $V \subseteq M$ is Cauchy, then (1) holds with the choice $V^\prime = M$. 
Otherwise, $V \subseteq M$ is not Cauchy, hence it is relatively compact. 
As a consequence, we observe that all subsets 
$D_M(U_i) \subseteq D_M(V) \subset M$ are strictly contained in $M$. 
In particular, all $U_i \subseteq M$ are not Cauchy, hence they are relatively compact. 
Without loss of generality, suppose that $U_i \subseteq D_M(V)$ is Cauchy,
for $i = 1, \ldots, k$, and $U_i \subseteq D_M(V)$ is not Cauchy,
hence relatively compact, for $i = k + 1, \ldots, n$.
The causally convex hull
$W$ of the union $U_1 \cup \cdots \cup U_k \cup V$ is a relatively compact 
causally convex open subset in $M$ by \cite[Lemma B.4]{BGSHaagKastler},
and moreover the proof of \cite[Proposition B.2]{BGSHaagKastler} adapts
to show that all inclusions $U_i \subseteq W$, for $i = 1, \ldots, k$, and $V \subseteq W$ are Cauchy.
Since $U_i \subseteq D_M(V)$ is relatively compact for all $i = k + 1, \ldots, n$,
it follows that there exists a relatively compact causally convex open $X \subseteq D_M(V)$ 
such that $U_i \subseteq X$ is relatively compact for all $i = k + 1, \ldots, n$.
(Explicitly, $X$ can be constructed by taking a compact subset of $D_M(V)$ that contains all $U_i$, 
for $i = k + 1, \ldots, n$, covering it by finitely many relatively 
compact opens of $D_M(V)$ and forming the causally convex hull of their union.) 
We observe that $X \subseteq D_M(V) = D_M(W)$, hence 
it follows again from \cite[Appendix B]{BGSHaagKastler} that the causally convex hull
$V^\prime$ of the union $X \cup W$ has the following properties: 
$V^\prime \subseteq M$ is a relatively compact causally convex open and there are inclusions 
$X \subseteq V^\prime$ and $W \subseteq V^\prime$, the latter being Cauchy. 
Summing up, one has that all inclusions $U_i \subseteq W \subseteq V^\prime$, 
for $i = 1, \ldots, k$, and $V \subseteq W \subseteq V^\prime$ 
are Cauchy and all inclusions $U_i \subseteq X \subseteq V^\prime$, 
for $i = k + 1, \ldots, n$, are relatively compact. This shows that condition (1) holds. 
\end{proof}

\begin{lem}\label{lem:Dclosed}
Let $M$ be a time-oriented globally hyperbolic Lorentzian 
manifold. Then the Cauchy development $D_M(K) \subseteq M$ 
of any compact subset $K \subseteq M$ is closed.\footnote{The proof of 
this result was suggested to us by Ettore Minguzzi.}
\end{lem}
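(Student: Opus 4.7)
The strategy is to prove that the complement $M \setminus D_M(K)$ is open, so fix $p \in M \setminus D_M(K)$. By definition there exists an inextendible causal curve $\gamma$ through $p$ with $\gamma \cap K = \varnothing$, and the goal is to construct a neighborhood $U \ni p$ such that every $q \in U$ admits its own inextendible causal curve through $q$ disjoint from $K$, which forces $q \notin D_M(K)$.

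The main tool would be the Bernal--S\'anchez orthogonal splitting theorem: since $M$ is globally hyperbolic, there is a smooth Cauchy temporal function $t: M \to \bbR$ producing a diffeomorphism $M \cong \bbR \times \Sigma$ with metric $g = -\beta\, dt^2 + h_t$, where $\beta: M \to (0,\infty)$ is smooth and $\{h_t\}$ is a smooth $t$-family of Riemannian metrics on $\Sigma$. In this splitting every inextendible causal curve becomes graphical in $t$, of the form $s \mapsto (s, x(s))$ for $s \in \bbR$, subject to the causal speed bound $h_s(\dot x, \dot x) \le \beta(s, x(s))$. Since $K$ is compact, $t(K) \subseteq [a, b]$ is a bounded interval; and since $\gamma([a,b])$ and $K$ are compact and disjoint, they are separated by some $\varepsilon > 0$ in any auxiliary complete Riemannian metric on $M$.

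The heart of the proof is a perturbation construction. Writing $\gamma(s) = (s, x_\gamma(s))$, I would fix a small coordinate chart around $p$ and a smooth compactly supported bump $\phi: \bbR \to [0,1]$ with $\phi(\tau_q) = 1$ and $\supp(\phi)$ contained in a small $t$-window around $t(p)$; then for $q = (\tau_q, y_q)$ close to $p$, set $\tilde\gamma(s) := (s,\, x_\gamma(s) + \phi(s) \cdot (y_q - x_\gamma(\tau_q)))$, interpreting the sum in the local chart. By construction $\tilde\gamma(\tau_q) = q$, and $\tilde\gamma = \gamma$ outside the compact $t$-window, so $\tilde\gamma$ is inextendible. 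For $q$ sufficiently close to $p$ the translation vector is small, so $\tilde\gamma$ stays within $\varepsilon$ of $\gamma$ on the slab $t \in [a,b]$ and is disjoint from $K$ there; outside the slab $K$ is unreachable since $t(K) \subseteq [a,b]$.

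The main obstacle will be ensuring that $\tilde\gamma$ remains causal, because where $\gamma$ is null the inequality $h_s(\dot x_\gamma, \dot x_\gamma) \le \beta$ is saturated and arbitrary small $C^1$-perturbations could violate it. The standard fix is a preliminary timelike approximation: replace $\gamma$ by a strictly timelike inextendible curve through $p$ still disjoint from $K$, using the buffer $\varepsilon$ for elbow room; such timelike approximations of causal curves are classical in globally hyperbolic geometry. With strict causality one has a uniform gap $h_s(\dot x_\gamma, \dot x_\gamma) \le \beta - \eta$ on the compact slab $[a,b]$, and a standard uniform continuity argument then shows that the perturbation preserves the causal inequality whenever $|y_q - x_\gamma(\tau_q)|$ and the size of the $t$-window are small enough, completing the plan.
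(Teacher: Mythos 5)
Your overall strategy (show the complement is open by producing, for every $q$ near $p$, an inextendible causal curve through $q$ missing $K$) matches the paper's, and your local bump-perturbation step is fine in principle; but the step you dismiss as ``classical'' --- replacing the given inextendible causal curve $\gamma$ through $p$ by a \emph{timelike} one through $p$ that still avoids $K$ --- is precisely the heart of the lemma, and it is not an off-the-shelf result. The classical deformation theorems (e.g.\ O'Neill Ch.~10, Prop.~46, or Minguzzi's Theorem 2.22) only let you deform a causal curve \emph{with fixed endpoints} into a timelike one, and they fail exactly in the delicate case where $\gamma$ is a null geodesic through $p$: then no perturbation fixing $p$ and staying close to $\gamma$ near $p$ is timelike, so you must change the curve substantially away from $p$ and then argue that the modified curve still misses $K$. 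Your ``$\varepsilon$ of elbow room'' remark does not supply this argument. Nor does the naive quantitative fix in the splitting (slowing the spatial part down by a factor $1-\delta$) obviously work: the causality gain is $O(\delta)$ but so is the loss from evaluating $h_t$ and $\beta$ at time-shifted arguments, so the two compete and the inequality is not automatic without further care; relatedly, a uniform gap $h_s(\dot x,\dot x)\le \beta-\eta$ does not follow from mere a.e.\ strict inequality for a Lipschitz curve, so you would also need to arrange a piecewise smooth replacement.

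The paper resolves exactly this point by a different, softer mechanism: it splits $D_M(K)=D^+_M(K)\cup D^-_M(K)$ and, for $p\notin D^+_M(K)$, uses global hyperbolicity to get compactness of $J^+_M(K)\cap J^-_M(p)$, so that the past-inextendible curve $\gamma$ eventually leaves this set at some parameter $s$; replacing the tail beyond $\gamma(s)$ by an arbitrary past-inextendible timelike curve automatically avoids $K$ (else $\gamma(s)\in J^+_M(K)\cap J^-_M(p)$), and the resulting curve is no longer a null geodesic, so the fixed-endpoint deformation theorem now applies and yields a past-inextendible timelike curve ending at $p$ that misses $K$; the open neighborhood is then simply $I^+_V(q)$ for a point $q$ of this timelike curve in a causally convex $K$-free neighborhood $V$ of $p$, which replaces your bump construction. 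If you want to salvage your proposal, you should either import this compactness-plus-concatenation argument to justify your timelike replacement, or carry out the splitting-based estimate honestly (restricting the slow-down to a compact $t$-window, controlling the metric comparison constants, and capping the curve with arbitrary inextendible causal extensions outside the window); as written, the proposal assumes the lemma's essential content.
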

\begin{proof}
The Cauchy development $D_M(K) = D^+_M(K) \cup D^-_M(K)$ 
is the union of the future and past Cauchy developments, 
where one considers only past-, respectively future-, inextendible future-directed causal curves. 
We shall now show that $D^+_M(K) \subseteq M$ is closed. 
For this purpose, given any $p \in M \setminus D^+_M(K)$, we construct 
an open neighborhood $O \subseteq M \setminus D^+_M(K)$ of $p$. 
By definition of future Cauchy development, see e.g.\ \cite[Definition 3.1]{Minguzzi}, 
there exists a past-inextendible future-directed causal curve 
$\gamma: (-1,0] \to M$ ending at $p$ that does not meet $K$. 
Since $J_M^+(K) \cap J_M^-(p) \subseteq M$ is compact and $M$ 
is globally hyperbolic, it follows from \cite[Proposition 4.80]{Minguzzi} that there exists $s \in (-1,0)$ 
such that $\gamma(t) \notin J_M^+(K) \cap J_M^-(p)$, for all $t \in (-1,s]$. 
Let $\gamma^\prime$ be the future-directed causal curve obtained by concatenating 
a past-inextendible future-directed timelike curve ending at $\gamma(s)$ with 
the restriction of $\gamma$ to $[s,0] \subseteq (-1,0]$. 
By construction $\gamma^\prime$ contains a timelike segment ending at $\gamma(s)$ 
and, moreover, it does not meet $K$ (otherwise $\gamma(s) \in J_M^+(K) \cap J_M^-(p)$, a contradiction). 
It follows that, for $U \subseteq M \setminus K$ an open neighborhood of 
$\gamma^\prime([s,0]) = \gamma([s,0])$ that does not meet $K$, 
there exists a past-inextendible future-directed timelike curve $\gamma^{\prime\prime}$ 
that ends at $p$ and agrees with $\gamma^\prime$ outside $U$. 
(This is achieved by deforming $\gamma^\prime$ according to \cite[Chapter 10, Proposition 46]{ONeill} 
or \cite[Theorem 2.22 and subsequent comment]{Minguzzi}.) 
Let $V \subseteq M \setminus K$ be a causally convex open neighborhood of $p$ that does not meet $K$ 
and $q \in V$ a point of $\gamma^{\prime\prime}$. Consider $O := I_V^+(q) \subseteq V$ 
as candidate open neighborhood of $p$. By construction, every point of $O$ 
is the endpoint of a past-inextendible future-directed timelike curve that does not meet $K$. 
This means that $O \subseteq M \setminus D^+_M(K)$ does not meet the future Cauchy development of $K$.
A similar argument shows that $D^-_M(K) \subseteq M$ is closed too, 
thus concluding the proof. 
\end{proof}

As a consequence of Theorem \ref{theo:O_MCLF} and Remark \ref{rem:OpCLF}, the pair
$(\O_M^1,W_M) = (\Loc^\rc/M,W_M)$ consisting of the underlying category 
of $1$-ary operations in the AQFT operad $\O_M$ and the subset $W_M\subseteq \mathrm{Mor}(\O_M^1)$ 
of all Cauchy morphisms admits a calculus of left fractions.
Using \cite{GabrielZisman} and \cite{BSWoperad,BCStimeslice}, one obtains 
a model for the localization $L_M: \Loc^\rc/M \to (\Loc^\rc/M)[W_M^{-1}]$ 
and a characterization for the pushforward of the orthogonality relation on $\Loc^\rc/M$ 
given by causal disjointness:
\begin{itemize}
\item The localized category $(\Loc^\rc/M)[W_M^{-1}]$ has the same objects as $\Loc^\rc/M$
and its morphisms $[X]: U \to V$ are equivalence classes of objects 
$X \in \Loc^\rc/M$  with morphisms $(U \subseteq X) \in \Loc^\rc/M$ and $(V \subseteq X) \in W_M$. 
Two such $X, X^\prime \in \Loc^\rc/M$ are equivalent if there exists 
a third $X^{\prime\prime} \in \Loc^\rc/M$ with morphisms 
$(X \subseteq X^{\prime\prime}), (X^\prime \subseteq X^{\prime\prime}) \in \Loc^\rc/M$ 
such that $(V \subseteq X^{\prime\prime}) \in W_M$. 
The composite of $[X]: U \to V$ and $[Y]: V \to W$ 
is given by $[Y] \circ [X] = [Z]: U \to W$, where $(X \subseteq Z) \in \Loc^\rc/M$ and $(Y \subseteq Z) \in W_M$
are obtained from $(V \subseteq X) \in W_M$ and $(V \subseteq Y) \in \Loc^\rc/M$ 
via the calculus of left fractions. (See property (3) in Definition \ref{def:OpCLF}.)

\item The localization functor $L: \Loc^\rc/M \to (\Loc^\rc/M)[W_M^{-1}]$ 
acts as the identity on objects and sends a morphism $(U \subseteq V) \in \Loc^\rc/M$ 
to $([V]: U \to V) \in (\Loc^\rc/M)[W_M^{-1}]$.

\item The pushforward orthogonality relation on $(\Loc^\rc/M)[W_M^{-1}]$ is characterized as follows: 
$([X_1]: U_1 \to V) \perp ([X_2]: U_2 \to V)$ is an orthogonal pair 
if and only if $U_1 \subseteq M$ and $U_2 \subseteq M$ are causally disjoint subsets of $M$.
\end{itemize}
\begin{propo}\label{prop:localizedrcmor}
For any object $M\in\Loc^\rc$, the category $\O_M^1[W_M^{-1}] = (\Loc^\rc/M)[W_M^{-1}]$ is thin, 
i.e.\ there exists at most one morphism between any two objects. Moreover, the unique morphism $U \to V$
exists if and only if the inclusion $U \subseteq D_M(V)$ 
in the Cauchy development of $V$ is either Cauchy or relatively compact. 
\end{propo}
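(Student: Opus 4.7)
The plan is to unwind the explicit left-fractions model of $(\Loc^\rc/M)[W_M^{-1}]$ recalled in the bullet points preceding the proposition and to reduce both assertions to direct applications of Lemma \ref{lem:zigzag}. Since $\Loc^\rc/M$ is a thin poset (any two causally convex opens admit at most one inclusion), any morphism $[X]: U \to V$ in the localized category is represented by a single object $X \in \Loc^\rc/M$ equipped with an inclusion $U \subseteq X$ in $\Loc^\rc$ and a Cauchy inclusion $V \subseteq X$.

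The existence characterization follows almost directly from Lemma \ref{lem:zigzag}. Indeed, the existence of a morphism $U \to V$ in the localization is equivalent to the existence of an object $V^\prime \in \Loc^\rc/M$ together with an inclusion $U \subseteq V^\prime$ in $\Loc^\rc$ and a Cauchy inclusion $V \subseteq V^\prime$, which is precisely condition (1) of Lemma \ref{lem:zigzag} for $n = 1$ and $U_1 = U$. By the equivalence with condition (2) of that lemma, this is the case if and only if $U \subseteq D_M(V)$ is either Cauchy or relatively compact.

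For thinness, I would start from two representatives $U \subseteq X \supseteq V$ and $U \subseteq X^\prime \supseteq V$ of morphisms $U \to V$, both with Cauchy inclusion of $V$. Using the standard identity $D_M(V) = D_M(X) = D_M(X^\prime)$ for Cauchy subregions (as already exploited in the proof of Theorem \ref{theo:O_MCLF}), the induced inclusions $X, X^\prime \subseteq D_M(V)$ are themselves Cauchy, so condition (2) of Lemma \ref{lem:zigzag} holds for $n = 2$ and $(U_1, U_2) = (X, X^\prime)$. Applying the lemma produces a common extension $X^{\prime\prime} \in \Loc^\rc/M$ with $X \subseteq X^{\prime\prime}$, $X^\prime \subseteq X^{\prime\prime}$ in $\Loc^\rc$ and $V \subseteq X^{\prime\prime}$ Cauchy; this is exactly the witness required by the left-fractions equivalence relation to identify the two spans, so at most one morphism $U \to V$ exists.

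The argument is essentially a bookkeeping exercise around the Cauchy development; the only substantive input is Lemma \ref{lem:zigzag}, and the main potential pitfall lies in correctly handling the interplay between the notions of ``Cauchy'' and ``relatively compact'' inclusions into $M$, into $X$, and into $D_M(V)$. These transfers amount to routine applications of the standard properties of Cauchy developments in globally hyperbolic Lorentzian geometry (monotonicity, idempotency, and invariance under Cauchy extensions) and should not present a serious obstacle.
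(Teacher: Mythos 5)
Your proof is correct and essentially coincides with the paper's: the existence criterion is exactly the paper's argument, namely an application of Lemma \ref{lem:zigzag} with $n=1$ to a span $U \subseteq X \supseteq V$ with $(V \subseteq X) \in W_M$. The only (minor) divergence is in the thinness part, where the paper invokes property (3) of the operadic calculus of left fractions already established in Theorem \ref{theo:O_MCLF} together with closure of Cauchy morphisms under composition, while you apply Lemma \ref{lem:zigzag} with $n=2$ to $X, X^\prime \subseteq D_M(V)$ using the identities $D_M(V)=D_M(X)=D_M(X^\prime)$; since that property (3) was itself verified via Lemma \ref{lem:zigzag}, the two arguments rest on the same Lorentzian-geometric input and your bookkeeping with Cauchy developments is sound.
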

\begin{proof}
Consider any two parallel morphisms $[X], [X^\prime]: U \to V$ in $(\Loc^\rc/M)[W_M^{-1}]$.
From $(V \subseteq X), (V \subseteq X^\prime) \in W_M$ 
and property (3) of Definition \ref{def:OpCLF}, we obtain two inclusion morphisms
$(X \subseteq X^{\prime\prime}), (X^\prime \subseteq X^{\prime\prime}) \in \Loc^\rc/M$ 
such that $(V \subseteq X^{\prime\prime}) \in W_M$.
This witnesses the equality $[X] = [X^\prime]$ according to 
the above model for the localized category $(\Loc^\rc/M)[W_M^{-1}]$.
\sk

For the second claim consider two objects $U, V \in (\Loc^\rc/M)[W_M^{-1}]$. 
According to the above model for the localized category $(\Loc^\rc/M)[W_M^{-1}]$, 
the existence of a morphism $[X]: U \to V$ is equivalent to 
the existence of an object $X \in \Loc^\rc/M$ with morphisms 
$(U \subseteq X) \in \Loc^\rc/M$ and $(V \subseteq X) \in W_M$, 
which by Lemma \ref{lem:zigzag} is equivalent to the condition that 
the inclusion $U \subseteq D_M(V)$ is either Cauchy or relatively compact. 
\end{proof}

%%%%%%%%%%%%%%%%%%%%%%%%%%%%%%%%%%%%%%%%%%%%%%%%
%%%%%%%%%%%%%%%%%%%%%%%%%%%%%%%%%%%%%%%%%%%%%%%%

\section*{Data availability statement}
All data generated or analyzed during this study are contained in this document.

\section*{Conflict of interest statement}
The authors have no conflict of interest to declare that are relevant to the content of this article. 

%%%%%%%%%%%%%%%%%%%%%%%%%%%%%%%%%%%%%%%%%%%%%%%%
%%%%%%%%%%%%%%%%%%%%%%%%%%%%%%%%%%%%%%%%%%%%%%%%

%%%%%%%%%%%%%%%%%%%%%%%%


\begin{thebibliography}{10}

\bibitem[Ati88]{Atiyah}
M.~Atiyah,
``Topological quantum field theories,'' 
Inst.\ Hautes {\'E}tudes Sci.\ Publ.\ Math.\ \textbf{68}, 175--186 (1988).


\bibitem[Bal21]{Balchin}
S.~Balchin,
{\it A handbook of model categories},
Algebr.\ Appl.\ \textbf{27}, Springer, Cham (2021).


\bibitem[BGP07]{BGP}
C.~B\"ar, N.~Ginoux and F.~Pf\"affle, 
{\it Wave equations on Lorentzian manifolds and quantization}, 
Eur.\ Math.\ Soc., Z\"urich (2007) 
[arXiv:0806.1036 [math.DG]].


\bibitem[BMR14]{Riehl}
T.~Barthel, J.~P.~May and E.~Riehl,
``Six model structures for DG-modules over DGAs: 
model category theory in homological action,''
New York J.\ Math.\ \textbf{20}, 1077--1159 (2014)
[arXiv:1310.1159 [math.CT]].


\bibitem[Bar10]{Barwick}
C.~Barwick,
``On left and right model categories and left and right Bousfield localizations,''
Homology Homotopy Appl.\ \textbf{12}, 245--320 (2010).


\bibitem[BBPTY18]{OperadicHammock}
M.~Basterra, I.~Bobkova, K.~Ponto, U.~Tillmann and S.~Yeakel,
``Inverting operations in operads,''
Topology Appl.\ \textbf{235}, 130--145 (2018)
[arXiv:1611.00715 [math.CT]].


\bibitem[BW24]{BataninWhite}
M.~Batanin and D.~White,
``Left Bousfield localization without left properness,''
J.\ Pure Appl.\ Algebra \textbf{228}, 107570 (2024)
[arXiv:2001.03764 [math.AT]].


\bibitem[BBS19]{BBSLinearYM}
M.~Benini, S.~Bruinsma and A.~Schenkel,
``Linear Yang-Mills theory as a homotopy AQFT,''
Commun.\ Math.\ Phys.\ \textbf{378}, 185--218 (2019)
[arXiv:1906.00999 [math-ph]].


\bibitem[BCS23]{BCStimeslice}
M.~Benini, V.~Carmona and A.~Schenkel,
``Strictification theorems for the homotopy time-slice axiom,''
Lett.\ Math.\ Phys.\ \textbf{113}, 20 (2023)
[arXiv:2208.04344 [math-ph]].


\bibitem[BGS24]{BGSHaagKastler}
M.~Benini, A.~Grant-Stuart and A.~Schenkel,
``Haag-Kastler stacks,''
arXiv:2404.14510 [math-ph].


\bibitem[BMS24]{BMScomparison}
M.~Benini, G.~Musante and A.~Schenkel,
``Quantization of Lorentzian free BV theories: factorization algebra vs algebraic quantum field theory,''
Lett.\ Math.\ Phys.\ \textbf{114}, 36 (2024)
[arXiv:2212.02546 [math-ph]].


\bibitem[BPS19]{BPScomparison}
M.~Benini, M.~Perin and A.~Schenkel,
``Model-independent comparison between factorization algebras 
and algebraic quantum field theory on Lorentzian manifolds,''
Commun.\ Math.\ Phys.\ \textbf{377}, 971--997 (2019)
[arXiv:1903.03396 [math-ph]].


\bibitem[BPSW21]{BPSWcategorified}
M.~Benini, M.~Perin, A.~Schenkel and L.~Woike,
``Categorification of algebraic quantum field theories,''
Lett.\ Math.\ Phys.\ \textbf{111}, 35 (2021)
[arXiv:2003.13713 [math-ph]].


\bibitem[BS19]{BSreview}
M.~Benini and A.~Schenkel,
``Higher structures in algebraic quantum field theory,''
Fortsch.\ Phys.\ \textbf{67}, 1910015 (2019)
[arXiv:1903.02878 [hep-th]].


\bibitem[BS25]{BSchapter}
M.~Benini and A.~Schenkel,
``Operads, homotopy theory and higher categories in algebraic quantum field theory,''
in: R.~Szabo and M.~Bojowald (eds.),
{\it Encyclopedia of Mathematical Physics},
Second Edition, Volume 5, 556--568 (2025)
[arXiv:2305.03372 [math-ph]].


\bibitem[BSW19a]{BSWinvolutive}
M.~Benini, A.~Schenkel and L.~Woike,
``Involutive categories, colored $\ast$-operads and quantum field theory,''
Theor.\ Appl.\ Categor.\ \textbf{34}, 13--57 (2019)
[arXiv:1802.09555 [math.CT]].


\bibitem[BSW19b]{BSWhomotopy}
M.~Benini, A.~Schenkel and L.~Woike,
``Homotopy theory of algebraic quantum field theories,''
Lett.\ Math.\ Phys.\ \textbf{109}, 1487--1532 (2019)
[arXiv:1805.08795 [math-ph]].


\bibitem[BSW21]{BSWoperad}
M.~Benini, A.~Schenkel and L.~Woike,
``Operads for algebraic quantum field theory,''
Commun.\ Contemp.\ Math.\ \textbf{23}, 2050007 (2021)
[arXiv:1709.08657 [math-ph]].


\bibitem[BS06]{BernalSanchez}
A.~N.~Bernal and M.~Sanchez, 
``Further results on the smoothability of Cauchy hypersurfaces and Cauchy time functions,'' 
Lett.\ Math.\ Phys.\ \textbf{77}, 183 (2006)
[gr-qc/0512095].


\bibitem[BrS19]{BruinsmaSchenkel}
S.~Bruinsma and A.~Schenkel,
``Algebraic field theory operads and linear quantization,''
Lett.\ Math.\ Phys.\ \textbf{109}, 2531--2570 (2019)
[arXiv:1809.05319 [math-ph]].


\bibitem[BFV03]{BFV}
R.~Brunetti, K.~Fredenhagen and R.~Verch,
``The generally covariant locality principle: A new paradigm for local quantum field theory,''
Commun.\ Math.\ Phys.\ \textbf{237}, 31--68 (2003)
[arXiv:math-ph/0112041 [math-ph]].


\bibitem[BMS25a]{Bunk}
S.~Bunk, J.~MacManus and A.~Schenkel,
``Lorentzian bordisms in algebraic quantum field theory,''
Lett.\ Math.\ Phys.\ \textbf{115}, 16 (2025)
[arXiv:2308.01026 [math-ph]].


\bibitem[BMS25b]{Bunk2}
S.~Bunk, J.~MacManus and A.~Schenkel,
``An equivalence theorem for algebraic and functorial QFT,''
arXiv:2504.15759 [math-ph].


\bibitem[CC24]{CalaqueCarmona}
D.~Calaque and V.~Carmona,
``Algebras over not too little discs,''
arXiv:2407.18192 [math.AT].


\bibitem[Car23a]{Carmona}
V.~Carmona,
``New model category structures for algebraic quantum field theory,''
Lett.\ Math.\ Phys.\ \textbf{113}, 33 (2023)
[arXiv:2107.14176 [math-ph]].


\bibitem[Car23b]{CarmonaSemi}
V.~Carmona,
``When Bousfield localizations and homotopy idempotent functors meet again,''
Homology Homotopy Appl.\ \textbf{25}, 187--218 (2023)
[arXiv:2203.15849 [math.AT]].


\bibitem[Car24]{CarmonaCriterion}
V.~Carmona,
``A simple characterization of Quillen adjunctions,''
arXiv:2406.02194 [math.AT].


\bibitem[CFM21]{CFM}
V.~Carmona, R.~Flores and F.~Muro,
``A model structure for locally constant factorization algebras,''
arXiv:2107.14174 [math.AT].


\bibitem[CG17]{CG1}
K.~Costello and O.~Gwilliam,
{\it Factorization algebras in quantum field theory: Volume 1},
New Mathematical Monographs \textbf{31}, 
Cambridge University Press, Cambridge (2017).


\bibitem[CG21]{CG2}
K.~Costello and O.~Gwilliam,
{\it Factorization algebras in quantum field theory: Volume 2},
New Mathematical Monographs \textbf{41}, 
Cambridge University Press, Cambridge (2021).


\bibitem[DK80]{DwyerKan}
W.~G.~Dwyer and D.~M.~Kan,
``Calculating simplicial localizations,''
J.\ Pure Appl.\ Algebra \textbf{18}, 17--35 (1980).


\bibitem[FV15]{FewsterVerch}
C.~J.~Fewster and R.~Verch,
``Algebraic quantum field theory in curved spacetimes,''
in: R.~Brunetti, C.~Dappiaggi, K.~Fredenhagen and J.~Yngvason (eds.),
{\it Advances in algebraic quantum field theory}, 
Springer Verlag, Heidelberg (2015)
[arXiv:1504.00586 [math-ph]].


\bibitem[FR13]{FredenhagenRejzner}
K.~Fredenhagen and K.~Rejzner,
``Batalin-Vilkovisky formalism in perturbative algebraic quantum field theory,''
Commun.\ Math.\ Phys.\ \textbf{317}, 697--725 (2013)
[arXiv:1110.5232 [math-ph]].


\bibitem[Fre09]{Fresse}
B.~Fresse,
{\it Modules over operads and functors},
Lecture Notes in Math.\ \textbf{1967},
Springer Verlag, Berlin (2009).


\bibitem[Fre10]{Fresse2}
B.~Fresse,
``Props in model categories and homotopy invariance of structures,''
Georgian Math.\ J.\ \textbf{17}, 79--160 (2010)
[arXiv:0812.2738 [math.AT]].


\bibitem[GZ67]{GabrielZisman}
P.~Gabriel and M.~Zisman,
{\it Calculus of fractions and homotopy theory},
Springer Verlag, Heidelberg (1967).


\bibitem[GR20]{GwilliamRejzner1}
O.~Gwilliam and K.~Rejzner,
``Relating nets and factorization algebras of observables: free field theories,''
Commun.\ Math.\ Phys.\ \textbf{373}, 107--174 (2020)
[arXiv:1711.06674 [math-ph]].


\bibitem[GR22]{GwilliamRejzner2}
O.~Gwilliam and K.~Rejzner,
``The observables of a perturbative algebraic quantum field theory form a factorization algebra,''
arXiv:2212.08175 [math-ph].


\bibitem[HK64]{HaagKastler}
R.~Haag and D.~Kastler,
``An algebraic approach to quantum field theory,''
J.\ Math.\ Phys.\ \textbf{5}, 848--861 (1964).


\bibitem[Har]{Harpaz}
Y.~Harpaz,
``Little cube algebras and factorization homology,''
available at 
\url{https://www.math.univ-paris13.fr/~harpaz/lecture_notes.pdf}.


\bibitem[Hau19]{Haugseng}
R.~Haugseng,
``Algebras for enriched $\infty$-operads,''
to appear in Algebraic and Geometric Topology
[arXiv:1909.10042 [math.AT]].


\bibitem[Hau]{allegedly}
R.~Haugseng,
``An allegedly somewhat friendly introduction to $\infty$-operads,''
available at 
\url{https://runegha.folk.ntnu.no/iopd.pdf}.


\bibitem[Hin97]{Hinich1}
V.~Hinich, 
``Homological algebra of homotopy algebras,'' 
Comm.\ Algebra \textbf{25}, 3291--3323 (1997) 
[arXiv:q-alg/9702015]. Erratum: arXiv:math/0309453 [math.QA].


\bibitem[Hin15]{Hinich2}
V.~Hinich, 
``Rectification of algebras and modules,'' 
Doc.\ Math.\ \textbf{20}, 879--926 (2015) 
[arXiv:1311.4130 [math.QA]].


\bibitem[Hin16]{HinichDK}
V.~Hinich,
``Dwyer-Kan localization revisited,''
Homology Homotopy Appl.\ \textbf{18}, 27--48 (2016)
[arXiv:1311.4128 [math.QA]].


\bibitem[Hin24]{HinichDKnew}
V.~Hinich,
``Key lemma and universal localization,''
arXiv:2410.19431 [math.CT].


\bibitem[Hir03]{Hirschhorn}
P.~S.~Hirschhorn, 
{\it Model categories and their localizations}, 
Math.\ Surveys Monogr.\ \textbf{99}, 
Amer.\ Math.\ Soc., Providence, RI (2003).


\bibitem[Hov99]{Hovey}
M.~Hovey,
{\it Model categories}, 
Math.\ Surveys Monogr.\ \textbf{63}, 
Amer.\ Math.\ Soc., Providence, RI (1999).


\bibitem[KSW24]{Scheimbauer}
E.~Karlsson, C.~I.~Scheimbauer and T.~Walde,
``Assembly of constructible factorization algebras,''
arXiv:2403.19472 [math.AT].


\bibitem[Lur]{LurieHA}
J.~Lurie, 
{\it Higher algebra}, 
available at 
\url{https://www.math.ias.edu/~lurie/papers/HA.pdf}.


\bibitem[Min19]{Minguzzi}
E.~Minguzzi,
``Lorentzian causality theory,''
Living Rev.\ Relativ.\ \textbf{22}, 3 (2019).


\bibitem[NS17]{NikolausSagave}
T.~Nikolaus and S.~Sagave,
``Presentably symmetric monoidal $\infty$-categories are represented
by symmetric monoidal model categories,''
Algebr.\ Geom.\ Topol.\ \textbf{17}, 3189--3212 (2017)
[arXiv:1506.01475 [math.AT]].


\bibitem[ONe83]{ONeill}
B.~O'Neill, 
{\it Semi-Riemannian geometry}, 
Academic Press, New York (1983).


\bibitem[Seg04]{Segal}
G.~B.~Segal,
``The definition of conformal field theory,'' 
in: U.~Tillmann (ed.),
{\it Topology, Geometry and Quantum Field Theory},
London Math.\ Soc.\ Lecture Note Ser., Vol.\ \textbf{308}, 
Cambridge University Press, Cambridge (2004).


\bibitem[WY24]{WhiteYau}
D.~White and D.~Yau,
``Smith ideals of operadic algebras in monoidal model categories,''
Algebr.\ Geom.\ Topol.\ \textbf{24}, 341--392 (2024)
[arXiv:1703.05377 [math.AT]].


\bibitem[Wit88]{Witten}
E.~Witten, 
``Topological quantum field theory,''
Comm.\ Math.\ Phys.\ \textbf{117}, 353--386 (1988).


\bibitem[Yau20]{Yau}
D.~Yau,
{\it Homotopical quantum field theory},
World Scientific Publishing Company (2020)
[arXiv:1802.08101 [math-ph]].


\end{thebibliography}
\end{document}